\pgfplotsset{compat=1.16}
\tikzstyle{state}+=[minimum size = 6mm, inner sep=0,outer sep=1]
\colorlet{disabled}{lightgray}
\tikzstyle{state}=[draw,rectangle,inner sep=5pt,rounded corners=2pt]
\tikzstyle{action}=[font=\small,inner sep=0pt,outer sep=3pt]
\tikzstyle{actionnode}=[circle,draw=black,fill=black,minimum size=1mm,inner sep=0,outer sep=0]
\tikzstyle{actionedge}=[draw,-]
\tikzstyle{prob}=[font=\scriptsize,inner sep=0pt,outer sep=1pt]
\tikzstyle{probedge}=[draw,->]
\tikzstyle{directedge}=[draw,->]
\tikzset{chainarrow/.tip={Stealth[length=3pt]}}
\tikzset{>=chainarrow}
\newtheorem{theorem}{Theorem}[section]
\newtheorem{corollary}[theorem]{Corollary}
\newtheorem{lemma}[theorem]{Lemma}
\newtheorem{proposition}[theorem]{Proposition}
\theoremstyle{definition}
\newtheorem{definition}[theorem]{Definition}
\newtheorem{remark}[theorem]{Remark}
\newtheorem{example}[theorem]{Example}
\newtheorem{assumption}{Assumption}
\DeclarePairedDelimiter{\delimabs}{\lvert}{\rvert}
\DeclarePairedDelimiter{\delimcardinality}{\lvert}{\rvert}
\DeclarePairedDelimiter{\delimnorm}{\lVert}{\rVert}
\NewDocumentCommand{\abs}{sm}{\IfBooleanTF{#1}{\delimabs*{#2}}{\delimabs{#2}}}
\NewDocumentCommand{\cardinality}{sm}{\IfBooleanTF{#1}{\delimcardinality*{#2}}{\delimcardinality{#2}}}
\NewDocumentCommand{\norm}{sm}{\IfBooleanTF{#1}{\delimnorm*{#2}}{\delimnorm{#2}}}
\NewDocumentCommand{\powerset}{r()}{2^{#1}}
\newcommand{\unionSym}{\cup}
\newcommand{\unionBin}{\mathbin{\unionSym}}
\newcommand{\intersectionSym}{\cap}
\newcommand{\intersectionBin}{\mathbin{\intersectionSym}}
\newcommand{\UnionSym}{\bigcup}
\newcommand{\union}{\unionBin}
\newcommand{\intersection}{\intersectionBin}
\newcommand{\Union}{\UnionSym}
\newcommand{\Rationals}{\mathbb{Q}}
\newcommand{\Reals}{\mathbb{R}}
\newcommand{\RealsNonneg}{\Reals_{\geq 0}}
\newcommand{\RealsPositive}{\Reals_{> 0}}
\DeclareMathOperator{\support}{supp}
\NewDocumentCommand{\Distributions}{d()}{\IfNoValueTF{#1}{\mathcal{D}}{\mathcal{D}(#1)}}
\NewDocumentCommand{\Measures}{d()}{\IfNoValueTF{#1}{\Pi}{\Pi(#1)}}
\NewDocumentCommand{\integral}{d<> m m}{\IfNoValueTF{#1}{\int #2\,d#3}{\int_{#1} #2\,d#3}}
\NewDocumentCommand{\Expectation}{s d[]}{\IfNoValueTF{#2}{\mathbb{E}}{\mathbb{E}\IfBooleanTF{#1}{\left[#2\right]}{[#2]}}}
\NewDocumentCommand{\Probability}{s d[]}{\mathop{\mathrm{Pr}}\IfValueT{#2}{\IfBooleanTF{#1}{\left[#2\right]}{[#2]}}}
\newcommand{\Cyl}{\mathit{Cyl}}
\newcommand{\sched}{\mathfrak{S}}
\newcommand{\tsched}{\mathfrak{T}}
\newcommand{\MC}{\mathsf{M}}
\newcommand{\MDP}{\mathcal{M}}
\newcommand{\States}{S}
\newcommand{\initialstate}{{\hat{s}}}
\newcommand{\Actions}{A}
\NewDocumentCommand{\stateactions}{r()}{{\Actions}(#1)}
\NewDocumentCommand{\mctransitions}{d()}{\IfNoValueTF{#1}{\delta}{\delta(#1)}}
\NewDocumentCommand{\mdptransitions}{d()}{\IfNoValueTF{#1}{\Delta}{\Delta(#1)}}
\NewDocumentCommand{\sgtransitions}{d()}{\IfNoValueTF{#1}{\Delta}{\Delta(#1)}}
\NewDocumentCommand{\reward}{d()}{\IfNoValueTF{#1}{{r}}{{r}(#1)}}
\DeclareMathOperator{\CylOp}{Cyl}
\NewDocumentCommand{\cylinder}{d()}{\IfNoValueTF{#1}{{\CylOp}}{{\CylOp}(#1)}}
\newcommand{\infinitepath}{\rho}
\newcommand{\finitepath}{\varrho}
\NewDocumentCommand{\Infinitepaths}{d<>}{\IfNoValueTF{#1}{\mathsf{Paths}}{\mathsf{Paths}_{#1}}}
\NewDocumentCommand{\Finitepaths}{d<>}{\IfNoValueTF{#1}{\mathsf{FPaths}}{\mathsf{FPaths}_{#1}}}
\NewDocumentCommand{\FinitepathsFromTo}{d<> r() r()}{\Finitepaths<#1>^{{#2}\to{#3}}}
\newcommand{\strategy}{\pi}
\NewDocumentCommand{\Strategies}{d<>}{\IfNoValueTF{#1}{\Pi}{\Pi_{#1}}}
\NewDocumentCommand{\StrategiesM}{d<>}{\IfNoValueTF{#1}{\Pi}{\Pi_{#1}}^{\mathsf{M}}}
\NewDocumentCommand{\StrategiesMD}{d<>}{\IfNoValueTF{#1}{\Pi}{\Pi_{#1}}^{\mathsf{MD}}}
\newcommand{\last}[1]{last(#1)}
\DeclareMathOperator{\SccsOp}{SCC}
\DeclareMathOperator{\BsccsOp}{BSCC}
\DeclareMathOperator{\EcsOp}{EC}
\DeclareMathOperator{\MecsOp}{MEC}
\NewDocumentCommand{\Sccs}{r()}{\SccsOp(#1)}
\NewDocumentCommand{\Bsccs}{r()}{\BsccsOp(#1)}
\NewDocumentCommand{\Ecs}{d()}{\IfNoValueTF{#1}{\EcsOp}{\EcsOp(#1)}}
\NewDocumentCommand{\Mecs}{d()}{\IfNoValueTF{#1}{\MecsOp}{\MecsOp(#1)}}
\NewDocumentCommand{\ProbabilityMC}{s r<> d[]}{\mathsf{Pr}_{#2}\IfNoValueF{#3}{\IfBooleanTF{#1}{\!\left[#3\right]\!}{[#3]}}}
\NewDocumentCommand{\ProbabilityMDP}{s r<> r<> d[]}{\mathsf{Pr}_{#2}^{#3}\IfNoValueF{#4}{\IfBooleanTF{#1}{\!\left[#4\right]\!}{[#4]}}}
\NewDocumentCommand{\ProbabilitySG}{s r<> r<> d[]}{\mathsf{Pr}_{#2}^{#3}\IfNoValueF{#4}{\IfBooleanTF{#1}{\!\left[#4\right]\!}{[#4]}}}
\NewDocumentCommand{\ProbabilityMDPmax}{s r<> d[]}{\mathsf{Pr}_{#2}^{\max}\IfNoValueF{#3}{\IfBooleanTF{#1}{\!\left[#3\right]\!}{[#3]}}}
\NewDocumentCommand{\ProbabilityMDPsup}{s r<> d[]}{\mathsf{Pr}_{#2}^{\sup}\IfNoValueF{#3}{\IfBooleanTF{#1}{\!\left[#3\right]\!}{[#3]}}}
\NewDocumentCommand{\ExpectationMC}{s r<> d[]}{\mathbb{E}_{#2}\IfValueT{#3}{\IfBooleanTF{#1}{\!\left[#3\right]\!}{[#3]}}}
\NewDocumentCommand{\ExpectationMDP}{s r<> r<> d[]}{\mathbb{E}_{#2}^{#3}\IfValueT{#4}{\IfBooleanTF{#1}{\!\left[#4\right]\!}{[#4]}}}
\NewDocumentCommand{\ExpectationSG} {s r<> r<> d[]}{\mathbb{E}_{#2}^{#3}\IfValueT{#4}{\IfBooleanTF{#1}{\!\left[#4\right]\!}{[#4]}}}
\newcommand{\reach}{\lozenge}
\newcommand{\MPsup}{\overline{\mathit{MP}}}
\newcommand{\MPinf}{\underline{\mathit{MP}}}
\newcommand{\MP}{\mathit{MP}}
\newcommand{\rew}{r}
\newcommand{\cE}{E}
\DeclareMathOperator{\totalreward}{TR}
\DeclareMathOperator{\multreward}{MR}
\DeclareMathOperator{\ERisk}{ERisk}
\NewDocumentCommand{\ERiskSG}{r<> r<> r[]}{\ERisk_{#1}^{#2}(#3)}
\NewDocumentCommand{\pathreward}{}{{\overline{\multreward}_{\reward}}}
\NewDocumentCommand{\pathrewardinf}{}{{\underline{\multreward}_{\reward}}}
\NewDocumentCommand{\Pathpartitions}{m d()}{\IfValueTF{#2}{\text{Part}_{#1}^{#2}}{\text{Part}_{#1}}}
\newlist{enumerateindent}{enumerate}{1}
\setlist[enumerateindent]{label=\arabic*),leftmargin=*}
\begin{document}

\title{ Multiplicative Rewards in Markovian Models  }


\author{anonymous}

\author{\IEEEauthorblockN{Christel Baier}
\IEEEauthorblockA{TU Dresden }
\and
\IEEEauthorblockN{Krishnendu Chatterjee}
\IEEEauthorblockA{IST Austria }
\and
\IEEEauthorblockN{Tobias Meggendorfer}
\IEEEauthorblockA{Lancaster University Leipzig}
\and
\IEEEauthorblockN{Jakob Piribauer}
\IEEEauthorblockA{TU Dresden, 
Leipzig University}
}

\maketitle

\begin{abstract}
This paper studies the expected value of \emph{multiplicative rewards}, where rewards obtained in each step are \emph{multiplied} (instead of the usual addition), in Markov chains (MCs) and Markov decision processes (MDPs).
One of the key differences to additive rewards is that the expected value may diverge to $\infty$ not only  due to recurrent, but also due to transient states.

For MCs, computing the value  
is shown to be possible in polynomial time given an oracle for the comparison of succinctly represented integers (CSRI), which is only known to be solvable in polynomial time subject to number-theoretic conjectures.
Interestingly, distinguishing whether the value is $\infty$ or $0$ is at least as hard as CSRI, while determining if it is one of these two can be done in polynomial time.
In MDPs, the optimal value can be computed in polynomial space.
Further refined complexity results and results on the complexity of optimal schedulers are presented.
The techniques developed for MDPs additionally allow to solve the multiplicative variant of the stochastic shortest path problem.
Finally, for MCs and MDPs where an absorbing state is reached almost surely, all considered problems are solvable in polynomial time.

\end{abstract}

\section*{Acknowledgments}
{This work was partly funded by the ERC CoG 863818 (ForM-SMArt), the Austrian Science Fund (FWF) 10.55776/COE12, the DFG Grant 389792660 as part of TRR 248 (Foundations of Perspicuous Software Systems), the Cluster of Excellence EXC 2050/1 (CeTI, project ID 390696704, as part of Germany’s Excellence Strategy), and by the BMBF (Federal Ministry of Education and Research) in DAAD project 57616814 (SECAI, School of Embedded and Composite AI) as part of the program Konrad Zuse Schools of Excellence in Artificial Intelligence..}

\section{Introduction}

Markov chains are standard models for the analysis and verification of systems exhibiting \emph{probabilistic} behavior.  Markov decision processes (MDPs) extend Markov chains with \emph{non-determinism} and are widely used in various fields of science such as computer science \cite{DBLP:books/wi/Puterman94}, biology \cite{paulsson2004summing}, epidemiology \cite{G_mez_2010}, and chemistry \cite{gillespie1976general}, to name a few.
In computer science, MDPs play a crucial role in formal verification \cite{DBLP:books/daglib/0020348},  control theory, and artificial intelligence, e.g., as the underlying model for reinforcement learning \cite{kaelbling1996reinforcement}, among others.
In a nutshell, an MDP works as follows:
In each state of an MDP, a so-called \emph{scheduler} chooses from a set of available actions. Afterwards, the successor state is chosen randomly according to a probability distribution associated with the chosen action.
Once a scheduler is fixed, the process is purely probabilistic.

\paragraph*{Rewards}
To model quantitative aspects of a system such as runtime, energy consumption, or utility, Markov chains and MDPs can be equipped with reward structures.
In MDPs, the central arising problem is  finding a scheduler optimizing  the expected value of the obtained reward.
The most prominent reward mechanisms are accumulated/total (additive) rewards (rewards  obtained in each step are summed up), discounted accumulated rewards (rewards obtained after $n$ steps are discounted with factor $\lambda^n$ for $\lambda\in (0,1)$), and mean payoff (average reward accumulated per step).
Optimizing the expected values of these quantities has extensively been studied for decades (see, e.g.,~\cite{DBLP:books/wi/Puterman94}) and is doable in polynomial time.

\paragraph*{Multiplicative rewards}
This paper investigates \emph{multiplicative rewards} in Markov chains and MDPs. Instead of adding up rewards obtained in each step, the rewards are simply multiplied.
Typical situations in which rewards behave multiplicatively arise whenever the change of some quantity is proportional to the amount already obtained.
This is a common theme in  chemical and biological systems where, e.g., the current rate of a reaction or the change of the size of a population is proportional to the current concentration of a substance or the current population size. Also in financial settings, multiplicative rewards naturally occur. The return of an investment depends on the interest rate as well as the current size of the investment.

\begin{example}
Suppose a plant population decreases or increases yearly depending on the climatic conditions during the year. Favorable climatic conditions lead to a $30\%$ increase of the population while adverse conditions lead to a $25\%$ decrease. Further suppose both conditions are equally likely each year. We can model this with a simple Markov chain with multiplicative rewards as shown in \Cref{fig:plants}.
Starting with good climate, in each step there is a $0.5$ probability to move to the good state and $0.5$ probability to move to the bad state.
Along the resulting path, the rewards $1.3$ and $0.75$ (denoted in the states) are obtained and multiplied together.
In this case, the expected multiplicative reward over an infinite time horizon is $0$ due to the negative logarithmic mean payoff of
$
	0.5\cdot \log(1.3) + 0.5\cdot \log(0.75)= 0.5 \cdot \log(0.975)<0
$.
In particular, this means the plant population will become extinct.
\end{example}

\begin{figure}
	\centering
	\begin{tikzpicture}[auto,initial text=,xscale=0.9]
		\node[state,initial=left] at (0, -.6) (s1) {good, $1.3$};
		\node[actionnode] at (2, -.2) (a11) {};
		\node[state] at (5, -.6) (s2) {bad, $0.75$};
		\node[actionnode] at (3, -1) (b11) {};

		\path[->,probedge]
			(a11) edge[swap,bend left=10]  node[prob] {0.5} (s1)
			(a11) edge node[prob] {0.5} (s2)
			(b11) edge node[prob] {0.5} (s1)
			(b11) edge[swap,bend left=10] node[prob] {0.5} (s2)
			;
		\path[-,actionedge]
			(s1) edge [bend left]  (a11)
			(s2) edge [bend left] (b11)
		;
	\end{tikzpicture}
	\caption{
		A Markov chain  modelling a plant population.
	}
	\label{fig:plants}
\end{figure}
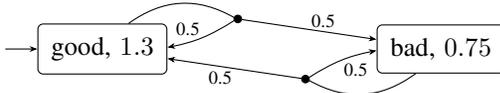

\paragraph*{Exponential utility functions}
A further situation in which multiplicative rewards naturally occur is the use of  \emph{exponential utility functions}, a type of utility function frequently used in economics and suitable to express \emph{risk-aversion}.
Applied to the maximization of additive rewards in MDPs, this works as follows. If the additive reward along an execution is $w$, the associated utility is $f(w)=1-2^{-\lambda w}$ for a fixed parameter $\lambda>0$.
This monotone function ensures that high rewards lead to a high utility. But the change in utility decreases for higher rewards. Low outcomes, however, have a strong effect on the utility and should be avoided.
A scheduler maximizing the expected exponential utility hence has to ensure high values of the additive rewards while keeping the probability of low outcomes small. The use of the exponential utility function
incentivizes a risk-averse behavior.

An exponential utility function converts the additive reward mechanism into a multiplicative one: If the rewards $w_1,w_2,\dots$ are collected on a run of the MDP leading to an additive reward of $w=w_1+w_2+\dots$, the exponential utility $f(w)$ can be obtained as $1-f(w_1)\cdot f(w_2) \cdot \dots$. So, maximizing the expected utility is the same as minimizing the expected multiplicative rewards after transforming the reward $r(s)$ of a state $s$ to $f(r(s))$.
Hence, the optimization of multiplicative rewards studied in this paper can be used for the risk-averse optimization of additive rewards.

Such an exponential utility function also lies at the heart of the \emph{entropic risk measure}  \cite{follmer2002convex}.
This notion has been widely studied in finance and operation research (see, e.g.,~\cite{follmer2011entropic,brandtner2018entropic}).
For Markovian models, more precisely Markov chains, MDPs, as well as stochastic games extending MDPs with a second player, the optimization of the entropic risk measure applied to additive non-negative rewards has been studied in \cite{DBLP:journals/iandc/BaierCMP24}.

\paragraph*{Divergence in recurrent and transient states}
One noteworthy difference between additive and multiplicative rewards  is as follows.
In Markov chains, the expected value of accumulated additive rewards can only diverge to $\infty$ due to the rewards collected at \emph{recurrent} states. 
These are states that are visited infinitely often with positive probability. 
In contrast, the expected value of multiplicative rewards may additionally be infinite due to rewards collected at \emph{transient} states, i.e.\ states that are almost surely only visited finitely often.

\begin{example}
Consider the  Markov chain $\MC$ depicted in \Cref{fig:divergence}.
From the initial state $s_1$, the Markov chain loops back to $s_1$ or moves to the absorbing state $s_2$ with probability $1/2$ each.
The state $s_1$ is hence visited twice in expectation and almost surely only finitely often. As soon as $s_2$ is reached, reward $1$ is obtained in every step not influencing the multiplicative reward.
The expected value of the total multiplicative reward $\mathrm{MR}$ is $\ExpectationMC<\MC, s_1>[\mathrm{MR}] = {\sum}_{i=1}^\infty (1 / 2)^{i} \cdot 3^i =\infty$, as the path visiting $s_1$ $i$ times has probability $(1/2)^i$ for each $i\geq 1$.
 In the case of additive rewards,
the contribution of a state to the expected total accumulated reward is the expected number of visits to the state times the reward of the state and hence always finite for transient states.
\end{example}

\begin{figure}
	\centering
	\begin{tikzpicture}[auto,initial text=]
		\node[state,initial=left] at (0, -.6) (s1) {$s_1$, $3$};
		\node[actionnode] at (2, -.6) (a11) {};
		\node[state] at (4, -.6) (s2) {$s_2$, $1$};


		\path[->,directedge]
			(s2) edge[loop right] (s2)
	;
		\path[->,probedge]
			(a11) edge  [bend left]  node[prob] {0.5} (s1)
			(a11) edge node[swap,prob] {0.5} (s2)
		;
		\path[-,actionedge]
			(s1) edge  (a11)
		;
	\end{tikzpicture}
	\caption{
		The Markov chain $\MC$ with multiplicative rewards diverging in a transient state.
	}
	\label{fig:divergence}
\end{figure}
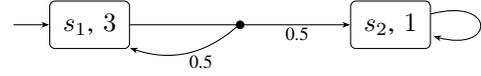

\paragraph*{Contrast to additive logarithmic rewards in MDPs}
Along a single path, the multiplicative reward and the sum of the logarithms of the rewards along the path behave, effectively (up to applying the exponential function to the result), the same.
Surprisingly, there is a difference between optimizing the expected multiplicative reward in MDPs to optimizing expected additive logarithmic rewards.
In fact, different schedulers are required in general to maximize the expected multiplicative reward and the expected additive logarithmic reward, respectively, as the following example shows.

\begin{example}
Consider the MDP $\MDP$ with reward function $\rew$ depicted in \Cref{fig:contrast_MDP}. 
A scheduler only can to choose in the initial state whether to move to $s_2$ using action $a$ or to move to $s_3$ with action $b$.
From $s_3$, the goal is reached directly, while from $s_2$, there is a probabilistic loop back to $s_2$.
When choosing $a$ to move to $s_2$, we get an expected additive reward with respect to the logarithmic rewards ${\log} \circ {\rew}$ of $\frac53\cdot \log(2) = \frac53$ and a multiplicative reward with respect to $\rew$ of
\[
	{\sum}_{i=1}^{\infty} 2^i \cdot \frac35 \cdot \big(\frac25\big)^{(i-1)} =\frac65 \cdot {\sum}_{j=0}^{\infty} \big(\frac45\big)^j = 6.
\]
Choosing $b$ to move to $s_3$ yields an additive  logarithmic reward of $\log(4) = 2$ and a multiplicative reward of $4$.
So, the additive logarithmic reward is maximized by action $b$, while the multiplicative reward is maximized by action $a$.
\end{example}

\begin{figure}
	\centering
	\begin{tikzpicture}[auto,initial text=]
		\node[state,initial=left] at (0, 0) (s1) {$s_1$, $1$};
		\node[state] at (2, .5) (s2) {$s_2$, $2$};
		\node[actionnode] at (3.5, .25) (a2) {};
		\node[state] at (2, -.5) (s3) {$s_3$, $4$};
		\node[state] at (5, 0) (s4) {$s_4$, $1$};

		\path[->,directedge]
			(s4) edge[loop right] (s4)
			(s1) edge node {$a$} (s2)
			(s1) edge node[swap] {$b$} (s3)
			(s3) edge (s4)
	;
		\path[->,probedge]
			(a2) edge  [bend right]  node[swap,prob] {2/5} (s2)
			(a2) edge node[prob] {3/5} (s4)
		;
		\path[-,actionedge]
			(s2) edge  (a2)
		;
	\end{tikzpicture}
	\caption{
		The MDP $\MDP$ illustrating the difference between optimizing multiplicative rewards and optimizing additive logarithmic rewards.
	}
	\label{fig:contrast_MDP}
\end{figure}
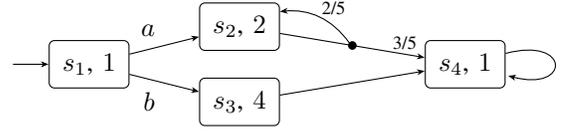

\subsection*{Contribution}
In this paper, we investigate the computation of the expected value of multiplicative rewards in Markov chains as well as the optimization of this expected  in MDPs.
For an infinite path $\infinitepath = \infinitepath_0 \infinitepath_1 \cdots$ in a Markov chain or MDP with states $\States$ and with reward function $\reward\colon \States\to \Reals_{>0}$, we define the (lower and upper) multiplicative rewards $\pathreward(\infinitepath)$ and  $\pathrewardinf(\infinitepath)$ as the lim sup and lim inf of $\prod_{k=0}^n \reward(\infinitepath_k)$ for $n\to \infty$.
In several cases, the problems to be solved in the process reduce to the comparison of succinctly represented  integers (CSRI).
Given two integers represented  as $\prod_{i=1}^k a_i^{b_i}$ with $k,a_1,\dots,a_k,b_1,\dots,b_k\in \mathbb{N}$, 
 CSRI asks to decide whether the first is larger than the second.
Potentially surprisingly, CSRI is not known to be solvable in polynomial time  \cite{DBLP:journals/toct/EtessamiSY14}, but lies in PSPACE.
Further, the problem is in P if one of the following two number-theoretic conjectures holds \cite{DBLP:journals/toct/EtessamiSY14}:
(1)~Baker's refinement \cite{Baker+1998+37+44} of the \emph{ABC conjecture} of  Masser and Oesterl\`e or
(2)~the \emph{Lang-Waldschmidt conjecture} \cite{lang1978elliptic}.

The results for Markov chains $\MC$ with rational reward function $\reward$ and initial state $\initialstate$ are as follows:
\begin{enumerateindent}
\item
The values $\ExpectationMC<\MC, \initialstate>[\pathreward] $ and $\ExpectationMC<\MC, \initialstate>[\pathrewardinf] $ can be computed in polynomial time using a CSRI-oracle.
Thus, the computation is possible in polynomial space. Further, if CSRI is in P -- so in particular if one of the two mentioned conjectures holds -- the computation is possible in polynomial time.

\item
Deciding whether the  values are finite or infinite is at least as hard as CSRI.

\item
If an absorbing state is reached almost surely in  $\MC$, then $\ExpectationMC<\MC, \initialstate>[\pathreward] = \ExpectationMC<\MC, \initialstate>[\pathrewardinf]$ and the value 
can be computed in polynomial time.
\end{enumerateindent}

For MDPs $\MDP$ with rational reward function and initial state $\initialstate$, we write $\ExpectationMDP<\MDP, \initialstate><\max>[\pathreward] $ for the supremum of the expected value of $\pathreward$ when ranging over all schedulers and likewise for $\pathrewardinf$. Our results show that this supremum is indeed a maximum.
We obtain the following results:
\begin{enumerateindent}[resume]
\item The values $\ExpectationMDP<\MDP, \initialstate><\max>[\pathreward]$  and  $\ExpectationMDP<\MDP, \initialstate><\max>[\pathrewardinf] $ can be computed in polynomial space.
\item If  an absorbing state is reached almost surely under any scheduler,  $\ExpectationMDP<\MDP, \initialstate><\max>[\pathreward] = \ExpectationMDP<\MDP, \initialstate><\max>[\pathrewardinf] $
and this value can be computed in polynomial time.
\item There is an MD-scheduler $\sched$ such that $\ExpectationMDP<\MDP, \initialstate><\max>[\pathreward] = \ExpectationMDP<\MDP, \initialstate><\sched>[\pathreward]$.
\item If  $\ExpectationMDP<\MDP, \initialstate><\max>[\pathrewardinf] < \infty$, there is an MD-scheduler $\sched$ such that $\ExpectationMDP<\MDP, \initialstate><\max>[\pathrewardinf] = \ExpectationMDP<\MDP, \initialstate><\sched>[\pathrewardinf]$.
\item If $\ExpectationMDP<\MDP, \initialstate><\max>[\pathrewardinf] = \infty$, an optimal scheduler $\sched$ may require infinite memory to obtain $\ExpectationMDP<\MDP, \initialstate><\sched>[\pathrewardinf] = \infty$. 
\item
The threshold problem, i.e., deciding $\ExpectationMDP<\MDP, \initialstate><\max>[\pathreward]\geq \vartheta$ (and $\ExpectationMDP<\MDP, \initialstate><\max>[\pathrewardinf]\geq \vartheta$) for a given $\theta \in \Rationals$, is in $\textrm{NP}^{\textrm{CSRI}}$ and so in NP if one of the two mentioned conjectures holds.
\end{enumerateindent}
Additionally, these results allow us to solve the \emph{multiplicative stochastic shortest path problem} in polynomial space. Analogously to the classical (additive) stochastic shortest path problem,
this problem asks for the optimal expected value of the multiplicative reward when only allowing schedulers that reach a given  absorbing target state, i.e.\ a state with only a self loop and reward $1$, almost surely.

\subsection*{Related Work}

As described above, exponential utility functions and the entropic risk measure applied to additive rewards implicitly require the treatment of multiplicative rewards.
In \cite{DBLP:journals/iandc/BaierCMP24}, the computation and optimization of the entropic risk measure in Markov chains, MDPs, and stochastic games with total additive rewards is studied.
Key differences to our work arise for the following two reasons:
Firstly, \cite{DBLP:journals/iandc/BaierCMP24} considers non-negative additive rewards that are turned into multiplicative rewards from the interval $[0,1]$ by the used exponential utility function.
In contrast, we allow arbitrary positive rewards not bounded by $1$.
Secondly, starting with rational additive rewards, the used utility function transforms rewards into algebraic or even transcendental numbers.
This leads to computational challenges addressed in \cite{DBLP:journals/iandc/BaierCMP24}.
In this paper, we consider rational multiplicative rewards for all algorithmic problems.
Further work on entropic risk and exponential utility functions in MDPs in the finite horizon setting \cite{howard1972risk}, applied to discounted rewards \cite{jaquette1976utility}, and on infinite state MDPs \cite{di1999risk} focused on optimality equations and general convergence results of value iteration, but did not address the \emph{algorithmic} challenges for finite-state MDPs.

Further, matrix multiplication games and the subclass of entropy games \cite{DBLP:conf/stacs/AsarinCDDHK16,DBLP:journals/mst/AkianGGG19,DBLP:conf/icalp/AllamigeonGKS22}
are a class of  games using a multiplicative reward mechanism: Two players alternatingly choose matrices from a finite set. One player wants to maximize the growth rate of the product of these matrices while the other player tries to minimize this rate. So, conceptually these games can be seen as using a ``multiplicative mean payoff'' as objective, in contrast to the multiplicative total reward we use.
Matrix multiplication games also allow for an  interpretation as stochastic mean-payoff games (see \cite{DBLP:journals/mst/AkianGGG19}).

In \cite{DBLP:journals/mor/Rothblum84} the limit behaviour of finite-horizon / $n$-step rewards (and the limit growth / decline rate thereof) is considered.
In our context, this relates to the expected multiplicative reward after $n$ steps (discussed as $\multreward^n_{\reward}$ in \cref{rem:infinitevszero}), which surprisingly is fundamentally different from our objective.
Secondly, they focus on a mathematical discussion (e.g.\ optimality criterion etc.) and not on algorithmic aspects, which is the focus of our work.

For the technical treatment of recurrent states, we partly rely on results from \cite{BBDGS2018}.
In \cite{BBDGS2018}, a characterization of \emph{end components} (strongly connected sub-MDPs) regarding the divergence behavior of additive rewards is provided.
In order to apply these results, we  apply the logarithm to our multiplicative rewards posing additional computational challenges.
Further, we adapt a construction from \cite{BBDGS2018} to remove end components without diverging behavior.
For the identification of these components, however, the results of \cite{BBDGS2018} are not applicable in our setting and we develop new techniques to this end.

\section{Preliminaries}
\label{sec:prelim}

A \emph{Markov decision process} (MDP) is a tuple $\mathcal{M} = (S,A,\mdptransitions)$
where $S$ is a finite set of states,
$A$ a finite set of actions, and
$\mdptransitions \colon S \times A \times S \to [0,1]$  the
transition probability function.
We require that
$\sum_{t\in S}\mdptransitions(s,\alpha,t) \in \{0,1\}$
for all $(s,\alpha)\in S\times A$.
We say that action $\alpha$ is \emph{enabled} in state $s$ iff $\sum_{t\in S}\mdptransitions(s,\alpha,t) =1$ and denote the set of all actions that are enabled in state $s$ by $A(s)$. We further require that $A(s) \not=\emptyset$ for all $s\in S$.
If for a state $s$ and all actions $\alpha\in A(s)$, we have $\mdptransitions(s,\alpha,s)=1$, we say that $s$ is \emph{absorbing}.

The paths of $\MDP$ are finite or infinite sequences $s_0 \, \alpha_0 \, s_1 \, \alpha_1  \ldots$ where states and actions alternate such that $\mdptransitions(s_i,\alpha_i,s_{i+1}) >0$ for all $i\geq0$.
For $\finitepath = s_0 \, \alpha_0 \, s_1 \, \alpha_1 \,  \ldots \alpha_{k-1} \, s_k$, the probability of $\finitepath$ is
$\mdptransitions(\finitepath) = \mdptransitions(s_0,\alpha_0,s_1) \cdot \ldots \cdot \mdptransitions(s_{k-1},\alpha_{k-1},s_k)$ and $\last{\finitepath}=s_k$ is the last state of $\finitepath$.
We write $\finitepath_i = s_i$ for the $i$-th state of $\finitepath$.
For two (finite) paths $\finitepath$ and $\infinitepath$ with $\finitepath$ ending in the state that $\infinitepath$ begins with, we write $\finitepath \circ \infinitepath$ for their concatenation.

We will equip MDPs with a reward function $\rew\colon S \to \Reals$.
Moreover, in the following we often consider the logarithmic rewards ${\log} \circ \reward$.
To slightly ease notation, we write $\log(\reward)$.

For algorithmic problems, we require all transition probabilities and rewards to be rational.
The \emph{size} of $\MDP$ is the sum of the number of states plus the total sum of the encoding lengths in binary of the non-zero probability values $\mdptransitions(s,\alpha,s')$ as fractions of (co-prime) integers as well as the encoding length in binary of the rewards if a reward function is used.

A \emph{Markov chain} is an MDP in which the set of actions is a singleton. In this case, we can drop the set of actions and consider a Markov chain as a tuple $\MDP=(S,\mctransitions)$ where $\mctransitions$ now is a function from $S\times S$ to $[0,1]$.

An \emph{end component (EC)} of $\MDP$ is a strongly connected sub-MDP formalized by a subset $S^\prime\subseteq S$ of states and a non-empty subset $\mathfrak{A}(s)\subseteq A(s)$  for each state $s\in S^\prime$ such that for each $s\in S^\prime$, $t\in S$ and $\alpha\in \mathfrak{A}(s)$ with $\mdptransitions(s,\alpha,t)>0$, we have $t\in S^\prime$ and such that in the resulting sub-MDP all states are reachable from each other. Maximal end components (MECs) are end components that are not contained in any other end component. The set of all MECs is computable in polynomial time (see, e.g., \cite{DBLP:books/daglib/0020348}).
A \emph{bottom strongly connected component (BSCC)} is an end component with exactly one action per state, i.e.\ an end component that is a Markov chain.


\paragraph*{Schedulers}
A \emph{ scheduler} (also called \emph{policy}, \emph{strategy}) for $\MDP$ is a function $\sched$ that assigns to each finite path $\finitepath$ a probability distribution over $\Actions(\last{\finitepath})$.
If  $\sched(\finitepath)=\sched(\finitepath^\prime)$ for all finite paths $\finitepath$ and $\finitepath^\prime$ with $\last{\finitepath}=\last{\finitepath^\prime}$, we say that $\sched$ is \emph{memoryless}.
In this case, we also view  schedulers as functions mapping states $s\in S$ to probability distributions over $\Actions(s)$.
A scheduler $\sched$ is called \emph{deterministic} if $\sched(\finitepath)$ is a Dirac distribution for each finite path $\finitepath$, in which case we also view the scheduler as a mapping to actions in $A(\last{\finitepath})$.
We write MD-scheduler for memoryless deterministic scheduler.

\paragraph*{Probability measure}
We write $\ProbabilityMDP<\MDP, s><\sched>$ to denote the probability measure induced by a scheduler $\sched$ and a state $s$ of an MDP $\MDP$.
It is defined on the $\sigma$-algebra generated by the cylinder sets $\Cyl(\finitepath)$ of all infinite extensions of a finite path  $\finitepath = s_0 \, \alpha_0 \, s_1 \, \alpha_1 \,  \ldots \alpha_{k-1} \, s_k$, by assigning to $\Cyl(\finitepath)$ the probability that $\finitepath$ is realized under $\sched$.
This is $\ProbabilityMDP<\MDP, s><\sched>[\Cyl(\finitepath)] = \prod_{i=0}^{k-1} \sched(s_0\, \alpha_0 \dots \, s_i )(\alpha_i) \cdot \mdptransitions(s_i,\alpha_i,s_{i+1})$ if $s_0=s$ and $0$ otherwise.
This can be extended to a unique probability measure on the mentioned $\sigma$-algebra.
For details, see \cite{DBLP:books/daglib/0020348}.
For a random variable $X$, i.e.\ a measurable function defined on infinite paths in $\MDP$, we denote the expected value of $X$ under a scheduler $\sched$ and state $s$ by $\ExpectationMDP<\MDP, s><\sched>[X]$.
We define
$\ExpectationMDP<\MDP, s><\min>[X] = \inf_{\sched} \ExpectationMDP<\MDP, s><\sched>[X]$
and
$\ExpectationMDP<\MDP, s><\max>[X] = \sup_{\sched} \ExpectationMDP<\MDP, s><\sched>[X]$.

\subsection*{Objectives}
We recall standard objectives for Markov models.
All appearing sets / functions are measurable, see e.g.\ \cite{DBLP:books/wi/Puterman94}.

\paragraph*{Reachability}
For a set of states $T$, $\lozenge T = \{\infinitepath \mid \exists i. \infinitepath_i \in T\}$ contains all paths that eventually reach a state in $T$.
For a single state $t$, we also write $\lozenge t$ instead of $\lozenge \{t\}$.

\paragraph*{Total Reward}
For a reward function $\reward \colon \States \to \Reals$, the total reward of a path $\infinitepath$ is defined as $\totalreward(\infinitepath) = \liminf_{n \to \infty} \sum_{i = 0}^n \reward(\infinitepath_i)$.
(In general, the limit might not be defined and we arbitrarily choose the infimum.)

\paragraph*{Mean Payoff}
For a reward function $\reward \colon \States \to \Reals$, we define the two versions of the mean payoff of a path $\infinitepath$ as
\begin{align*}
	\MPsup_{\reward}(\infinitepath) & = {\limsup}_{n\to \infty} \tfrac{1}{n+1} {\sum}_{i=0}^n \reward(\infinitepath_i) \quad \text{and} \\
	\MPinf_{\reward}(\infinitepath) & = {\liminf}_{n\to \infty} \tfrac{1}{n+1} {\sum}_{i=0}^n \reward(\infinitepath_i).
\end{align*}
In finite Markov chains, almost all paths $\infinitepath$ satisfy $\MPsup_{\reward}(\infinitepath) = \MPinf_{\reward}(\infinitepath)$.
Hence, we simply write $\MP_{\reward}$ instead of $\MPsup_{\reward}$ or $\MPinf_{\reward}$ when the difference does not matter.
In a strongly connected Markov chain $\MC$, we additionally have $\ExpectationMC<\MC, \initialstate>[\MP_{\reward}] = \ExpectationMC<\MC, \initialstate'>[\MP_{\reward}]$ for any pair of states $\initialstate, \initialstate' \in \States$ (e.g., \cite[Chp.~9]{DBLP:books/wi/Puterman94}).

\subsection*{Comparison of Succinctly Represented Integers (CSRI)}

The \emph{comparison of succinctly represented integers (CSRI)} problem will play an important role:
Given natural numbers $a_1,\dots,a_n, b_1,\dots,b_n,c_1,\dots,c_m,d_1,\dots,d_m\in \mathbb{N}$ (in binary), decide whether
$
a_1^{b_1}\cdot \dots \cdot a_n^{b_n} > c_1^{d_1}\cdot \dots \cdot c_m^{d_m}
$.
This problem is not known to be solvable in polynomial time  \cite{DBLP:journals/toct/EtessamiSY14}.
However, \cite{DBLP:journals/toct/EtessamiSY14} shows that the problem is in P if either
	(i)~Baker's refinement \cite{Baker+1998+37+44} of the \emph{ABC conjecture} of Masser and Oesterl\`e, or
	(ii)~the Lang-Waldschmidt conjecture \cite{lang1978elliptic} hold.
	
Unconditionally, an upper bound for CSRI is $\exists \Reals$ (see \Cref{prop:CSRI} in \Cref{app:CSRI}), the complexity class of the existential theory of the reals, which is contained in PSPACE~\cite{DBLP:conf/stoc/Canny88}.
Interestingly, deciding \emph{equality} of succinctly represented integers (ESRI), i.e.\ asking whether $a_1^{b_1}\cdot \dots \cdot a_n^{b_n} = c_1^{d_1}\cdot \dots \cdot c_m^{d_m}$ for an instance as above, is unconditionally in P \cite{DBLP:journals/toct/EtessamiSY14}.

Several of our considered problems involve such comparisons.
We will hence use CSRI to also denote the class of all problems that can be reduced to CSRI in polynomial time.
Further, we will write $\text{P}^\text{CSRI}$ to denote the class of problems decidable in polynomial time with an oracle for CSRI and analogously $\text{NP}^{\text{CSRI}}$ for the non-deterministic polynomial time variant.
Note that if one of the conjectures (i) or (ii) mentioned above holds,  $\text{CSRI}=\text{P}^\text{CSRI} = \text{P}$.
Unconditionally, we have $\text{CSRI}\subseteq \exists \mathbb{R}$ and $\text{P}^\text{CSRI} \subseteq \text{NP}^\text{CSRI} \subseteq \text{PSPACE}$.

\section{Markov Chains}
\label{sec:MC}

Let $\MC = (\States, \mctransitions)$ be a Markov chain and $\reward \colon \States \to \Reals_{>0}$ a reward function.
We exclude reward zero here for simplicity; states with reward zero can be collapsed to one absorbing state with reward $0$ as reaching such a state immediately leads to a multiplicative reward of $0$.
The presence of such an absorbing state with reward $0$ does not influence any of the results as also explained later in \Cref{rem:zero}.

For a \emph{finite} path $\finitepath = s_0 s_1 \dots s_{n+1}$, we write $\multreward_{\reward}(\finitepath) = \prod_{k=0}^n \reward(\finitepath_k)$ for its \emph{multiplicative reward} (note that rewards are obtained when leaving a state).
For an infinite path $\infinitepath$, we define the (lower and upper) multiplicative reward as
\begin{align*}
	\pathreward(\infinitepath) & \coloneqq {\limsup}_{n\to\infty} {\prod}_{k \leq n} \reward(\infinitepath_k) \quad \text{ and } \\
	\pathrewardinf(\infinitepath) & \coloneqq {\liminf}_{n\to\infty} {\prod}_{k \leq n} \reward(\infinitepath_k).
\end{align*}
Note that these values need not equal each other.
More so, we later see that they are different in many cases.
We are interested in the expected value of these random variables when starting from a given state $\initialstate \in \States$, i.e.\ we want to determine $\ExpectationMC<\MC, \initialstate>[\pathreward]$ and $\ExpectationMC<\MC, \initialstate>[\pathrewardinf]$, which we call the \emph{values (of the state $\initialstate$)}.
The results of this section can be summarized as follows:
\begin{theorem}[Main result for MC]
\label{thm:mainresultMC}
Let $\MC=(\States,\mctransitions)$ a Markov chain, $\initialstate \in \States$ an initial state, and $\reward\colon \States\to \Rationals_{>0}$ a (rational) reward function.
Then:
\begin{enumerateindent}
	\item
	The values $\ExpectationMC<\MC, \initialstate>[\pathreward] $ and $\ExpectationMC<\MC, \initialstate>[\pathrewardinf]$ can be computed in polynomial time with an oracle for CSRI, i.e.\ the computation is possible in polynomial space and, if CSRI is in P, the computation is possible in polynomial time.
		\item
	If an absorbing state is reached almost surely in  $\MC$, then $\ExpectationMC<\MC, \initialstate>[\pathreward] = \ExpectationMC<\MC, \initialstate>[\pathrewardinf]$ can be computed in polynomial time (independent of CSRI).
\end{enumerateindent}
\end{theorem}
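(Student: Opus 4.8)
The plan is to pass to additive logarithmic rewards $\log(\reward)$ and to analyse each path according to which bottom strongly connected component (BSCC) it eventually enters, together with the multiplicative reward of the transient prefix leading there. Since almost every path of a finite Markov chain ends up in some BSCC, $\ExpectationMC<\MC,\initialstate>[\pathreward]$ decomposes into a sum of contributions over the BSCCs reachable from $\initialstate$, and inside a BSCC $B$ the partial products $\prod_{k\le n}\reward(\infinitepath_k)$ equal $\exp$ of the partial sums of $\log(\reward)$, so the decisive quantity is the mean payoff $\lambda_B=\sum_{s\in B}\xi_B(s)\log(\reward(s))$ for the (polynomial-time computable, rational) stationary distribution $\xi_B$ of $B$. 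I would classify each BSCC into four types. First, $B$ is \emph{potential} if there is $h_B\colon B\to\Reals$ with $\log(\reward(t))=h_B(t')-h_B(t)$ on every edge $t\to t'$ of $B$, equivalently every directed cycle of $B$ has reward product $1$; this is checked in polynomial time by propagating $h_B$ along a spanning tree (so each $\exp(h_B(t))$ is a product of at most $\cardinality{B}$ rewards, hence a polynomially sized rational) and testing the remaining edges by ordinary rational comparisons. For potential $B$ the partial log-sums telescope to $h_B(\infinitepath_{n+1})-h_B(\infinitepath_0)$, so $\pathreward$ and $\pathrewardinf$ are finite and, from a state $s\in B$, equal the deterministic values $\max_{t\in B}\exp(h_B(t)-h_B(s))$ and $\min_{t\in B}\exp(h_B(t)-h_B(s))$ (call such $B$ \emph{bounded}; note $\lambda_B=0$). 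If $B$ is not potential, the increments of the partial log-sums over successive excursions from a fixed state of $B$ form an i.i.d.\ sequence whose common mean has the same sign as $\lambda_B$ and which is not almost surely $0$; hence for $\lambda_B>0$ the partial products tend to $\infty$ (so $\pathreward=\pathrewardinf=\infty$ on paths ending in $B$), for $\lambda_B<0$ they tend to $0$, and for $\lambda_B=0$ this mean-zero non-degenerate walk is recurrent (Chung--Fuchs), forcing the partial log-sums to have $\limsup=+\infty$ and $\liminf=-\infty$, so $\pathreward=\infty$ and $\pathrewardinf=0$ on such paths (call such $B$ \emph{oscillating}). Deciding the sign of $\lambda_B$ is the single place CSRI enters: writing $\lambda_B=\tfrac1N\log\prod_{s\in B}\reward(s)^{N\xi_B(s)}$ for a common denominator $N$ of the $\xi_B(s)$ and clearing denominators of the rewards turns ``$\lambda_B>0$'' into one CSRI query and ``$\lambda_B=0$'' into an ESRI query (which is in P) on integers of polynomial size, over polynomially many BSCCs; everything else so far is CSRI-free.

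\textbf{Deciding finiteness and computing the value.}
I would next decide whether $\ExpectationMC<\MC,\initialstate>[\pathreward]=\infty$. This holds exactly if $\initialstate$ can reach a BSCC with $\lambda_B>0$, or an oscillating BSCC, or --- the transient-divergence phenomenon of \Cref{fig:divergence} --- a transient (i.e.\ non-BSCC) strongly connected component $C$ whose reward-weighted transition matrix $R_C$, with $R_C(s,t)=\reward(s)\cdot\mctransitions(s,t)$, has spectral radius at least $1$ and from which a bounded BSCC is reachable through transient states; the last test is polynomial-time (comparing the spectral radius of a non-negative rational matrix to $1$, e.g.\ via the non-singular $M$-matrix characterisation, i.e.\ non-negativity of $(I-R_C)^{-1}$) and again uses no CSRI. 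If none of these occurs, the value is finite and equals $V(\initialstate)$, where $V$ is the least non-negative solution of $V(s)=\reward(s)\sum_t\mctransitions(s,t)V(t)$ over transient $s$, with boundary values $V(s)=0$ on BSCCs with $\lambda_B<0$ and $V(s)=\max_{t\in B}\exp(h_B(t)-h_B(s))$ on bounded BSCCs $B$; after discarding the transient states from which only $\lambda_B<0$-BSCCs are reachable (on which $V=0$), the reduced system's coefficient matrix has spectral radius below $1$ by the exclusion just made, hence is invertible, and Gaussian elimination over $\Rationals$ yields a solution of polynomial bit-length. The computation of $\ExpectationMC<\MC,\initialstate>[\pathrewardinf]$ is identical, except the oscillating-BSCC alternative is removed from the $\infty$-test, oscillating BSCCs receive boundary value $0$, and bounded BSCCs receive $\min_{t\in B}\exp(h_B(t)-h_B(s))$; this establishes part~1.

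\textbf{The absorbing case.}
For part~2, assume an absorbing state is reached almost surely from $\initialstate$. Then every reachable BSCC is a singleton $\{s_*\}$ whose only cycle is the self-loop with reward $\reward(s_*)$; hence it is potential (bounded) with $V(s_*)=1$ when $\reward(s_*)=1$, a $\lambda_B<0$-BSCC of value $0$ when $\reward(s_*)<1$, and a $\lambda_B>0$-BSCC of value $\infty$ when $\reward(s_*)>1$. The only use of CSRI --- the sign of $\lambda_B$ --- therefore degenerates to comparing the single rational $\reward(s_*)$ with $1$, so the entire procedure above runs in polynomial time. Finally, with no oscillating BSCC present and the partial product eventually constant (or tending monotonically to $0$ or $\infty$) along any path reaching an absorbing state, $\pathreward=\pathrewardinf$ holds pointwise almost everywhere, whence $\ExpectationMC<\MC,\initialstate>[\pathreward]=\ExpectationMC<\MC,\initialstate>[\pathrewardinf]$.

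\textbf{Main obstacle.}
The difficulty lies exactly in the two effects that do not occur for additive rewards. First, BSCCs with $\lambda_B=0$ must be separated into the telescoping ``potential'' case (a finite value depending on the entry point) and the ``oscillating'' case (which forces $\pathreward=\infty$), and the single CSRI query for the sign of $\lambda_B$ must be isolated so that the rest of the algorithm stays in deterministic polynomial time --- which is also precisely why part~2 needs no CSRI. Second, the expected multiplicative reward may already diverge on the transient part, so before the defining linear system can legitimately be solved one has to detect this through the spectral-radius-versus-$1$ analysis of the reward-weighted sub-chain; solving the system blindly can return a meaningless, even negative, value, as the chain of \Cref{fig:divergence} already shows.
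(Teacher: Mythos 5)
Your proposal is correct, and most of it mirrors the paper's own proof: the classification of BSCCs by the sign of the logarithmic mean payoff (computed from the stationary distribution and compared to zero via a single CSRI/ESRI query per BSCC), the further split of the zero-mean-payoff case into the all-1-cycles case (your ``potential'' BSCCs, with $h_B$ encoding the paper's unique path rewards $R(s,t)=\exp(h_B(t)-h_B(s))$) versus the oscillating case, the resulting BSCC values $\max_t R(\initialstate,t)$ and $\min_t R(\initialstate,t)$, the linear system over the transient states, and the observation that in the absorbing case all reachable BSCCs are singletons so the only CSRI query degenerates to a rational comparison. Where you genuinely depart from the paper is in detecting transient divergence. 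The paper proves that the value is finite if and only if its system (\cref{eq:transient}) admits a \emph{non-negative} solution, via a value-iteration/least-fixed-point argument, and then derives uniqueness; no a priori finiteness test is performed. You instead give an explicit structural criterion---some transient SCC $C$ reachable from $\initialstate$ and able to reach a positive-value BSCC has reward-weighted matrix $R_C$ with spectral radius at least $1$---decided in polynomial time via the M-matrix characterisation, after which the reduced coefficient matrix provably has spectral radius below $1$ and Gaussian elimination applies. This is sound: the value is the Neumann-type series $\sum_n R^n c$ with $c\geq 0$ supported on predecessors of positive-value BSCCs, and for a non-negative matrix this converges exactly when every SCC on a relevant path has spectral radius below one. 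Your route buys a direct explanation of \emph{where} transient divergence originates (cf.\ \Cref{fig:divergence}) and avoids reasoning about a possibly infeasible system; the paper's fixed-point route avoids spectral-radius machinery and yields the least-solution and uniqueness statements it reuses for MDPs. Two small wording points, neither a gap: for $\pathrewardinf$ the states to discard before solving the reduced system are those from which only value-zero BSCCs are reachable, which includes the oscillating ones and not only those with $\lambda_B<0$; and the non-degeneracy of the excursion increments in a non-potential zero-mean BSCC rests on the (easy) fact that if every first-return excursion had logarithmic reward zero then every cycle would, which your argument uses implicitly and correctly.
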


\begin{remark}
Interestingly, the part for which we cannot show an unconditional polynomial time upper bound is deciding whether the value of a state in a BSCC is $0$ or $\infty$ when we already know that it is one of the two.
In fact, we show that this problem is as hard as CSRI (\Cref{prop:hardnessMP}).
Whether the value of a state in a BSCC is either strictly between $0$ and $\infty$ or is one of the two can indeed be decided in polynomial time (relating to ESRI).
\end{remark}

To prove these results, we first focus on strongly connected Markov chains.
For these, we provide necessary and sufficient conditions for the expected multiplicative reward to be $0$ or $\infty$ and show how these conditions can be checked.
Then, we can assign values to the states in BSCCs of a general Markov chain.
For the transient states, we provide a linear system of equations such that the values of the states $s\in \States$ form the unique solution to the system if these values are finite and such that the system has no non-negative solution otherwise.

All proofs for this section can be found in \Cref{appendix:proofs}.

\subsection{Strongly Connected Markov Chains} \label{sec:mc:scc}

The first main question in strongly connected Markov chains is whether the expected multiplicative reward is $0$, $\infty$, or a finite (positive) value.

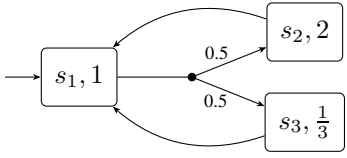
\begin{figure}
	\centering
	\begin{tikzpicture}[auto,initial text=]
		\node[state,initial=left] at (1, -.6) (s1) {$s_1, 1$\strut};
		\node[actionnode] at (2.5, -.6) (a11) {};
		\node[state] at (4, 0) (s2) {$s_2, 2$\strut};
		\node[state] at (4, -1.2) (s3) {$s_3, \tfrac13$\strut};


		\path[->,directedge]
			(s2) edge[bend right] (s1)
			(s3) edge[bend left] (s1)
		;
		\path[->,probedge]
			(a11) edge node[prob] {0.5} (s2)
			(a11) edge node[swap,prob] {0.5} (s3)
		;
		\path[-,actionedge]
			(s1) edge  (a11)
		;
	\end{tikzpicture}
	\caption{
		The Markov chain  $\MC$ used in Remark \ref{rem:infinitevszero} with state names and rewards depicted in the states.
	}
	\label{fig:infinitevszero}
\end{figure}

\begin{remark}
\label{rem:infinitevszero}
We point out a possibly counter-intuitive phenomenon in this context.
Let $\multreward_{\reward}^n(\infinitepath) = \prod_{k \leq n} \reward(\infinitepath_k)$ the multiplicative reward of $\infinitepath$ accumulated in the first $n$ steps.
It is possible that $\ExpectationMC<\MC, \initialstate>[\multreward_{\reward}^n] \to \infty$ for $n \to \infty$ while at the same time $\ExpectationMC<\MC, \initialstate>[\pathreward]=0$.
Consider the Markov chain $\MC$ in \Cref{fig:infinitevszero}.
The expected multiplicative reward along the simple cycles from $s_1$ to $s_1$ is $\frac12 \cdot 2 +\frac12\cdot \frac13 =\frac76 > 1$.
As the multiplicative reward received during one loop from $s_1$ to $s_1$ is independent from all other loops, the expected multiplicative reward after $n$ loops, i.e.\ after $2n$ steps, is $\ExpectationMC<\MC, s_1>[\multreward_{\reward}^{2n}] = \ExpectationMC<\MC, s_1>[\multreward_{\reward}^2]^n = \left(\frac76\right)^n \to \infty$.
In contrast, almost all infinite paths have a multiplicative reward $\pathreward$ of $0$.
The intuitive reason is that along a path the factors $2$ and $\frac13$ occur equally frequently in the long-run almost surely.
Observe that when considering the expectation of $\multreward$, we first take the limit and then compute the expectation, whereas the above considers the limit of the expectation.
Formally, this claim follows from \cref{lem:meanpayoffMC} below.
\end{remark}

\subsubsection{Characterizing Values}
Let $\MC=(\States,\mctransitions)$ be a strongly connected Markov chain with a reward function $\reward\colon\States\to \Reals_{>0}$.
We start by establishing necessary and sufficient conditions for the expected value of $\pathreward$ and $\pathrewardinf$ to be $0$ or $\infty$; computational issues will be discussed later.
We consider the \emph{additive} mean payoff with respect to the reward function $\log(\reward)$, which allows us to adapt some of the existing theory into our setting.
We later discuss the algorithmic approach to deal with these logarithms.
For additive rewards, it is well-known that almost all paths in a strongly connected Markov chain achieve the same \emph{mean payoff}.
\begin{restatable}{lemma}{lemmeanpayoffMC} \label{lem:meanpayoffMC}
	Let $\MC$ and $\reward$ be as above and $\initialstate \in \States$ any state.
	Then,
	1)~if $\ExpectationMC<\MC, \initialstate>[\MP_{\log(\reward)}] < 0$, then $\ExpectationMC<\MC, \initialstate>[\pathreward] = 0$ and hence $\ExpectationMC<\MC, \initialstate>[\pathrewardinf] = 0$, and
	2)~if $\ExpectationMC<\MC, \initialstate>[\MP_{\log(\reward)}] > 0$, then $\ExpectationMC<\MC, \initialstate>[\pathrewardinf] = \infty$ and hence $\ExpectationMC<\MC, \initialstate>[\pathreward] = \infty$.
\end{restatable}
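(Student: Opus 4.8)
The plan is to pass from the multiplicative quantities to additive ones via the logarithm and then invoke the ergodic behaviour of finite strongly connected Markov chains. For any infinite path $\infinitepath$ and any $n \geq 0$ we have $\log \multreward_{\reward}^n(\infinitepath) = \sum_{k \leq n} \log\reward(\infinitepath_k)$, so the behaviour of the partial products $\multreward_{\reward}^n(\infinitepath)$ is controlled by the partial sums of the reward function $\log(\reward)$, which is bounded since $\States$ is finite. Hence the mean-payoff theory recalled in \Cref{sec:prelim} applies to $\log(\reward)$: for almost every path $\infinitepath$ the limit $\lim_{n\to\infty}\tfrac{1}{n+1}\sum_{k \leq n}\log\reward(\infinitepath_k)$ exists and equals the constant $m \coloneqq \ExpectationMC<\MC, \initialstate>[\MP_{\log(\reward)}]$, independently of the initial state.

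For the first claim, suppose $m < 0$. Then for almost every $\infinitepath$ there is an index $N$ with $\tfrac{1}{n+1}\sum_{k \leq n}\log\reward(\infinitepath_k) \leq \tfrac{m}{2} < 0$ for all $n \geq N$, so $\log\multreward_{\reward}^n(\infinitepath) \leq (n+1)\tfrac{m}{2} \to -\infty$ and therefore $\multreward_{\reward}^n(\infinitepath) \to 0$ as $n \to \infty$. Consequently $\pathreward(\infinitepath) = \limsup_n \multreward_{\reward}^n(\infinitepath) = 0$ almost surely, and hence also $\pathrewardinf(\infinitepath) = 0$ almost surely. Since $\pathreward$ is a non-negative random variable that vanishes almost surely, we conclude $\ExpectationMC<\MC, \initialstate>[\pathreward] = \ExpectationMC<\MC, \initialstate>[\pathrewardinf] = 0$.

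For the second claim, suppose $m > 0$. Symmetrically, for almost every $\infinitepath$ we eventually have $\log\multreward_{\reward}^n(\infinitepath) \geq (n+1)\tfrac{m}{2} \to +\infty$, hence $\multreward_{\reward}^n(\infinitepath) \to +\infty$ as $n \to \infty$; in particular $\pathrewardinf(\infinitepath) = \liminf_n \multreward_{\reward}^n(\infinitepath) = +\infty$ almost surely, and thus $\pathreward(\infinitepath) = +\infty$ almost surely as well. Therefore $\ExpectationMC<\MC, \initialstate>[\pathrewardinf] = \ExpectationMC<\MC, \initialstate>[\pathreward] = \infty$.

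The argument is routine once these tools are in place; the one point that deserves care is the passage from ``the Ces\`aro averages converge almost surely to the nonzero constant $m$'' to ``the partial products converge almost surely to $0$ (resp.\ to $\infty$)''. This uses only the elementary fact that a real sequence whose Ces\`aro averages tend to a nonzero limit has partial sums diverging to $\pm\infty$ with the matching sign, together with the observation that boundedness of $\log(\reward)$ on the finite state space rules out any integrability subtleties in the mean-payoff statement we invoke. Everything else --- the existence and almost-sure constancy of $\MP_{\log(\reward)}$ and its independence of the starting state --- is classical for strongly connected finite Markov chains, as recalled in \Cref{sec:prelim}.
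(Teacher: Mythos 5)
Your proof is correct and follows essentially the same route as the paper's: pass to $\log(\reward)$, use the almost-sure constancy of the mean payoff in a strongly connected finite Markov chain to get a uniform negative (resp.\ positive) bound on the Ces\`aro averages from some index on, conclude that the partial products tend to $0$ (resp.\ $\infty$) almost surely, and then read off the expectation of the non-negative random variable. The only cosmetic difference is that you instantiate the paper's $\varepsilon$ as $-m/2$ (resp.\ $m/2$).
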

The remaining case is that the expected logarithmic mean payoff is zero, i.e.\ $\ExpectationMC<\MC, \initialstate>[\MP_{\log(\reward)}] = 0$.
Here, we actually have to distinguish two cases: either the multiplicative reward of every cycle in $\MC$ is $1$ or there are cycles with a multiplicative reward different to $1$, as we illustrate in the following.
\begin{figure}[t]
	\centering
	\begin{tikzpicture}[auto,initial text=]
		\node[state,initial=left] at (0,0) (s1) {$s_1, 2$\strut};
		\node[state] at (1.5,0) (s2) {$s_2, \tfrac{1}{2}$};

		\path[->,directedge]
			(s1) edge[bend left] (s2)
			(s2) edge[bend left] (s1)
		;
		\node at (0,.75) {};
		\node at (0,-.75) {};
	\end{tikzpicture}\quad%
	\begin{tikzpicture}[auto,initial text=]
		\node[state,initial=left] at (0,0) (s1) {$s_1, 2$\strut};
		\node[state] at (2,0) (s2) {$s_2, \tfrac{1}{2}$};

		\node[actionnode] at (1, 0.5) (a11) {};
		\node[actionnode] at (1, -0.5) (a21) {};

		\path[->,probedge]
			(a11) edge[out=80,in=120,pos=0.8] node[prob] {0.5} (s2)
			(a11) edge[out=100,in=60,pos=0.8,swap] node[prob] {0.5} (s1)
			(a21) edge[out=-100,in=-60,pos=0.8] node[prob] {0.5} (s1)
			(a21) edge[out=-80,in=-120,pos=0.8,swap] node[prob] {0.5} (s2)
		;
		\path[-,actionedge]
			(s1) edge[out=10,in=-100] (a11)
			(s2) edge[out=190,in=80] (a21)
		;
		\node at (0,.75) {};
		\node at (0,-.75) {};
	\end{tikzpicture}
	\caption{
		Two Markov chains that both achieve a logarithmic mean payoff of zero, but differ in their achieved multiplicative reward.
	} \label{fig:logzeromp_example}
\end{figure}
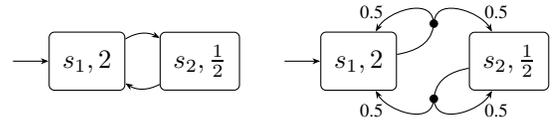
\begin{example}
	Consider the two Markov chains in \cref{fig:logzeromp_example}.
	In both, the expected logarithmic mean payoff is zero.
	However, the left MC obtains a finite multiplicative reward, namely $2$ for $\pathreward$ and $1$ for $\pathrewardinf$ as the sequence of multiplicative rewards of prefixes of the unique path is $2,1,2,1,\dots$.
	In contrast, in the right MC, we have that $\ExpectationMC<\MC, \initialstate>[\pathreward] = \infty$ and $\ExpectationMC<\MC, \initialstate>[\pathrewardinf] = 0$:
	Intuitively, the sum of logarithmic rewards over prefixes of an individual path in this MC is a random walk without drift, which eventually obtains arbitrarily large and arbitrarily small values almost surely.
	Consequently, the supremum and infimum of multiplicative rewards tend to $\infty$ and $0$ respectively.
	(Formal treatment is provided below.)
\end{example}
In the treatment of this remaining case, cycles with multiplicative reward equal to or different from 1 will play a central role, inviting an abbreviation.
\begin{definition}
\label{def:n-cycle}
	A cycle $s_0 s_1 \dots s_{k+1}$ ($s_0 = s_{k+1}$) is called a \emph{1-cycle} if $\prod_{i=0}^k \reward(s_i) = 1$ and \emph{n-cycle} otherwise.
\end{definition}
Note that a cycle is a 1-cycle if and only if the sum of the logarithms of the rewards on the cycle is $0$.
With this definition at hand, we can now capture the effect of n-cycles formally.
\begin{restatable}{lemma}{zeromeanpayoff}
\label{lem:zeromeanpayoff}
	Let $\MC$ and $\reward$ be as above and $\initialstate \in \States$ any state.
	Assume that $\ExpectationMC<\MC, \initialstate>[\MP_{\log(\reward)}] = 0$.
	Then:
		1)~If all cycles 
		are 1-cycles, then for all states $\initialstate \in \States$ we have $0 < \ExpectationMC<\MC, \initialstate>[\pathrewardinf] \leq \ExpectationMC<\MC, \initialstate>[\pathreward] < \infty$.
		2)~If there is an n-cycle, 
		then for all states $\initialstate \in S$ we have $\ExpectationMC<\MC, \initialstate>[\pathrewardinf] = 0$ and $\ExpectationMC<\MC, \initialstate>[\pathreward] = \infty$.
\end{restatable}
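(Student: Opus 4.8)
The plan is to move to the additive (logarithmic) world throughout. Writing $X_n(\infinitepath) = \sum_{k\le n}\log(\reward(\infinitepath_k))$, we have $\multreward_{\reward}^n(\infinitepath) = e^{X_n(\infinitepath)}$, so $\pathreward = \limsup_n e^{X_n}$ and $\pathrewardinf = \liminf_n e^{X_n}$. Since in a strongly connected finite Markov chain $\MP_{\log(\reward)}$ is almost surely the constant $\ExpectationMC<\MC,\initialstate>[\MP_{\log(\reward)}]$, the hypothesis of the lemma gives $X_n/n \to 0$ almost surely. The two cases then correspond exactly to whether the increments of $X_n$ are rigid or genuinely fluctuating.

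For part~1 (all cycles are $1$-cycles) I would exhibit a node potential. As every cycle has $\log(\reward)$-sum $0$ and $\MC$ is strongly connected, summing $\log(\reward)$ along an arbitrary path from a fixed reference state to a state $v$ is path-independent (two such paths, closed up with a common return path, form cycles of weight $0$); call the value $p(v)$. Then $\log(\reward(u)) = p(v) - p(u)$ for every transition $u \to v$, so the sum telescopes to $X_n(\infinitepath) = p(\infinitepath_{n+1}) - p(\infinitepath_0)$. As $p$ takes finitely many values, $X_n$ is bounded uniformly over all paths from a fixed start, with bounds depending only on $\initialstate$; hence $\pathreward$ and $\pathrewardinf$ lie in a fixed interval $[c,C]$ with $0 < c \le C < \infty$, and taking expectations gives $0 < \ExpectationMC<\MC,\initialstate>[\pathrewardinf] \le \ExpectationMC<\MC,\initialstate>[\pathreward] < \infty$ for every $\initialstate$.

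For part~2 (there is an n-cycle) I would first decompose the given n-cycle into simple cycles; one of them, say $C$, has nonzero $\log(\reward)$-weight $\wgt(C)$, and I take the reference state $\hat s$ to be a vertex of $C$. Let $\tau_1 < \tau_2 < \cdots$ be the successive return times of $\infinitepath$ to $\hat s$ (almost surely infinitely many and finite). By the strong Markov property the excursions between consecutive returns are i.i.d., so $(X_{\tau_m})_m$ is a random walk with i.i.d. increments $Y_i := X_{\tau_{i+1}} - X_{\tau_i}$. These have finite first moment (excursion lengths have geometric tails and $\log(\reward)$ is bounded) and mean $0$: $\frac1m\sum_{i<m}Y_i = \frac{X_{\tau_m}-X_{\tau_1}}{m} = \frac{X_{\tau_m}}{\tau_m}\cdot\frac{\tau_m}{m}+o(1)\to 0$ almost surely using $X_n/n\to 0$ and $\tau_m/m\to\mathbb{E}[\text{return time}]\in(0,\infty)$, while the strong law identifies the left-hand side with $\mathbb{E}[Y_1]$. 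They are moreover not a.s.\ constant: the excursion that runs once around $C$ has positive probability and $Y$-value $\wgt(C)\neq 0$, so $\mathbb{P}(Y_1 = 0) < 1$. A non-degenerate, integrable, mean-zero one-dimensional random walk oscillates, $\limsup_m X_{\tau_m} = +\infty$ and $\liminf_m X_{\tau_m} = -\infty$ almost surely (a standard consequence of the Chung--Fuchs recurrence theorem, or of the Hewitt--Savage zero--one law together with the strong law). Passing to this subsequence, $\limsup_n X_n = +\infty$ and $\liminf_n X_n = -\infty$ almost surely, i.e.\ $\pathreward = \infty$ and $\pathrewardinf = 0$ almost surely, for every starting state; taking expectations yields part~2.

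The main obstacle is the middle of part~2: extracting genuine (non-degenerate) fluctuation of the excursion increments from the bare hypothesis ``there is an n-cycle.'' The subtlety is that an arbitrary n-cycle need not be simple and need not pass through a convenient state, and a single n-cycle in isolation (e.g.\ a chain consisting of one deterministic loop) would produce constant increments --- but that situation is exactly what the hypothesis $\ExpectationMC<\MC,\initialstate>[\MP_{\log(\reward)}] = 0$ rules out. Handling this cleanly (choosing $\hat s$ on a simple n-cycle and combining ``$\mathbb{P}(Y_1=0)<1$'' with ``$\mathbb{E}[Y_1]=0$'') is the crux; the remaining ingredients --- the potential-function argument of part~1, the renewal-style identification of $\mathbb{E}[Y_1]$, and the random-walk oscillation theorem --- are standard.
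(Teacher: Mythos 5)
Your proof is correct. For part~1 your potential-function argument is essentially the paper's argument in different clothing: the paper observes that every prefix has the same multiplicative reward as some simple path and hence the partial products range over a finite, $n$-independent set; your telescoping identity $X_n(\infinitepath)=p(\infinitepath_{n+1})-p(\infinitepath_0)$ makes the same finiteness explicit and even identifies the extremes with the quantities $R(\initialstate,t)$ used later in the paper. For part~2 the routes genuinely differ: the paper disposes of this case in one line by translating the n-cycle condition into the additive setting for $\log(\reward)$ and citing Lemma~3.2 of the reference [BBDGS2018], which asserts precisely that a strongly connected chain with zero expected mean payoff and a cycle of nonzero additive reward has $\liminf_n X_n=-\infty$ and $\limsup_n X_n=+\infty$ almost surely. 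You instead re-derive that fact from scratch: decompose the n-cycle to extract a simple cycle $C$ of nonzero weight, take excursions from a vertex of $C$, show the i.i.d.\ increments are integrable, mean zero (via the renewal identity $\tau_m/m\to\ExpectationMC<\MC, \initialstate>[\text{return time}]$ combined with $X_n/n\to 0$), and non-degenerate (the excursion around $C$ has positive probability and nonzero value), and then invoke oscillation of non-degenerate mean-zero random walks (Chung--Fuchs / Hewitt--Savage). Each step checks out, including the subtle point you flag about extracting non-degeneracy from a possibly non-simple n-cycle. What your version buys is a self-contained probabilistic proof that exposes the random-walk mechanism the paper only alludes to informally; what the paper's version buys is brevity and reuse of an existing result for additive rewards.
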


Thus, the expected multiplicative reward in $\MC$ is only finite and positive if all cycles have multiplicative reward $1$.
In this case, the expected multiplicative reward can be directly characterized as shown in the following two lemmas.
\begin{restatable}{lemma}{uniquereward} \label{stm:unique_reward}
	Let $\MC$ and $\reward$ be as above.
	All cycles in $\MC$ are 1-cycles if and only if for all pairs of states $s,t$ and all paths $\finitepath$ and $ \finitepath' $ from $s$ to $t$, we have $\multreward(\finitepath) = \multreward(\finitepath')$.
\end{restatable}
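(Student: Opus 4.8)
The plan is to prove both directions directly using the structure of cycles in a strongly connected graph. The ``if'' direction is almost immediate: suppose all paths between any two states have the same multiplicative reward. A cycle from $s$ to $s$ is in particular a path from $s$ to $s$; the trivial empty path from $s$ to $s$ has multiplicative reward $1$ (the empty product), so by hypothesis every cycle through $s$ also has multiplicative reward $1$. Hence every cycle is a 1-cycle.

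For the ``only if'' direction, assume all cycles are 1-cycles, and let $\finitepath$ and $\finitepath'$ be two paths from $s$ to $t$. First I would reduce to the case of \emph{simple} cycles, by observing that every cycle decomposes into simple cycles (repeatedly splitting off a simple cycle whenever a state repeats); since $\multreward$ is multiplicative along concatenations, if every simple cycle has reward $1$ then so does every cycle, so the hypothesis ``all cycles are 1-cycles'' is equivalent to ``all simple cycles are 1-cycles.'' The key idea is then that the concatenation $\finitepath \circ \finitepath''$, where $\finitepath''$ is any path from $t$ back to $s$ (which exists by strong connectedness), is a cycle through $s$, hence has $\multreward(\finitepath \circ \finitepath'') = \multreward(\finitepath)\cdot\multreward(\finitepath'') = 1$. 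Likewise $\multreward(\finitepath')\cdot\multreward(\finitepath'') = 1$. Dividing (all rewards are strictly positive, so nothing vanishes) gives $\multreward(\finitepath) = \multreward(\finitepath') = 1/\multreward(\finitepath'')$, which is exactly the claim. Equivalently, one can phrase this additively via $\log(\reward)$: the hypothesis says every cycle has zero $\log$-weight, so the function assigning to each state $u$ the $\log$-weight of some fixed path from $s$ to $u$ is well-defined (path-independent), yielding a potential $h$ with $\log\reward(u) = h(v) - h(u)$ for every edge $u \to v$; then the $\log$-weight of any $s$-$t$ path telescopes to $h(t) - h(s)$, independent of the path.

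I do not expect a genuine obstacle here; the only point requiring a little care is making the cycle-decomposition argument rigorous (that path-independence of $\multreward$ between all pairs follows from $\multreward = 1$ on cycles, and that it suffices to check simple cycles), and being explicit that strong connectedness is what guarantees the ``return path'' $\finitepath''$ exists so that every path extends to a cycle. The strict positivity of $\reward$ is used to divide freely; this is consistent with the standing assumption in \Cref{sec:MC} that $\reward\colon\States\to\Reals_{>0}$.
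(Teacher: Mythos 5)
Your proof of the direction \enquote{all cycles are 1-cycles $\Rightarrow$ path-independence} is exactly the paper's argument: append a return path $\finitepath''$ from $t$ to $s$ (which exists by strong connectedness), observe that $\finitepath\circ\finitepath''$ and $\finitepath'\circ\finitepath''$ are cycles with reward $1$, and cancel $\multreward(\finitepath'')>0$. (Your preliminary reduction to simple cycles is harmless but unnecessary -- the hypothesis applies to arbitrary cycles, so the concatenations need not be simple.) For the converse you diverge from the paper: you compare a cycle through $s$ against the \emph{length-zero path} from $s$ to $s$, whose multiplicative reward you take to be the empty product $1$. This works only under the convention that the trivial one-state sequence counts as a \enquote{path from $s$ to $s$} quantified over in the lemma; the paper's definition $\multreward_{\reward}(\finitepath)=\prod_{k=0}^{n}\reward(\finitepath_k)$ for $\finitepath=s_0\dots s_{n+1}$ is phrased for paths with at least one transition, so this reading is defensible but not explicitly licensed. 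The paper instead avoids the issue entirely with a doubling argument: if $\finitepath$ is an n-cycle from $s$ to $s$, then $\finitepath$ and $\finitepath\circ\finitepath$ are two (nontrivial) paths from $s$ to $s$ with rewards $\multreward(\finitepath)$ and $\multreward(\finitepath)^2\neq\multreward(\finitepath)$, contradicting path-independence. If you want your version to be airtight without committing to the empty-path convention, substitute the doubling step; otherwise the proof is correct as written.
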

Thus, when there are no n-cycles, then for every pair of states $s$ and $t$ there is a unique multiplicative reward for all paths from $s$ to $t$, which we denote by $R(s,t)$.

\begin{restatable}{lemma}{finitevaluesMC}
\label{lem:finitevaluesMC}
	Let $\MC$ and $\reward$ be as above, and assume that all cycles in $\MC$ have multiplicative reward $1$.
	Let $\initialstate$ as well as $R(s,t)$ for all $s,t\in S$ be as above.
	Then, $0 < \ExpectationMC<\MC, \initialstate>[\pathrewardinf] = \min_{t\in \States} R(\initialstate,t) \leq 1 \leq  \max_{t\in \States} R(\initialstate,t) = \ExpectationMC<\MC, \initialstate>[\pathreward] < \infty$.
\end{restatable}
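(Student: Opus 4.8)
The plan is to reduce everything to the single structural fact that, once all cycles are 1-cycles, the accumulated multiplicative reward along a path depends only on the current state. First I would invoke \Cref{stm:unique_reward}: for every $t\in\States$ all paths from $\initialstate$ to $t$ share one and the same multiplicative reward $R(\initialstate,t)$, and since this is a product of finitely many positive rationals and $\States$ is finite, $0<R(\initialstate,t)<\infty$ for every $t$. Then I would record the (index-shifted) identity that for any infinite path $\infinitepath$ with $\infinitepath_0=\initialstate$ and any $n$ the partial product $\prod_{k\le n}\reward(\infinitepath_k)$ is exactly the multiplicative reward of the prefix $\infinitepath_0\cdots\infinitepath_{n+1}$, which is a path from $\initialstate$ to $\infinitepath_{n+1}$, hence equals $R(\initialstate,\infinitepath_{n+1})$. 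Consequently the sequence of partial products takes values in the finite set $\{R(\initialstate,t):t\in\States\}$, and $\pathreward(\infinitepath)$ and $\pathrewardinf(\infinitepath)$ are precisely the $\limsup$ and $\liminf$ of this sequence.

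Next I would use that $\MC$ is finite and strongly connected, so that almost every path visits every state infinitely often. Hence, for almost every $\infinitepath$, every value $R(\initialstate,t)$ occurs infinitely often among the partial products; and a sequence ranging over a finite set in which each element of the set recurs infinitely often has $\limsup$ equal to the maximum and $\liminf$ equal to the minimum of that set. This yields, almost surely, $\pathreward(\infinitepath)=\max_{t\in\States}R(\initialstate,t)$ and $\pathrewardinf(\infinitepath)=\min_{t\in\States}R(\initialstate,t)$. Since these random variables are almost surely equal to constants, taking expectations gives $\ExpectationMC<\MC,\initialstate>[\pathreward]=\max_{t}R(\initialstate,t)$ and $\ExpectationMC<\MC,\initialstate>[\pathrewardinf]=\min_{t}R(\initialstate,t)$, both strictly between $0$ and $\infty$.

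Finally, to obtain the sandwiching by $1$, I would take $t=\initialstate$: $R(\initialstate,\initialstate)$ is the multiplicative reward of a path from $\initialstate$ to itself, i.e.\ of a cycle through $\initialstate$ (or of the trivial one-state path, whose reward is the empty product $1$), and by the 1-cycle hypothesis this equals $1$. Therefore $\min_{t}R(\initialstate,t)\le R(\initialstate,\initialstate)=1\le\max_{t}R(\initialstate,t)$, which closes the displayed chain of (in)equalities.

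I do not expect a genuine obstacle here: the lemma is essentially a direct corollary of \Cref{stm:unique_reward} together with recurrence of finite strongly connected Markov chains. The only mildly delicate points are the index bookkeeping in the identity $\prod_{k\le n}\reward(\infinitepath_k)=R(\initialstate,\infinitepath_{n+1})$ — where one has to respect the ``reward obtained on leaving a state'' convention — and stating precisely the measure-one event that every state is visited infinitely often; both are routine.
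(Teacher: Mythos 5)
Your proof is correct and follows essentially the same route as the paper: both arguments rest on the observation that (by \Cref{stm:unique_reward}) the partial product along any prefix depends only on the current state, together with the fact that almost every path in a finite strongly connected Markov chain visits every state infinitely often, so that the $\limsup$ and $\liminf$ equal $\max_t R(\initialstate,t)$ and $\min_t R(\initialstate,t)$ almost surely. Your explicit justification of the sandwiching $\min \leq 1 \leq \max$ via $R(\initialstate,\initialstate)=1$ is a small point the paper's proof leaves implicit, but otherwise the two arguments coincide.
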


Note that contrary to many classical objectives, this also means that not all states in a BSCC have the same value.

\subsubsection{Computing Multiplicative Rewards}
For the computation of expected multiplicative rewards, we restrict to inputs with rational transition probabilities and rewards, given as fractions of (co-prime) integers in binary.
So, let $\MC=(\States,\mctransitions)$ be a strongly connected Markov chain with rational transition probabilities and with a reward function $\reward\colon\States\to \Rationals_{>0}$ and let
$\initialstate\in \States$ be an initial state.
From the above, the following steps are necessary to compute $ \ExpectationMC<\MC, \initialstate>[\pathreward]$:
\begin{enumerateindent}
\item
Decide whether $\ExpectationMC<\MC, \initialstate>[\MP_{\log(\reward)}]$ is $< 0$, $= 0$, or $> 0$.
In the first case and third case, both values of $\initialstate$ are $0$ or $\infty$, respectively, 
by \Cref{lem:meanpayoffMC}.
\item
If $\ExpectationMC<\MC, \initialstate>[\MP_{\log(\reward)}] = 0$, check whether there are n-cycles.
If yes, $\ExpectationMC<\MC, \initialstate>[\pathreward]=\infty$ and $\ExpectationMC<\MC, \initialstate>[\pathrewardinf]=0$ by \Cref{lem:zeromeanpayoff}.
\item
Otherwise, all cycles have multiplicative reward $1$, and $\ExpectationMC<\MC, \initialstate>[\pathreward] = \max_{t\in \States} R(\initialstate,t)$ and $\ExpectationMC<\MC, \initialstate>[\pathrewardinf] = \min_{t\in \States} R(\initialstate,t)$ with $R(\initialstate,t)$ as in \Cref{lem:finitevaluesMC}.
\end{enumerateindent}

\paragraph*{Step~1.\ Comparing $\ExpectationMC<\MC, \initialstate>[\MP_{\log(\reward)}] $ to $0$}
In strongly connected Markov chains, the mean payoff can be expressed in terms of the unique stationary distribution (see, e.g., \cite[Sec.~2.6.2]{Kulkarni2011}).
This stationary distribution $\theta\in [0,1]^S$ can be computed by solving a linear equation system using the transition probabilities of $\MC$ in polynomial time (see, e.g., \cite{DBLP:books/daglib/0020348}).
So, we can obtain  fractions $\theta_s=a_s/b_s$ of integers in binary for all states $s\in S$ in polynomial time (and hence of polynomially sized binary representation).
We have 
$
\ExpectationMC<\MC, \initialstate>[\MP_{\log(\reward)}]  = {\sum}_{s\in \States} \theta_s \cdot \log(\reward(s))
$.
Now, applying the exponential function to this sum, we see that $\ExpectationMC<\MC, \initialstate>[\MP_{\log(\reward)}] \bowtie 0$ for ${\bowtie} \in \{<,=,>\}$ if and only if $\prod_{s\in S} \reward(s)^{\theta_s} \bowtie 1$.
This inequality holds if and only if $\prod_{s\in S} \reward(s)^{a_s} \bowtie \prod_{s\in S} \reward(s)^{b_s}$.
Letting $d$ be the least common multiple of the denominators of the values $\reward(s)$ with $s\in S$, we can multiply this inequality with $d^{(\max_s a_s\cdot b_s)}$  to obtain an instance of CSRI for ${\bowtie} \in \{<,>\}$ and an instance of ESRI if ${\bowtie}$ is $=$.
We conclude:
\begin{proposition}
\label{prop:logMP_MC}
	Deciding whether $\ExpectationMC<\MC, \initialstate>[\MP_{\log(\reward)}] =0$ is in P.
	Deciding whether $\ExpectationMC<\MC, \initialstate>[\MP_{\log(\reward)}] \bowtie 0$ for $\bowtie{} \in \{<,>\}$ is in the complexity class ${\text{CSRI}}\subseteq \exists \mathbb{R}$.
\end{proposition}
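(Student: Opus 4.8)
The plan is to reduce the sign question for the logarithmic mean payoff to a comparison of two succinctly represented integers, exploiting that in a strongly connected Markov chain the mean payoff is the stationary average of the per-step rewards. First I would compute the unique stationary distribution $\theta\in[0,1]^{\States}$ of $\MC$; this is the solution of a linear equation system over the rational transition probabilities, so Gaussian elimination produces rationals $\theta_s = a_s/b_s$ with $a_s,b_s\in\Naturals$ of polynomially bounded binary encoding, in polynomial time. By the standard identity for mean payoff in strongly connected chains (e.g.\ \cite[Sec.~2.6.2]{Kulkarni2011}), $\ExpectationMC<\MC, \initialstate>[\MP_{\log(\reward)}] = \sum_{s\in\States}\theta_s\cdot\log(\reward(s))$, and this value does not depend on $\initialstate$.

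Next I would eliminate the logarithms. Applying the (monotone) exponential function, $\ExpectationMC<\MC, \initialstate>[\MP_{\log(\reward)}]\bowtie 0$ holds iff $\prod_{s\in\States}\reward(s)^{\theta_s}\bowtie 1$ for each $\bowtie\in\{<,=,>\}$. Let $D$ be the least common multiple of the denominators $b_s$; as the lcm of polynomially many numbers of polynomial bit-size, $D$ again has polynomial bit-size, and $c_s\coloneqq\theta_s\cdot D\in\Naturals$ is polynomially sized. Raising to the $D$-th power, the comparison becomes $\prod_{s\in\States}\reward(s)^{c_s}\bowtie 1$. Writing each input reward as $\reward(s)=p_s/q_s$ with coprime $p_s,q_s\in\Naturals$, this is equivalent to $\prod_{s\in\States}p_s^{c_s}\bowtie\prod_{s\in\States}q_s^{c_s}$, which is exactly a comparison of two integers given in the succinct form $\prod_i a_i^{b_i}$.

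From this reduction the complexity claims follow immediately. For $\bowtie$ equal to $=$, the instance is an ESRI instance, and ESRI is unconditionally in P, so deciding $\ExpectationMC<\MC, \initialstate>[\MP_{\log(\reward)}]=0$ is in P. For $\bowtie\in\{<,>\}$ the construction is a polynomial-time many-one reduction to CSRI, placing the problem in the class CSRI; combined with the unconditional upper bound $\text{CSRI}\subseteq\exists\Reals$ (\Cref{prop:CSRI}) this gives the stated bound.

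The only delicate point — and the step I expect to need the most care — is verifying that nothing in this chain of manipulations blows the representation up super-polynomially: that the stationary distribution obtained from the linear system has polynomially bounded numerators and denominators (via the bit-complexity of Gaussian elimination or Cramer's rule), that the common denominator $D$ and hence the exponents $c_s$ remain polynomial, and that the final object is literally of the form required by the definition of CSRI/ESRI. Once these size bounds are in place, the rest is routine bookkeeping.
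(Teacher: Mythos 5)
Your proposal is correct and follows essentially the same route as the paper: compute the stationary distribution, express the logarithmic mean payoff as $\sum_{s}\theta_s\log(\reward(s))$, exponentiate, clear denominators to obtain integer exponents, and split the rational rewards into numerators and denominators to land exactly on a CSRI instance (resp.\ an ESRI instance for equality, which is unconditionally in P). Your handling of the denominator-clearing step (taking the lcm $D$ of the $b_s$ and using exponents $c_s=\theta_s D$) is in fact slightly cleaner than the paper's phrasing, and your closing remark about polynomial bit-size bounds addresses the only genuinely delicate point.
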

So, as described in \Cref{sec:prelim}, the latter problem is in P if either of the two mentioned number-theoretic conjectures hold. 
Further, distinguishing divergence to $\infty$ from convergence to $0$ is at least as hard as CSRI:
\begin{restatable}{proposition}{prophardnessMP}
\label{prop:hardnessMP}
Deciding whether $\ExpectationMC<\MC, \initialstate>[\MP_{\log(\reward)}] > 0$ is at least as hard as CSRI.
\end{restatable}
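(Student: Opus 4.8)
The plan is to give a polynomial-time many-one reduction from CSRI to the problem of deciding whether $\ExpectationMC<\MC, \initialstate>[\MP_{\log(\reward)}] > 0$. The idea is to invert the reasoning behind \Cref{prop:logMP_MC}: there the sign of the logarithmic mean payoff was computed as a comparison of succinctly represented integers, so here I will build a strongly connected Markov chain whose stationary distribution encodes the exponents of a given CSRI instance.

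Given a CSRI instance asking whether $a_1^{b_1}\cdots a_n^{b_n} > c_1^{d_1}\cdots c_m^{d_m}$, I may assume (after trivial preprocessing) that all $a_i,c_j \geq 1$, all $b_i,d_j \geq 1$, and $n+m \geq 1$. Set $N = \sum_i b_i + \sum_j d_j$ and construct $\MC$ as a ``hub-and-spoke'' chain: a hub state $\initialstate$ with $\reward(\initialstate) = 1$, spoke states $u_1,\dots,u_n$ with $\reward(u_i) = a_i$, and spoke states $v_1,\dots,v_m$ with $\reward(v_j) = 1/c_j$. From $\initialstate$ the chain moves to $u_i$ with probability $b_i/N$ and to $v_j$ with probability $d_j/N$, and every spoke state returns to $\initialstate$ with probability $1$. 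This chain is strongly connected, and all rewards and transition probabilities are rationals whose binary encodings are polynomial in the input size.

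Every excursion out of $\initialstate$ has length $2$, so the unique stationary distribution $\theta$ satisfies $\theta_{\initialstate} = \tfrac12$, $\theta_{u_i} = b_i/(2N)$, and $\theta_{v_j} = d_j/(2N)$. Using the stationary-distribution expression for the mean payoff in strongly connected chains (the paragraph preceding \Cref{prop:logMP_MC}),
\[
	\ExpectationMC<\MC, \initialstate>[\MP_{\log(\reward)}] = {\sum}_{s\in\States} \theta_s \log(\reward(s)) = \frac{1}{2N}\Bigl( {\sum}_i b_i \log a_i - {\sum}_j d_j \log c_j \Bigr),
\]
which is strictly positive if and only if $\sum_i b_i \log a_i > \sum_j d_j \log c_j$, i.e.\ if and only if $a_1^{b_1}\cdots a_n^{b_n} > c_1^{d_1}\cdots c_m^{d_m}$. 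Hence $(\MC,\initialstate)$ is a yes-instance of the mean-payoff problem exactly when the given instance is a yes-instance of CSRI, which yields the reduction.

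I do not expect a genuine obstacle here, since the construction is essentially the converse of the computation carried out for \Cref{prop:logMP_MC}. The only points needing (routine) care are checking that the transition probabilities and rewards have polynomial encoding size, verifying strong connectivity, and handling the degenerate inputs — terms with exponent $0$, bases equal to $1$, or an empty product on one side — all of which are removed or absorbed by trivial preprocessing.
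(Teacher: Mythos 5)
Your reduction is correct and is essentially identical to the paper's own proof: the same hub-and-spoke chain with hub reward $1$, spoke rewards $a_i$ and $1/c_j$, transition probabilities proportional to the exponents, and the same stationary-distribution computation yielding the equivalence with the CSRI instance. No further comments needed.
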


\paragraph*{Step~2.\ Checking whether all cycles have multiplicative reward $1$}
By \Cref{stm:unique_reward}, all cycles have multiplicative reward of $1$ if and only if for any two states $s$ and $t$ all paths from $s$ to $t$ have the same reward.
Since the Markov chain is strongly connected, this is equivalent to all paths from $\initialstate$ to $t$ having the same reward.
This in turn can be determined by standard graph search algorithms in polynomial time.
For example, we can follow a simple breadth-first search:
Starting from $\initialstate$, we store for each state $t$ the reward of the path through which we reached $t$ from $\initialstate$ (when moving from $t$ to $t'$, the value for $t'$ is simply the value of $t$ times $\reward(t')$).
If we encounter a successor of $t$ which we already visited, however now find a path with different weight, we return \enquote{false}.
When the search completes without error, we return \enquote{true}.
As we only follow each edge once, we only need to perform linearly many multiplications, and all involved numbers are of polynomial size.

\paragraph*{Step~3.\ Computing the values $R(s,t)$ for states $s$ and $t$}
These are computed as a by-product of the above algorithm.

\subsection{General Markov Chains}

Now, we address the problem of finding the expected multiplicative reward in general, i.e.\ not necessarily strongly connected, Markov chains.
We consider the multiplicative reward $\pathreward$ defined using the $\limsup$, $\pathrewardinf$ is analogous.

Let $\MC = (\States, \mctransitions)$ be a Markov chain and $\initialstate \in \States$ an initial state.
We assume w.l.o.g.\ that all states are reachable from $\initialstate$.
For a state $s \in S$, we denote the \emph{value of $s$} by $v_s = \ExpectationMC<\MC, s>[\pathreward]$.
First, the results from the previous section provide a characterization of the values $v_s$ for all states in the BSCCs of $\MC$, which are strongly connected Markov chains.
Thus, let $T$ be the set of states that are in BSCCs, and $v_t$ the values for all $t \in T$.
If any of these values are $\infty$, we also have $v_{\initialstate} = \infty$:
We assumed that all states are reachable, i.e.\ in particular $t$, and all rewards are positive, hence there is a path with non-zero reward from $\initialstate$ to $t$.
So, next assume that the value $v_t$ is finite for all states $t \in T$.
Further, let $T_0 \coloneqq \{t \in T \mid v_t = 0\}$ be the set of states in a BSCC with value $0$.
(Observe that either all states or no state of a BSCC have value $0$.)
\begin{restatable}{lemma}{lemzeros} \label{lem:zeros}
	For each $s \in \States$, we have $v_s = 0$ if and only if $\ProbabilityMC<\MC, s>[\reach T_0] = 1$.
\end{restatable}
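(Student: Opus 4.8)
The plan is to reduce the statement to a single decomposition identity for $v_s$ and then read off both implications. Since $\MC$ is finite, almost every path eventually enters the set $T$ of states lying in some BSCC and never leaves it; let $\tau$ be the (almost surely finite) first time at which the path visits $T$, and for a path $\infinitepath$ write $\finitepath = \infinitepath_0 \infinitepath_1 \cdots \infinitepath_\tau$ for the prefix up to that time and $\infinitepath' = \infinitepath_\tau \infinitepath_{\tau+1} \cdots$ for the corresponding suffix. First I would record the pointwise identity $\pathreward(\infinitepath) = \multreward_{\reward}(\finitepath) \cdot \pathreward(\infinitepath')$. This is immediate from the definition of $\pathreward$ as a $\limsup$ of finite products, once one observes two things: rewards are collected upon leaving a state, so the factor $\reward(\infinitepath_\tau)$ of the junction state is counted exactly once, namely inside $\pathreward(\infinitepath')$ and not inside $\multreward_{\reward}(\finitepath)$; and $\limsup_{n\to\infty}\prod_{k\le n}\reward(\infinitepath_k) = \limsup_{n\ge\tau}\prod_{k\le n}\reward(\infinitepath_k)$ because $\tau<\infty$.

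Next I would take expectations and use the strong Markov property at the stopping time $\tau$: conditioned on $\infinitepath_\tau = t$, the suffix $\infinitepath'$ has law $\ProbabilityMC<\MC, t>$, so $\ExpectationMC<\MC, s>[\pathreward(\infinitepath') \mid \infinitepath_\tau = t] = v_t$, and since $\multreward_{\reward}(\finitepath)$ depends only on the path up to time $\tau$ this yields the key formula
\[
	v_s = \ExpectationMC<\MC, s>\big[\, \multreward_{\reward}(\finitepath) \cdot v_{\infinitepath_\tau} \,\big] .
\]
The integrand here is non-negative, and $\multreward_{\reward}(\finitepath) > 0$ on every path since all rewards are strictly positive.

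From this both directions follow quickly. For $(\Leftarrow)$, assume $\ProbabilityMC<\MC, s>[\reach T_0] = 1$; because every BSCC is absorbing and, by the characterizations in \Cref{sec:mc:scc} (in particular \Cref{lem:meanpayoffMC} and \Cref{lem:zeromeanpayoff}, using that all BSCC values are assumed finite), either all or none of the states of a given BSCC belong to $T_0$, the entry state satisfies $\infinitepath_\tau \in T_0$ almost surely, whence $v_{\infinitepath_\tau} = 0$ almost surely and the formula gives $v_s = 0$. For $(\Rightarrow)$ I would argue by contraposition: if $\ProbabilityMC<\MC, s>[\reach T_0] < 1$, then with positive probability the path enters a BSCC disjoint from $T_0$, so $\ProbabilityMC<\MC, s>[v_{\infinitepath_\tau} > 0] > 0$; since $\multreward_{\reward}(\finitepath)$ is everywhere positive, the integrand $\multreward_{\reward}(\finitepath) \cdot v_{\infinitepath_\tau}$ is then strictly positive on a set of positive probability, so $v_s > 0$ and in particular $v_s \ne 0$.

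I expect the only genuinely delicate step to be the rigorous derivation of the displayed formula for $v_s$: the careful bookkeeping of the ``reward collected on leaving a state'' convention at the junction state, and the invocation of the Markov property at the stopping time $\tau$. Everything afterwards is elementary. I would also make explicit that $\reach T_0$ and the event $\{\infinitepath_\tau \in T_0\}$ coincide up to a null set --- this is exactly the point where one uses that a BSCC lies entirely inside or entirely outside $T_0$, which in turn rests on \Cref{lem:meanpayoffMC} and \Cref{lem:zeromeanpayoff}.
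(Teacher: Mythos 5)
Your proof is correct and follows essentially the same route as the paper, which disposes of both directions in two lines by noting that $T$ is reached almost surely, that a reachable state in $T \setminus T_0$ forces $v_s > 0$, and that otherwise almost all paths enter $T_0$. Your stopping-time decomposition $v_s = \ExpectationMC<\MC, s>[\multreward_{\reward}(\finitepath) \cdot v_{\infinitepath_\tau}]$ and the explicit identification of $\reach T_0$ with $\{\infinitepath_\tau \in T_0\}$ simply make rigorous what the paper leaves implicit.
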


\cref{lem:zeros} allows us to compute the set $\States_0 \coloneqq \{s \in \States \mid v_s = 0\}$ given the values $v_t$ for all $t\in T$ via standard graph analysis.
The remaining states for which we aim to determine the values are the states in the set $\States_? \coloneqq S \setminus (\States_0 \union T)$.
For these, we consider the following system of equations.
For variables $x_s$ for all states $s \in \States_?$, we require that
\begin{equation}
	x_s = \reward(s) \cdot \left( {\sum}_{s' \in S_?} \mctransitions(s,s') \cdot x_{s'} + {\sum}_{t \in T} \mctransitions(s,t) \cdot v_t \right). \label{eq:transient}
\end{equation}
We show that a non-negative solution exists if and only if $v_{\initialstate} = \ExpectationMC<\MC, \initialstate>[\pathreward] < \infty$, and the expected multiplicative reward is the \emph{unique} solution in this case.
To simplify matters, we first restrict to non-negative solutions, i.e.\ $x_s \geq 0$.
\begin{restatable}{lemma}{lemmcnonnegsolution}
	If $\ExpectationMC<\MC, \initialstate>[\pathreward] < \infty$, then $v_s$ 
	is a (non-negative) solution to \cref{eq:transient}.
\end{restatable}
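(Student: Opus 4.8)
The plan is to verify directly that the tuple $(v_s)_{s \in \States_?}$ satisfies \cref{eq:transient}. Non-negativity is immediate, since $\pathreward \geq 0$ pointwise and hence $v_s = \ExpectationMC<\MC, s>[\pathreward] \geq 0$, so the only real content is the equation itself.

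First I would record the obvious pointwise factorisation: for an infinite path $\infinitepath = \infinitepath_0 \infinitepath_1 \infinitepath_2 \cdots$ with one-step shift $\infinitepath' = \infinitepath_1 \infinitepath_2 \cdots$ we have $\prod_{k \leq n} \reward(\infinitepath_k) = \reward(\infinitepath_0) \cdot \prod_{k \leq n-1} \reward(\infinitepath'_k)$, and because $\reward(\infinitepath_0) > 0$ is a fixed constant, passing to $\limsup_n$ gives $\pathreward(\infinitepath) = \reward(\infinitepath_0) \cdot \pathreward(\infinitepath')$. Taking $\ExpectationMC<\MC, s>$, pulling out $\reward(s)$, and invoking the Markov property (under $\ProbabilityMC<\MC, s>$ the shifted path is distributed as $\sum_{s' \in \States} \mctransitions(s,s') \cdot \ProbabilityMC<\MC, s'>$), together with Tonelli's theorem (applicable since the integrand is non-negative), yields the unconditional one-step recurrence
\[
	v_s = \reward(s) \cdot {\sum}_{s' \in \States} \mctransitions(s,s') \cdot v_{s'}, \qquad s \in \States .
\]
From this, \cref{eq:transient} follows by splitting the successor sum along the partition $\States = \States_? \strictunion T \strictunion (\States_0 \setminus T)$: summands with $v_{s'} = 0$ — that is, those with $s' \in \States_0$, which covers $\States_0 \setminus T$ entirely as well as $T_0 \subseteq T$ — vanish, so for $s \in \States_?$ only the contributions from $\States_?$ and from $T \setminus T_0$ remain. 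Since $\sum_{t \in T} \mctransitions(s,t) v_t$ already coincides with $\sum_{t \in T \setminus T_0} \mctransitions(s,t) v_t$, the right-hand side is exactly $\reward(s) \big( \sum_{s' \in \States_?} \mctransitions(s,s') v_{s'} + \sum_{t \in T} \mctransitions(s,t) v_t \big)$, which is \cref{eq:transient}.

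The only point needing a little care — and the nearest thing to an obstacle, though a mild one — is that the $v_s$ involved are genuinely finite, so that \enquote{solution} means a solution over $\RealsNonneg$ rather than over $[0,\infty]$. Here the hypothesis $\ExpectationMC<\MC, \initialstate>[\pathreward] < \infty$ and the standing assumption that every state is reachable from $\initialstate$ enter: if $v_s = \infty$ for some $s$, pick a finite path $\finitepath$ from $\initialstate$ to $s$; then $\ProbabilityMC<\MC, \initialstate>[\Cyl(\finitepath)] > 0$, and applying the factorisation along $\finitepath$ (using the Markov property at $\last{\finitepath} = s$) gives $\ExpectationMC<\MC, \initialstate>[\pathreward] \geq \ProbabilityMC<\MC, \initialstate>[\Cyl(\finitepath)] \cdot \multreward_{\reward}(\finitepath) \cdot v_s = \infty$, contradicting the hypothesis — this is precisely the argument already sketched for BSCC states in the paragraph preceding the lemma. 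Hence every $v_s$ is a finite non-negative real, and $(v_s)_{s \in \States_?}$ is a genuine non-negative solution. Beyond this the proof is pure bookkeeping; the one caveat worth flagging is that $\limsup$ does not distribute over sums in general, which does no harm here because the only thing ever factored out of the $\limsup$ is the positive constant $\reward(\infinitepath_0)$, and every exchange of limit, sum, and expectation is underwritten by non-negativity.
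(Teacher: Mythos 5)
Your proposal is correct and follows essentially the same route as the paper's proof: establish finiteness of all $v_s$ via reachability, derive the one-step recurrence $v_s = \reward(s)\cdot\sum_{s'}\mctransitions(s,s')\cdot v_{s'}$ from the Markov property, and then split the successor sum over $\States_?$, $T$, and $\States_0\setminus T$, where the last block vanishes. You merely spell out the details (the shift factorisation of the $\limsup$ and the Tonelli justification) that the paper leaves implicit.
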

\begin{corollary}
	If there is no non-negative solution to \cref{eq:transient}, then $\ExpectationMC<\MC, \initialstate>[\pathreward] = \infty$.
\end{corollary}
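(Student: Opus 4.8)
The claim is the contrapositive of the preceding lemma, so only a short argument is needed; I will nonetheless spell out the (minor) bookkeeping. The plan is as follows. First I would recall that $\pathreward$ is a non-negative measurable function on infinite paths, so $\ExpectationMC<\MC, s>[\pathreward]$ is a well-defined element of $[0,\infty]$ for every state $s$; in particular $v_s \geq 0$ always, and "$v_{\initialstate}$ is not finite" forces $v_{\initialstate} = \infty$. I am working in the standing context of this part of the section, where all BSCC values $v_t$ with $t\in T$ are finite (if some $v_t=\infty$, then $v_{\initialstate}=\infty$ was already argued, using reachability of $t$ from $\initialstate$ via a positive-reward path, and the claim is immediate), and where $\initialstate\in\States_?$ (otherwise $\initialstate\in T$ or $\initialstate\in\States_0$, and its value is handled directly and is not the subject of \cref{eq:transient}).

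Now suppose, towards a contradiction, that $\ExpectationMC<\MC, \initialstate>[\pathreward] < \infty$. By the preceding lemma, the tuple $(v_s)_{s\in\States_?}$ of expected multiplicative rewards is then a non-negative solution to \cref{eq:transient}. This contradicts the hypothesis that \cref{eq:transient} admits no non-negative solution. Hence $\ExpectationMC<\MC, \initialstate>[\pathreward] = \infty$. There is no genuine obstacle here: the one point worth a moment's care is that the lemma only guarantees a solution when the value of the initial state is finite, so the corollary is exactly its contrapositive, combined with the trivial observation that the expectation of a non-negative random variable lies in $[0,\infty]$ and hence equals $\infty$ as soon as it fails to be finite.
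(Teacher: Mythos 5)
Your proof is correct and is exactly the argument the paper intends: the corollary is the contrapositive of the preceding lemma (which the paper leaves implicit by stating it as a corollary without proof), plus the trivial observation that a non-negative expectation that is not finite must equal $\infty$. Nothing to add.
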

Next, we prove that if there is a non-negative solution to \cref{eq:transient}, then the expected multiplicative reward is finite.
In fact, the vector of values for each state is the component-wise least solution to these constraints.
As the values for states in $\States_?$ are strictly positive, this shows that $(v_s)_{s \in \States_?}$ is the unique solution.
To prove that the vector of values for each state is the component-wise least solution, we define the update operator $U \colon \RealsNonneg^{\States_?} \to \RealsNonneg^{\States_?} $ derived from the constraint as $U \colon (x_s)_{s \in \States_?} \mapsto (y_s)_{s \in \States_?}$ where $y_s = \reward(s) \cdot \left({\sum}_{s' \in \States_?} \mctransitions(s, s') \cdot x_{s'} + {\sum}_{t \in T} \mctransitions(s,t) \cdot v_t \right)$.
Any non-negative solution to \cref{eq:transient} is a fixed point of $U$ and vice-versa.

\begin{restatable}{lemma}{lemmcvalueiteration} \label{thm:valueiteration}
	We have $(v_s)_{s \in \States_?} = \lim_{n\to \infty} U^n(0)$.
\end{restatable}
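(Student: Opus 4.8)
The plan is to show that $\lim_{n\to\infty}U^n(0)$ equals the vector of values by a standard two-sided argument, using monotonicity of $U$ and a probabilistic interpretation of the iterates. First I would observe that $U$ is monotone (all coefficients $\reward(s)$, $\mctransitions(s,s')$, $\mctransitions(s,t)$ and the constants $v_t$ are non-negative) and that $U(0)\geq 0$, so the sequence $U^n(0)$ is component-wise non-decreasing; hence the limit $w = \lim_n U^n(0)$ exists in $[0,\infty]^{\States_?}$. By continuity of $U$ on this domain (it is affine), $w$ is a fixed point of $U$, i.e.\ a non-negative solution to \cref{eq:transient}. This already gives $w \geq (v_s)_{s\in\States_?}$ is \emph{not} immediate; instead I would first prove $w \leq (v_s)$ and then $w \geq (v_s)$.

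For the upper bound $w \leq (v_s)$, I would unfold the iteration: by induction, $U^n(0)_s$ equals the expected multiplicative reward contributed by paths from $s$ that reach $T$ within $n$ steps. Concretely, expanding \cref{eq:transient} $n$ times yields
\[
	U^n(0)_s = \ExpectationMC<\MC, s>[\multreward^{k}_{\reward} \cdot \indicator{\reach T \text{ within } n \text{ steps}}],
\]
where the path is stopped at its first entry into $T$ at step $k$ and the value $v_t$ of that BSCC state is folded in (using that once in a BSCC with finite value, the conditional expected remaining multiplicative reward from $t$ is exactly $v_t$, by the strongly-connected analysis and the strong Markov property). Each such quantity is at most $\ExpectationMC<\MC, s>[\pathreward] = v_s$, since $\pathreward$ dominates the contribution of any stopped prefix (all rewards are positive, so partial products along a path are pointwise below the $\limsup$ on that path when restricted to the event of reaching $T$, and on paths reaching $T$ the multiplicative reward stabilizes up to the bounded fluctuation inside the BSCC — this is where I must be slightly careful, folding the BSCC's own $\pathreward$-value into $v_t$ rather than a prefix product). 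Taking $n\to\infty$ gives $w_s \leq v_s$, so in particular $w_s < \infty$.

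For the matching lower bound, since almost every path reaches some BSCC (here I use that $\MC$ is finite and that $\States_? = S \setminus(\States_0 \cup T)$ consists of transient states from which $T$ is reached almost surely with the complement $\States_0$ absorbed into $T_0$ — actually paths from $\States_?$ reach $T$ a.s.), monotone convergence gives
\[
	v_s = \ExpectationMC<\MC, s>[\pathreward] = \lim_{n\to\infty} \ExpectationMC<\MC, s>[\multreward^{k}_{\reward}\cdot v_{t} \cdot \indicator{\reach T \text{ within } n \text{ steps}}] = \lim_{n\to\infty} U^n(0)_s = w_s,
\]
where the middle equality is the content of the previous lemma (the stopped identity) together with dominated/monotone convergence justified by $v_s < \infty$ from the hypothesis that \cref{eq:transient} has a non-negative solution (combined with the upper bound already shown). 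The main obstacle I anticipate is making the stopped-identity $U^n(0)_s = \ExpectationMC<\MC, s>[\multreward^{k}_{\reward}\cdot v_{\last{}}\cdot\indicator{\reach T\text{ in }\leq n}]$ fully rigorous — in particular justifying that the expected multiplicative reward accrued \emph{after} entering a BSCC $C$ at state $t$, conditioned on being at $t$, is exactly $v_t = \ExpectationMC<\MC, t>[\pathreward]$ (using \cref{lem:finitevaluesMC} and the Markov property), and that the $\limsup$ defining $\pathreward$ on a full path factors as the prefix product up to entry into $C$ times the "internal" $\pathreward$ of the suffix inside $C$. Once that factorization is in place, the monotone-convergence sandwich closes the proof.
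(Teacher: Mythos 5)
Your proposal is correct and follows essentially the same route as the paper: the key step in both is the identity $U^n(0)_s = \int_{\lozenge^{\leq n} T} \pathreward \,d\ProbabilityMC<\MC, s>$ (proved by induction, folding the BSCC values $v_t$ and the factorization of the $\limsup$ into the stopped expectation), followed by monotone convergence using $\ProbabilityMC<\MC, s>[\lozenge T] = 1$. Your separate upper- and lower-bound steps collapse into this single sandwich once the stopped identity is in place, so the two-sided framing is harmless but redundant.
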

\begin{restatable}{corollary}{cormcleast} \label{cor:least}
	Suppose there is a non-negative solution $x^\ast$ to \cref{eq:transient}. 
	Then, for all $s \in S$, we have $v_s \leq x^{\ast}_s$.
\end{restatable}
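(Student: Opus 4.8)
The plan is to derive the corollary directly from \cref{thm:valueiteration} together with the monotonicity of the update operator $U$. First I would observe that $U$ is \emph{monotone} on $\RealsNonneg^{\States_?}$: since $\reward(s) > 0$ and all transition probabilities $\mctransitions(s,\cdot)$ as well as the BSCC values $v_t$ (for $t \in T$) are non-negative, $x \leq y$ componentwise implies $U(x) \leq U(y)$ componentwise. Moreover, as noted in the discussion preceding \cref{thm:valueiteration}, a non-negative vector solves \cref{eq:transient} if and only if it is a fixed point of $U$; in particular $U(x^\ast) = x^\ast$.

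Next I would run an induction on $n \in \Naturals$. Since $0 \leq x^\ast$ componentwise, monotonicity of $U$ gives $U^{n}(0) \leq U^{n}(x^\ast) = x^\ast$ for every $n$. Passing to the limit and using \cref{thm:valueiteration}, which states $(v_s)_{s \in \States_?} = \lim_{n \to \infty} U^n(0)$, the componentwise inequality $U^n(0) \leq x^\ast$ is preserved in the limit (the order relation $\leq$ on $\RealsNonneg^{\States_?}$ is closed), so $v_s \leq x^\ast_s$ for all $s \in \States_?$. For the remaining states there is nothing to do: states in $T$ already have their values fixed by the analysis of strongly connected Markov chains, and for $s \in \States_0$ we have $v_s = 0$ by \cref{lem:zeros}, so the bound is trivial.

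I do not expect any real obstacle here; the substantive work is already contained in \cref{thm:valueiteration}. The only point deserving a word of care is that $U$ need not be a contraction, so one cannot invoke a Banach-style uniqueness argument — the proof must instead route through the explicit bottom-up characterization $(v_s)_{s\in\States_?} = \lim_n U^n(0)$ combined with monotonicity, exactly as sketched above. Together with \cref{thm:valueiteration} this also yields that $(v_s)_{s \in \States_?}$ is the componentwise least non-negative solution of \cref{eq:transient}, and, since these values are strictly positive on $\States_?$, the unique one.
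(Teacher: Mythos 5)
Your proof is correct and takes essentially the same route as the paper: the paper cites the Kleene fixed-point theorem to conclude from \cref{thm:valueiteration} that $(v_s)_{s\in \States_?}$ is the least fixed point of $U$, whereas you simply inline that argument ($U^n(0) \leq U^n(x^\ast) = x^\ast$ by monotonicity, then pass to the limit), which proves exactly the inequality needed. If anything, your version is the more careful one, since it needs only monotonicity of $U$ rather than the Scott-continuity the paper invokes.
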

\begin{restatable}{corollary}{cormcunique}
	If $\ExpectationMC<\MC, \initialstate>[\pathreward] < \infty$, then $(v_s)_{s \in \States_?}$ is the unique solution to \cref{eq:transient}.
\end{restatable}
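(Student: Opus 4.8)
The plan is to combine the preceding lemma — which already gives that $v := (v_s)_{s \in \States_?}$ \emph{is} a non‑negative solution of \cref{eq:transient} — with a maximum‑ratio argument showing that the associated homogeneous system has only the trivial solution, so that \cref{eq:transient} has a \emph{unique} solution. First I would record two facts about $v$ under the hypothesis $\ExpectationMC<\MC, \initialstate>[\pathreward] < \infty$: every $v_s$ is finite (a reachable state with value $\infty$ would force $v_{\initialstate}=\infty$, as observed above), and every $v_s$ with $s \in \States_?$ is strictly positive, since $\States_? = \States \setminus (\States_0 \union T)$ and $\States_0 = \{s \mid v_s = 0\}$. Next, given an arbitrary solution $x^\ast$ of \cref{eq:transient}, put $y := x^\ast - v$. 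Subtracting the instances of \cref{eq:transient} for $x^\ast$ and for $v$ makes the $\sum_{t \in T}\mctransitions(s,t)v_t$ terms cancel, so $y$ satisfies the homogeneous system $y_s = \reward(s){\sum}_{s' \in \States_?}\mctransitions(s,s')\,y_{s'}$ for all $s \in \States_?$; by linearity $-y$ does too. It then remains to show $y = 0$.

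Assume $y \neq 0$. Since the homogeneous system is invariant under $y \mapsto -y$, we may assume $c := \max_{s \in \States_?} y_s / v_s > 0$, which is a well‑defined finite positive number because $\States_?$ is finite and each $v_s \in (0,\infty)$. Let $Z := \{s \in \States_? \mid y_s = c\,v_s\} \neq \emptyset$. For $s \in Z$, inserting $y_{s'} \leq c\,v_{s'}$ into the homogeneous equation and then using \cref{eq:transient} for $v$ gives
\[
	c\,v_s \;=\; \reward(s){\sum}_{s' \in \States_?}\mctransitions(s,s')\,y_{s'} \;\leq\; c\,\reward(s){\sum}_{s' \in \States_?}\mctransitions(s,s')\,v_{s'} \;=\; c\Big(v_s - \reward(s){\sum}_{t \in T}\mctransitions(s,t)\,v_t\Big).
\]
As $c,\reward(s) > 0$ and $v_t \geq 0$, this forces ${\sum}_{t \in T}\mctransitions(s,t)\,v_t = 0$ — hence $v_t = 0$, i.e.\ $t \in T_0$, for every $t \in T$ with $\mctransitions(s,t) > 0$ — and it collapses the inequality to an equality, which forces $c\,v_{s'} = y_{s'}$, i.e.\ $s' \in Z$, for every $s' \in \States_?$ with $\mctransitions(s,s') > 0$.

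Consequently every successor of a state of $Z$ lies in $Z \union \States_0$: successors in $\States_?$ are in $Z$, successors in $T$ are in $T_0 \subseteq \States_0$, and any remaining successor is already in $\States_0$ (using $\States = \States_? \union T \union \States_0$). So $Z \union \States_0$ cannot be left, and since $Z \subseteq \States_?$ is a finite set of \emph{transient} states, from any $s \in Z$ the chain almost surely leaves $Z$ and can do so only into $\States_0$; thus $\ProbabilityMC<\MC, s>[\reach \States_0] = 1$. Because every state of $\States_0$ reaches $T_0$ almost surely (\Cref{lem:zeros}), also $\ProbabilityMC<\MC, s>[\reach T_0] = 1$, whence $v_s = 0$ by \Cref{lem:zeros} — contradicting $s \in \States_?$. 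Hence $y = 0$, i.e.\ $x^\ast = v$. The main obstacle is exactly the middle step: seeing that the set $Z$ of ratio‑maximizers is closed under the chain's transitions modulo $\States_0$ and that no state of $Z$ can route any probability into a BSCC of positive value, so that the transience of $\States_?$ yields the contradiction; the surrounding algebra is routine.
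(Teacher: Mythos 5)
Your proof is correct, but it takes a genuinely different route from the paper's. The paper first establishes (via value iteration and the Kleene/Knaster--Tarski machinery) that $(v_s)_{s\in\States_?}$ is the componentwise \emph{least} non-negative solution (\Cref{cor:least}), and then gets uniqueness almost for free: for any other solution $y^\ast$ the difference $\Delta=y^\ast-v$ can be scaled, using linearity and the strict positivity of $v$ on $\States_?$, into a non-negative solution $v+t\Delta$ that does not dominate $v$, contradicting minimality. You instead bypass \Cref{cor:least} entirely and argue directly on the homogeneous system satisfied by $x^\ast-v$, via a maximum-ratio (maximum-principle) argument: the set $Z$ of ratio-maximizers is closed modulo $\States_0$ and routes no probability into $T\setminus T_0$, so transience of $\States_?$ together with \Cref{lem:zeros} forces $v_s=0$ on $Z$, a contradiction. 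Both arguments rest on the same two structural facts ($v_s>0$ on $\States_?$ and the a.s.\ absorption of transient states into BSCCs), but the paper's version is shorter given the already-proved least-fixed-point characterization, while yours is self-contained at the level of this corollary and makes the combinatorial reason for uniqueness explicit (a non-trivial kernel element would have to hide all its mass in the value-zero part of the chain). Each step of your algebra — the cancellation of the $\sum_{t\in T}\mctransitions(s,t)v_t$ terms, the sign normalization via $y\mapsto-y$, and the collapse of the inequality to equality on the support of $\mctransitions(s,\cdot)$ — checks out.
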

In summary, we obtain:
\begin{theorem}
	Given the values $v_t$ for all $t\in T$,
	either \cref{eq:transient} has no non-negative solution and $\ExpectationMC<\MC, \initialstate>[\pathreward] = \infty$,
	or \cref{eq:transient} has a unique non-negative solution, which is a strictly positive vector of finite values, and this solution is $(v_s)_{s\in S_?}$.
\end{theorem}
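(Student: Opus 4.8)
The statement is a synthesis of the lemmas and corollaries of this subsection, so the plan is to glue them together along the dichotomy ``$\ExpectationMC<\MC, \initialstate>[\pathreward] < \infty$'' versus ``$\ExpectationMC<\MC, \initialstate>[\pathreward] = \infty$''. First I would record that these two cases are exactly the existence, respectively non-existence, of a non-negative solution of \cref{eq:transient}. If $x^\ast \in \RealsNonneg^{\States_?}$ is any non-negative solution, then \cref{cor:least} gives $v_s \le x^\ast_s < \infty$ for every $s \in \States_?$; together with the standing assumption that every $v_t$ with $t \in T$ is finite and $v_s = 0$ for $s \in \States_0$, this makes all values finite, in particular $v_{\initialstate} < \infty$. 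Conversely, the lemma asserting that finiteness of the value turns $(v_s)_{s\in\States_?}$ into a solution shows that $\ExpectationMC<\MC, \initialstate>[\pathreward] < \infty$ implies that \cref{eq:transient} has a non-negative solution. Hence the first alternative of the statement -- no non-negative solution and value $\infty$ -- is exactly the corollary stating that absence of a non-negative solution forces $\ExpectationMC<\MC, \initialstate>[\pathreward] = \infty$.

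It then remains to handle the case $\ExpectationMC<\MC, \initialstate>[\pathreward] < \infty$ and establish the second alternative. By the lemma just mentioned, $(v_s)_{s\in\States_?}$ is a non-negative solution of \cref{eq:transient}, and, as argued above, each $v_s$ is finite. For strict positivity, fix $s \in \States_? = \States \setminus (\States_0 \union T)$; since $s \notin \States_0$, \cref{lem:zeros} gives $\ProbabilityMC<\MC, s>[\reach T_0] < 1$, so with positive probability the chain reaches a BSCC of strictly positive (finite) value, and decomposing $\ExpectationMC<\MC, s>[\pathreward]$ over the first such BSCC entered -- using that the multiplicative reward of every finite prefix is strictly positive because all rewards lie in $\Rationals_{>0}$ -- gives $v_s > 0$. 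Uniqueness is the content of the uniqueness corollary for transient states, so $x^\ast = (v_s)_{s\in\States_?}$; this also makes the two alternatives exhaustive and mutually exclusive. Finally, by construction this unique solution is precisely the vector $(v_s)_{s\in\States_?}$ of values we set out to compute.

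The only ingredient with real content is the uniqueness corollary, and that is the step I would expect to require care if one reproved everything from scratch: writing the affine operator as $U(x) = Mx + b$ with $M_{s,s'} = \reward(s)\,\mctransitions(s,s')$ and $b_s = \reward(s)\sum_{t\in T}\mctransitions(s,t)\,v_t$, any non-negative solution $x^\ast$ satisfies $x^\ast - U^n(0) = M^n x^\ast$, so by \cref{thm:valueiteration} the difference $y \coloneqq x^\ast - (v_s)_{s\in\States_?}$ is a finite non-negative fixed point of the homogeneous system $y = My$. The crux is that this homogeneous system has only the trivial solution, which follows because $\States_?$ is left within finitely many steps almost surely (every state reaches a BSCC a.s.), forcing $M^n \to 0$ entrywise and hence $y = M^n y \to 0$. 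Everything else in the theorem is bookkeeping over statements already proved.
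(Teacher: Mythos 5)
Your assembly of the theorem from the surrounding lemmas is exactly what the paper does (the paper states the theorem as a summary with no separate proof): the dichotomy via the existence of a non-negative solution, finiteness from \cref{cor:least}, strict positivity from $\Probability[\reach T_0]<1$ and positivity of all rewards, and uniqueness from the dedicated corollary. Up to that point the argument is correct.

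However, the from-scratch justification you offer for the uniqueness step --- which you yourself single out as the only part with real content --- does not work. You claim that $M^n \to 0$ entrywise, with $M_{s,s'} = \reward(s)\,\mctransitions(s,s')$, ``because $\States_?$ is left within finitely many steps almost surely.'' That reasoning only controls the unweighted substochastic matrix $\mctransitions$ restricted to $\States_?$; the matrix $M$ carries the reward factors, which may exceed $1$, and its powers need not vanish. The paper's own example with $\mctransitions(x,x)=\tfrac12$ and $\reward(x)=4$ gives $M = 2$ and $M^n = 2^n \to \infty$ even though $x$ is left almost surely, so your stated reason would ``prove'' a false statement. One can in fact show that $M^n \to 0$ \emph{under the additional hypothesis that a non-negative solution exists}, but the only route I see to that conclusion is via the very facts your sketch is trying to bypass: if $\rho(M)\geq 1$, Perron--Frobenius yields a non-negative fixed point $u \neq 0$ of $y = My$, and then $v - t u$ for small $t>0$ is a non-negative solution of \cref{eq:transient} that is not $\geq v$, contradicting \cref{cor:least} (this uses strict positivity of $v$). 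That perturbation argument --- take any second solution, note that the solution set is an affine line, and slide along it until you violate minimality --- is precisely how the paper proves the uniqueness corollary, and it is the argument you should substitute for the $M^n \to 0$ claim.
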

So, after analyzing BSCCs as described in the previous section in polynomial time with a CSRI-oracle, all remaining steps can be carried out in polynomial time.
If an absorbing state is reached almost surely, there are no BSCCs except for the absorbing states and so the values can be computed in polynomial time. Furthermore, if an absorbing state is reached almost surely, it is reached almost surely in finitely many steps so that $\pathreward$ and $\pathrewardinf$ almost surely coincide. This
 finishes the proof of \Cref{thm:mainresultMC}.

We conclude with the following observation: When \cref{eq:transient} has no non-negative solution, it can happen that the equation has no, one, or infinitely many solutions.
\begin{example}
	\emph{No solution:} Consider an MC with two states $x$ and $t$, transitions $\mctransitions(x, x) = \mctransitions(x, t) = \frac{1}{2}$ and $\mctransitions(t, t) = 1$, and rewards $\reward(x) = 2$, $\reward(t) = 1$.
	State $t$ comprises the only BSCC and is assigned a value of $v_t = 1$.
	\cref{eq:transient} for $s$ yields $x = x + 1$, which has no (finite) solution.

	\emph{One solution:} Modify the above MC to $\reward(x) = 4$, yielding the equation $x = 2 x + 1$, which has $x = -1$ as unique solution.

	\emph{Infinitely many solutions:} Consider an MC with three states $x$, $y$, and $t$, transitions $\mctransitions(x, x) = \mctransitions(y, y) = \frac{1}{2}$, $\mctransitions(x, y) = \mctransitions(x, t) = \mctransitions(y, x) = \mctransitions(y, t) = \frac{1}{4}$, and $\mctransitions(t, t) = 1$, and rewards $\reward(x) = \reward(y) = 4$ and $\reward(t) = 1$.
	As before, $t$ is the only BSCC and has value $v_t = 1$.
	\cref{eq:transient} yields $x = 2 x + y + 1$ and $y = 2 y + x + 1$, and any $x + y = -1$ is a solution.
\end{example}
Finally, we remark how to treat states with reward $0$.
\begin{remark}
\label{rem:zero}
	W.l.o.g., states with $\reward(s)=0$ can be made absorbing:
	As soon as such a state is visited, any path can only obtain a reward of zero.
	After removing any now unreachable states, we can proceed as above, treating zero-states just like BSCCs with reward 0 (i.e.\ $v_s = 0$) when building \cref{eq:transient}.
\end{remark}

\section{Markov Decision Processes}

In this section, we turn our attention to MDPs.
Let $\MDP = (\States, A, \mdptransitions)$ be an MDP, $\reward \colon \States \to \Reals_{>0}$ a reward function, and $\initialstate \in \States$ an initial state.
The multiplicative rewards of paths are defined as for Markov chains.
We investigate the maximization problem for the expected multiplicative reward. Given an MDP $\MDP$, reward function $\reward$, and initial state $\initialstate$, we are interested in the following quantities
where $\Strategies<\MDP>$  is the set of schedulers:
\begin{align*}
	\ExpectationMDP<\MDP, \initialstate><\max>[\pathreward] & \coloneqq {\sup}_{\strategy \in \Strategies<\MDP>} \ExpectationMDP<\MDP, \initialstate><\strategy>[\pathreward] \quad \text{ and } \\
	\ExpectationMDP<\MDP, \initialstate><\max>[\pathrewardinf] & \coloneqq {\sup}_{\strategy \in \Strategies<\MDP>} \ExpectationMDP<\MDP, \initialstate><\strategy>[\pathrewardinf]
\end{align*}
For minimization, we can consider the reward function $\reward'$ where $\reward'(s) = \frac{1}{\reward(s)}$, switch between $\pathreward$ and $\pathrewardinf$, and then invert the result again (observe that $\reward(s) > 0$ by assumption).
This is analogous to negating rewards in the additive case.

The results of this section can be summarized as follows:
\begin{theorem}[Main result for MDP]
Let $\MDP=(\States, A, \mdptransitions)$ be  an MDP, $\initialstate\in \States$ an initial state, and $\reward\colon \States \to \Rationals_{>0}$ a (rational) reward function. Then:
\begin{enumerateindent}
\item The values $\ExpectationMDP<\MDP, \initialstate><\max>[\pathreward]$  and  $\ExpectationMDP<\MDP, \initialstate><\max>[\pathrewardinf] $ can be computed in polynomial space.
\item If in $\MDP$ an absorbing state is reached almost surely under any scheduler, the values $\ExpectationMDP<\MDP, \initialstate><\max>[\pathreward] $ and $\ExpectationMDP<\MDP, \initialstate><\max>[\pathrewardinf] $
are equal and can be computed in polynomial time.
\item There is an MD-scheduler $\sched$ such that $\ExpectationMDP<\MDP, \initialstate><\max>[\pathreward] = \ExpectationMDP<\MDP, \initialstate><\sched>[\pathreward]$.
\item If  $\ExpectationMDP<\MDP, \initialstate><\max>[\pathrewardinf] < \infty$, there is an MD-scheduler $\sched$ such that $\ExpectationMDP<\MDP, \initialstate><\max>[\pathrewardinf] = \ExpectationMDP<\MDP, \initialstate><\sched>[\pathrewardinf]$.
\item
The threshold problem -- deciding whether $\ExpectationMDP<\MDP, \initialstate><\max>[\pathreward]\geq \vartheta$ (and $\ExpectationMDP<\MDP, \initialstate><\max>[\pathrewardinf]\geq \vartheta$) for a given threshold $\vartheta \in \Rationals$ -- is in $\text{NP}^{\text{CSRI}}$ and hence in NP if $\text{CSRI}\subseteq \text{P}$.
\item Additionally given an absorbing target state $t\in \States$, the \emph{multiplicative stochastic shortest path problem} asking for the value
$\sup_{\sched}\ExpectationMDP<\MDP, \initialstate><\sched>[\pathreward]$ where the supremum ranges over all schedulers $\sched$ that reach $t$ almost surely can be solved in polynomial space.
\end{enumerateindent}
\end{theorem}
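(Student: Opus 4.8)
\emph{Overall strategy.}
As in \Cref{sec:MC} I pass to the additive reward $\log(\reward)$ along paths and split the task into an analysis of the recurrent behaviour inside the maximal end components (MECs) and a transient analysis that propagates the MEC-values towards $\initialstate$. The recurring subtlety, already visible in \Cref{rem:infinitevszero}, is that the expectation does not commute with the $\limsup$/$\liminf$, so ``divergence'' cannot be read off from the mean payoff alone.

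\emph{Step~1: end components.}
For a MEC $M$ I first decide whether the supremum of $\ExpectationMDP<\MDP, s><\sched>[\pathreward]$ over schedulers that stay in $M$ forever is $\infty$ (call $M$ \emph{diverging}), and otherwise compute it. Lifting \Cref{lem:meanpayoffMC} and \Cref{lem:zeromeanpayoff} to MDPs, $M$ is diverging for $\pathreward$ exactly when some MD-scheduler on $M$ induces a reachable BSCC whose logarithmic mean payoff is positive, or is zero while the BSCC contains an n-cycle; for $\pathrewardinf$ only the first alternative matters, i.e.\ an MD-scheduler realising a positive logarithmic mean payoff. Deciding the first alternative is mean-payoff maximisation for $\log(\reward)$, which by the argument behind \Cref{prop:logMP_MC} reduces to CSRI; the second is a graph search for an n-cycle inside a zero-mean recurrent class, and here the end-component characterisation of \cite{BBDGS2018} does \emph{not} transfer, so I expect to need a new argument. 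If a diverging MEC is reachable, then, since all rewards are positive and the MEC is reachable with positive probability, $\ExpectationMDP<\MDP, \initialstate><\max>[\pathreward] = \infty$ (and likewise for $\pathrewardinf$). Otherwise every MEC carries a finite internal optimum, computable via the MDP counterpart of \Cref{lem:finitevaluesMC}, and, adapting the end-component-removal construction of \cite{BBDGS2018}, I collapse each MEC to a single absorbing state whose reward records this optimum while keeping the exit actions available.

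\emph{Step~2: transient analysis and memoryless optimality.}
On the reduced MDP I consider, for the non-MEC states $\States_?$, the $\max$-version of \cref{eq:transient}, namely $x_s = \reward(s)\cdot\max_{\alpha\in\stateactions(s)}\bigl(\sum_{s'\in\States_?}\mdptransitions(s,\alpha,s')\,x_{s'} + \sum_{t\in T}\mdptransitions(s,\alpha,t)\,v_t\bigr)$ with $v_t$ the (finite) collapsed-MEC values. As in \Cref{thm:valueiteration}, $(\ExpectationMDP<\MDP, s><\max>[\pathreward])_{s\in\States_?}$ is the least non-negative fixed point of the associated operator, equal to $\lim_{n\to\infty}U^n(0)$, and an MD-scheduler choosing in each state a maximising action attains it; this is positional determinacy of the ``multiplicative reachability reward'', proved along the lines of optimal stochastic shortest path (super-solutions dominate all schedulers via a supermartingale argument, the greedy scheduler matches the least fixed point). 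If the least fixed point has an infinite coordinate, then $\ExpectationMDP<\MDP, \initialstate><\max>[\pathreward] = \infty$; this happens precisely when the reduced graph has a reachable cycle $C$ with $\prod_{(s,\alpha,s')\in C}\mdptransitions(s,\alpha,s')\,\reward(s) \geq 1$ from which a state of positive value is reachable (the MDP analogue of the phenomenon of \Cref{fig:divergence}), and, since the witnessing walks have length at most $\abs{\States}$ so that only products of polynomially many rationals occur, this is detectable in polynomial time by a Bellman--Ford-style non-negative-cycle search. Combining Steps~1 and~2 produces an MD-scheduler attaining $\ExpectationMDP<\MDP, \initialstate><\max>[\pathreward]$ (part~3), and, using the $\min_{t} R(\initialstate,t)$ description inside MECs from \Cref{lem:finitevaluesMC}, the same when $\ExpectationMDP<\MDP, \initialstate><\max>[\pathrewardinf] < \infty$ (part~4).

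\emph{Complexity and the main obstacle.}
For part~1 every step above is in polynomial space: MEC decomposition and the linear/LP systems are in $\text{P}$, while the mean-payoff, n-cycle and threshold comparisons are in $\text{CSRI} \subseteq \text{PSPACE}$; alternatively, since MD-schedulers suffice, one enumerates them one at a time, evaluates each with the Markov-chain algorithm of \Cref{sec:MC} (which is in $\text{P}^{\text{CSRI}} \subseteq \text{PSPACE}$), takes the maximum, and handles the $\infty$ cases as above. For part~2, if an absorbing state is reached almost surely under every scheduler then the only end components are the absorbing states, whose values lie in $\{0,1,\infty\}$ and are obtained by comparing their reward to $1$, so no logarithmic quantities arise, the non-negative-cycle test and the $\max$-version of \cref{eq:transient} are both solvable in polynomial time, and since the product converges once an absorbing state is reached we have $\pathreward = \pathrewardinf$ almost surely, whence the two values coincide. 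For part~5 one guesses an MD-scheduler $\sched$ of polynomial size and verifies $\ExpectationMDP<\MDP, \initialstate><\sched>[\pathreward] \geq \vartheta$ with the Markov-chain algorithm and a CSRI-oracle (for the answer $\infty$, instead guess a reachable diverging MEC together with a mean-payoff-optimal MD sub-scheduler, or a cycle of weight product $\geq 1$, and verify in $\text{P}^{\text{CSRI}}$), placing the threshold problem in $\text{NP}^{\text{CSRI}}$. For part~6, a scheduler reaching $t$ almost surely cannot remain in any end component other than a sink, so non-target end components are traversed only transiently, where reward may still be pumped exactly as in Step~2; running the same machinery restricted to proper schedulers gives the multiplicative stochastic shortest path value in polynomial space. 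The hard part is Step~1: proving that the diverging MECs are exactly those of the two stated kinds --- in particular that a zero-logarithmic-mean-payoff recurrent class containing an n-cycle forces $\pathreward = \infty$ while staying compatible with a finite $\pathrewardinf$ --- and giving a self-contained algorithm that identifies them, since the characterisation of \cite{BBDGS2018} was tailored to additive rewards; the remaining reductions (end-component collapse, the $\max$-version of \cref{eq:transient}, memoryless optimality, the CSRI/PSPACE bookkeeping) are then routine adaptations of the Markov-chain development and of classical stochastic-shortest-path arguments.
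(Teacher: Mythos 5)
There are two concrete places where your plan would compute wrong answers, plus one admitted hole. First, your treatment of $\pathrewardinf$ inside MECs is incomplete: you only classify a MEC as diverging for $\pathrewardinf$ when a positive logarithmic mean payoff is achievable, and otherwise collapse the MEC to its ``stay forever'' optimum. But a max-BSCC with logarithmic mean payoff zero that contains an n-cycle has internal $\pathrewardinf$-value $0$, while the true value $\ExpectationMDP<\MDP, s><\max>[\pathrewardinf]$ is $\infty$ whenever an absorbing state with reward $1$ is reachable from it: a scheduler can gamble inside the random walk until an arbitrarily high threshold is hit (which happens almost surely) and only then exit (\cref{rem:infimum_gambling_infinite}, \cref{lem:ECliminf}; see \cref{fig:inf_special}). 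Your collapse discards the reward accumulated before exiting, so on that example you would output a finite value instead of $\infty$; this is also exactly the case where no MD-scheduler is optimal, which your Step~2 cannot produce. Second, your detection of transient divergence via a single cycle $C$ with $\prod_{(s,\alpha,s')\in C}\mdptransitions(s,\alpha,s')\reward(s)\geq 1$ is not a correct characterization. Divergence of $\sum_n M^n$ for $M_{ss'}=\reward(s)\mdptransitions(s,\alpha,s')$ is a spectral-radius condition: take two transient states each with reward $2.2$, self-loop probability $0.27$, cross probability $0.23$, and the rest to the target; every simple cycle has product below $1$ (self-loops $\approx 0.59$, the $2$-cycle $\approx 0.26$), yet the spectral radius is $\approx 1.1$ and the value diverges. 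The paper therefore detects this case through (in)feasibility of the linear program \ref{eq:mdp_lp} rather than a cycle search.

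Finally, the step you flag as ``the hard part'' is indeed where the paper's main technical work lies, and it is not a routine lift of the Markov-chain argument: deciding whether a strongly connected MDP admits an MD-scheduler inducing only 1-cycles is NP-complete in general (by reduction from subset product), and becomes polynomial only after restricting to the mean-payoff-optimal actions $\Actions^{\max}$, where the expected logarithmic reward $Q(s,t)$ between states is scheduler-independent (\cref{lem:reward_to_is_unique}) and the offending actions can be identified locally via $\Actions^\ast$ (\cref{stm:identify_good_actions}), with the comparisons reducing to ESRI rather than CSRI. Relatedly, your ``collapse each MEC to a single absorbing state'' must be replaced by the adapted spider construction, which keeps every exit $(t,\beta)$ available with reward corrected by $R(s_E,t)$, precisely because distinct states of a non-gambling BSCC carry distinct accumulated rewards; and for $\pathrewardinf$ this construction is only sound for BSCCs, not arbitrary non-gambling ECs. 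Parts~2, 5 and~6 of your outline are essentially the paper's arguments, but parts~1, 3 and~4 do not go through as written.
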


In order to obtain these results, we take the following steps:

\emph{Step 1:}
	First, we take a closer look at end components (ECs).
	Similar to the case of Markov chains, the maximal expected mean payoff with respect to the logarithmic rewards $\log(\reward)$ allows us to detect cases in which the values are $0$ or $\infty$.

\emph{Step 2:}
	In case the maximal expected logarithmic mean payoff in an EC is $0$, we distinguish whether the EC contains a BSCC in which all cycles are 1-cycles (\emph{non-gambling}) and whether the EC contains a mean payoff-optimal BSCC with n-cycles (\emph{gambling}).
	Then, we present an adapted version of the spider construction of \cite{BBDGS2018} allowing the iterative removal of all non-gambling BSCCs.

\emph{Step 3:}
	After pre-processing ECs by checking that no states  have value $\infty$ due to \emph{recurrent} behavior inside the ECs and iteratively removing non-gambling BSCCs, we provide a linear program (LP) whose solution contains the optimal values of all states.
	The correctness is shown by proving that the optimal values are the least fixed point of the update operator associated with the LP.
	If the LP has no solution, the value of the initial state is $\infty$ due to \emph{transient} behavior of the MDP.

\emph{Step 4:}
	An optimal MD-scheduler can be extracted from the solution to the LP and traced back to an MD-scheduler for the original MDP.
	The only case in which there is no optimal MD-scheduler is if the value $\ExpectationMDP<\MDP, \initialstate><\max>[\pathrewardinf] = \infty$ due to a gambling BSCC that can be left with arbitrarily high multiplicative reward before moving to a non-gambling BSCC (or an absorbing state with value $1$ as a special case).

\begin{figure}
	\centering
	
	\begin{tikzpicture}[auto,initial text=]
		\node[state,initial=left] at (0,0) (s1) {$s_1, 2$\strut};
		\node[state] at (2,0) (s2) {$s_2, \tfrac{1}{2}$};
		\node[state] at (4,0) (s3) {$s_3, 1$};

		\node[actionnode] at (1, 0.5) (a11) {};
		\node[actionnode] at (1, -0.5) (a21) {};

		\path[->,probedge]
			(a11) edge[out=80,in=120,pos=0.8] node[prob] {0.5} (s2)
			(a11) edge[out=100,in=60,pos=0.8,swap] node[prob] {0.5} (s1)
			(a21) edge[out=-100,in=-60,pos=0.8] node[prob] {0.5} (s1)
			(a21) edge[out=-80,in=-120,pos=0.8,swap] node[prob] {0.5} (s2)
		;
		\path[-,actionedge]
			(s1) edge[out=10,in=-100] (a11)
			(s2) edge[out=190,in=80] (a21)
		;
		\path[directedge]
			(s2) edge[dashed] node[prob] {1.0} (s3)
			(s3) edge[loop right] node[prob] {1.0} (s3)
		;
	\end{tikzpicture}
	\caption{
		MDP to illustrate the special case of $\pathrewardinf$.
	} \label{fig:inf_special}
\end{figure}
\begin{remark} \label{rem:infimum_gambling_infinite}
	Before we continue, we illustrate the latter case, i.e.\ the infimum being $\infty$ due to a gambling BSCC with an exit.
	Consider the MDP in \cref{fig:inf_special}, first without the dashed action (equalling the right MC from \cref{fig:logzeromp_example}).
	As seen before, due to the \enquote{random walk without drift}-nature of this system, the infimum of the multiplicative reward is 0 for almost all paths.
	However, once the dashed action is added, a scheduler can at any time choose to \enquote{secure} the rewards of the current path.
	Now, the random walk actually plays in favour of the scheduler:
	By remembering the current rewards, the scheduler can wait until an arbitrary threshold is reached (which eventually happens almost surely) and then take the exiting action.
	As such, the supremum over all schedulers is infinite.
	We later also show that there exists a concrete (infinite memory) witness and that (as expected) no memoryless scheduler can obtain infinite reward here.
\end{remark}

All proofs for this section can be found in \Cref{appendix:proofsmdp}.

\subsection{Analysis and pre-processing of ECs}
\label{sec:ECs}

Starting with an arbitrary MDP, we first analyze the ECs by looking at all maximal ECs (MECs), i.e.\ ECs not contained in another EC, separately.
Each MEC is a strongly connected sub-MDP.
So, to ease notation, we  take a single strongly connected MDP $\MDP = (\States, \Actions, \mdptransitions)$ with reward function $\reward \colon \States \to \RealsPositive$ as given in the sequel.

\subsubsection{Logarithmic mean payoff}

Analogously to \Cref{lem:meanpayoffMC} for Markov chains, we obtain the following result:
\begin{restatable}{lemma}{mpMDP} \label{lem:MP_MDP}
	Let $\MDP$ and $\reward$ be as above and let $\initialstate\in \States$ be a state of $\MDP$.
	Then,
	1)~if $\ExpectationMDP<\MDP, \initialstate><\max>[\MP_{\log(\reward)}] < 0$, then $\ExpectationMDP<\MDP, \initialstate><\max>[\pathreward] = 0$ and hence $\ExpectationMDP<\MDP, \initialstate><\max>[\pathrewardinf] = 0$, and
	2)~if $\ExpectationMDP<\MDP, \initialstate><\max>[\MP_{\log(\reward)}] > 0$, then $\ExpectationMDP<\MDP, \initialstate><\max>[\pathrewardinf] = \infty$ and hence $\ExpectationMDP<\MDP, \initialstate><\max>[\pathreward] = \infty$.
\end{restatable}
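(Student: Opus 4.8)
The plan is to handle the two parts separately: part~2 reduces directly to \Cref{lem:meanpayoffMC}, while part~1 requires an almost-sure mean-payoff bound that holds against arbitrary schedulers.

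For part~2, I would use that in a strongly connected (communicating) MDP the optimal expected logarithmic mean payoff is independent of the starting state and is attained by a gain-optimal MD-scheduler $\sigma$ (classical; see, e.g., \cite{DBLP:books/wi/Puterman94}). Hence the Markov chain $\MC_\sigma$ induced by $\sigma$ satisfies $\ExpectationMC<\MC_\sigma, \initialstate>[\MP_{\log(\reward)}] = \ExpectationMDP<\MDP, \initialstate><\max>[\MP_{\log(\reward)}] > 0$. Writing this value as the weighted average of the (constant) logarithmic mean payoffs of the BSCCs of $\MC_\sigma$ reachable from $\initialstate$, at least one such BSCC $R$ has positive logarithmic mean payoff. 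Since $R$ is a BSCC, paths starting in a state $r \in R$ stay in $R$ almost surely, so $\MC_\sigma$ restricted to $R$ is a strongly connected Markov chain with positive logarithmic mean payoff, and \Cref{lem:meanpayoffMC}(2) gives $\ExpectationMC<\MC_\sigma, r>[\pathrewardinf] = \infty$ for every $r \in R$. Finally I would transfer this to $\initialstate$: picking any finite path $\finitepath$ in $\MC_\sigma$ from $\initialstate$ to some $r \in R$ with $\ProbabilityMC<\MC_\sigma, \initialstate>[\Cyl(\finitepath)] > 0$, the Markov property together with $\pathrewardinf(\finitepath \circ \infinitepath') = \multreward_{\reward}(\finitepath) \cdot \pathrewardinf(\infinitepath')$ yields $\ExpectationMC<\MC_\sigma, \initialstate>[\pathrewardinf] \geq \ProbabilityMC<\MC_\sigma, \initialstate>[\Cyl(\finitepath)] \cdot \multreward_{\reward}(\finitepath) \cdot \ExpectationMC<\MC_\sigma, r>[\pathrewardinf] = \infty$, where $\multreward_{\reward}(\finitepath) > 0$ since all rewards are positive. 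As $\sigma$ is one admissible scheduler, $\ExpectationMDP<\MDP, \initialstate><\max>[\pathrewardinf] = \infty$, and since $\pathreward \geq \pathrewardinf$ pointwise also $\ExpectationMDP<\MDP, \initialstate><\max>[\pathreward] = \infty$.

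For part~1, I would prove the stronger statement that $\pathreward = 0$ holds almost surely under \emph{every} scheduler, so that the supremum of the expectation is $0$. The key point, and the main obstacle, is an almost-sure upper bound on the logarithmic mean payoff under an arbitrary (history-dependent, randomized) scheduler; this cannot be read off from $\ExpectationMDP<\MDP, \initialstate><\max>[\MP_{\log(\reward)}] < 0$ directly, since an expectation below zero does not preclude paths of positive mean payoff, along which $\pathreward$ would be $\infty$. To obtain it, I would invoke the average-reward optimality equation for the communicating MDP $\MDP$ with reward $\log(\reward)$: there exist a scalar $g^\ast = \ExpectationMDP<\MDP, \initialstate><\max>[\MP_{\log(\reward)}] < 0$ and a bounded function $h \colon \States \to \Reals$ with $h(s) + g^\ast \geq \log(\reward(s)) + \sum_{t \in \States} \mdptransitions(s,\alpha,t)\, h(t)$ for all $s \in \States$ and all $\alpha \in \Actions(s)$. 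Then for any scheduler $\strategy$ and any starting state, $M_n \coloneqq h(\infinitepath_n) + \sum_{k=0}^{n-1} \bigl(\log(\reward(\infinitepath_k)) - g^\ast\bigr)$ is a supermartingale whose increments are bounded (using finiteness of $\States$, positivity of $\reward$, and boundedness of $h$); by the Doob decomposition together with the strong law of large numbers for martingales with bounded increments, $\limsup_{n\to\infty} M_n/n \leq 0$ almost surely, and, $h$ being bounded, this gives $\MPsup_{\log(\reward)}(\infinitepath) \leq g^\ast < 0$ for almost every $\infinitepath$.

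From this, for almost every path $\sum_{k \leq n} \log(\reward(\infinitepath_k)) \to -\infty$, hence $\multreward_{\reward}^n(\infinitepath) = \prod_{k \leq n} \reward(\infinitepath_k) \to 0$ and so $\pathreward(\infinitepath) = \pathrewardinf(\infinitepath) = 0$. Therefore $\ExpectationMDP<\MDP, \initialstate><\strategy>[\pathreward] = 0$ for every scheduler $\strategy$, which yields $\ExpectationMDP<\MDP, \initialstate><\max>[\pathreward] = 0$ and a fortiori $\ExpectationMDP<\MDP, \initialstate><\max>[\pathrewardinf] = 0$. I expect the martingale/optimality-equation step to be the delicate part: it is exactly what separates the MDP case from the Markov-chain case of \Cref{lem:meanpayoffMC}, since it relies on the existence of a bounded bias function for communicating MDPs plus a law-of-large-numbers argument, rather than on the weaker fact that the expected mean payoff is negative.
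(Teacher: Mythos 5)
Your proof is correct, and its overall skeleton matches the paper's: reduce both parts to almost-sure statements about the logarithmic mean payoff and then conclude about the product exactly as in \cref{lem:meanpayoffMC}. The difference lies in how the delicate step of part~1 is discharged. The paper simply cites the known fact (Kallenberg, Sec.~6.1) that in a strongly connected MDP with optimal gain $\vartheta<0$ one has $\ProbabilityMDP<\MDP, s><\sched>[\MP_{\log(\reward)}\leq \vartheta]=1$ for \emph{every} scheduler $\sched$, whereas you reprove this from first principles: you invoke the average-reward optimality equation to obtain a constant gain $g^\ast$ and a bounded bias $h$, build the supermartingale $M_n = h(\infinitepath_n)+\sum_{k<n}(\log(\reward(\infinitepath_k))-g^\ast)$, and use the strong law for martingales with bounded increments to get $\MPsup_{\log(\reward)}\leq g^\ast$ almost surely under any scheduler. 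That buys self-containedness at the cost of importing the existence of a solution to the optimality equation for communicating MDPs (itself a nontrivial classical result). For part~2 the paper asserts the existence of an MD-scheduler attaining $\MP_{\log(\reward)}=\vartheta$ almost surely and stops there; your version is slightly more explicit (decompose the induced chain into BSCCs, locate one with positive logarithmic mean payoff, apply \cref{lem:meanpayoffMC}, and propagate the infinite value back to $\initialstate$ along a positive-probability prefix using that $\pathrewardinf$ factors over a fixed finite prefix and that all rewards are strictly positive). Both routes are sound; yours is longer but makes every step checkable without the external reference.
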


Of course, checking whether the maximal logarithmic mean payoff is greater, less, or equal to $0$ is at least as hard as in the case of Markov chains for which no unconditional polynomial-time algorithm is  known (see \cref{prop:hardnessMP}).
A polynomial-space upper bound, however, is straightforward:

\begin{restatable}{proposition}{checklogMP}
\label{prop:check_logMP0}
Given a strongly connected $\MDP$  as above and a rational reward function $\reward$, deciding whether $\ExpectationMDP<\MDP, \initialstate><\max>[\MP_{\log(\reward)}] \bowtie 0$ for $\bowtie {}\in \{<,=,>\}$ is in PSPACE.
\end{restatable}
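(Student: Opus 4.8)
The plan is to reduce the comparison $\ExpectationMDP<\MDP, \initialstate><\max>[\MP_{\log(\reward)}] \bowtie 0$ to a statement in the existential (or full first-order) theory of the reals and invoke the PSPACE bound for that theory. First I recall the classical fact that in a strongly connected MDP the maximal mean payoff with respect to any reward function $g$ is attained by a memoryless deterministic scheduler, and moreover equals $\max_{\text{BSCC } C} \MP_g(C)$ where $C$ ranges over the BSCCs induced by MD-schedulers, i.e.\ over simple ``cycle-closed'' sub-chains. Equivalently, $\ExpectationMDP<\MDP, \initialstate><\max>[\MP_g] = \max_{\text{cycles } \gamma} \frac{1}{|\gamma|}\sum_{s\in\gamma} g(s)$, the maximum cycle mean in the underlying graph (a standard characterization, e.g.\ via Karp's algorithm for mean-payoff, lifted to MDPs by noting that nondeterminism only lets us pick which cycle to settle into). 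Hence $\ExpectationMDP<\MDP, \initialstate><\max>[\MP_{\log(\reward)}] \bowtie 0$ holds iff there is a simple cycle $\gamma$ in $\MDP$ with $\sum_{s\in\gamma}\log(\reward(s)) \bowtie 0$ for $\bowtie \in \{=,>\}$ (and the negation of ``$\exists\gamma:\sum > 0$'' together with ``$\exists\gamma:\sum=0$'' etc.\ handles $<$); equivalently, $\prod_{s\in\gamma}\reward(s) \bowtie 1$, which is exactly a CSRI/ESRI-type comparison of products of rationals.

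The key steps, in order: (1)~state and justify the MD-optimality and cycle-mean characterization of $\ExpectationMDP<\MDP, \initialstate><\max>[\MP_{\log(\reward)}]$ in a strongly connected MDP, citing \cite{DBLP:books/wi/Puterman94} (and note strong connectedness makes the value independent of $\initialstate$); (2)~observe that deciding $\bowtie 0$ therefore amounts to: does there exist a simple cycle whose multiplicative reward is $>1$, $=1$, or is every simple cycle's multiplicative reward $<1$; (3)~bound the work: a simple cycle has at most $|\States|$ states, so $\prod_{s\in\gamma}\reward(s)$ written over a common denominator $d$ (with $d$ of polynomial bit-length) is a ratio of two integers each of at most exponential magnitude but polynomial bit-length in ``succinct'' form $\prod_i a_i^{b_i}$; comparing it to $1$ is a CSRI (resp.\ ESRI) instance; (4)~since CSRI $\subseteq \exists\Reals \subseteq$ PSPACE (as recalled in \Cref{sec:prelim}, citing \cite{DBLP:conf/stoc/Canny88}), and since ``$\exists$ simple cycle with property $P$'' can be checked by a PSPACE machine that nondeterministically (or by exhaustive search within polynomial space) enumerates candidate cycles and runs the PSPACE CSRI test on each, the whole decision is in PSPACE; the $<$ case is the complement of a union of PSPACE-decidable events and PSPACE is closed under complement and polynomial unions.

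Alternatively, and perhaps cleaner to write, I would encode the whole question directly as a single sentence of the first-order theory of the reals: introduce a real variable $\ell_s$ for each state constrained by the (polynomially many) relations that any rational identity among the $\log(\reward(s))$ forces — but since $\log$ of rationals need not satisfy nontrivial rational linear relations in general, the honest route is the combinatorial one above, expressing ``$\max$ cycle mean $\bowtie 0$'' purely through the finitely many products $\prod_{s\in\gamma}\reward(s)$, which are rationals, hence the comparisons are CSRI/ESRI and need no real-closed-field machinery beyond what CSRI already uses. I expect the main obstacle to be purely expository: making precise that the maximal logarithmic mean payoff over \emph{all} (history-dependent, randomized) schedulers really does reduce to a maximum over simple cycles — this is standard but must be stated carefully, in particular that randomization does not help for mean payoff and that in a strongly connected MDP one can always steer into the best cycle almost surely; once that reduction is in place, the PSPACE bound is immediate from the known status of CSRI. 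A secondary point to check is the bookkeeping that clearing denominators keeps all exponents and bases of polynomial bit-length so that the resulting instance is a legitimate (succinct) CSRI instance.
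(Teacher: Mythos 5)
There is a genuine gap, and it is fatal to the proposed route. Your step~(1) asserts that in a strongly connected MDP the maximal mean payoff equals the maximum cycle mean of the underlying graph, $\max_{\gamma} \frac{1}{|\gamma|}\sum_{s\in\gamma}\log(\reward(s))$, ``lifted to MDPs by noting that nondeterminism only lets us pick which cycle to settle into.'' This is true for weighted graphs (all distributions Dirac), but false for MDPs with genuine probabilistic branching: a scheduler picks actions, not successors, so it cannot steer into a single favourable simple cycle. The mean payoff under an MD-scheduler is the \emph{stationary-distribution-weighted} average of the rewards in the induced BSCC, not the best cycle mean. The paper itself contains counterexamples to your characterization: in the Markov chain of \cref{fig:infinitevszero} (see \cref{rem:infinitevszero}) the cycle $s_1 s_2 s_1$ has multiplicative reward $2>1$, so your test would declare $\ExpectationMC<\MC, \initialstate>[\MP_{\log(\reward)}]>0$, yet the true value is $\tfrac14\log\tfrac23<0$; similarly, in the right chain of \cref{fig:logzeromp_example} the self-loop on $s_1$ has product $2>1$ while the actual logarithmic mean payoff is $0$. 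So steps~(1)--(2) of your plan compute the wrong quantity, and everything downstream inherits the error. (Your fallback of encoding $\log(\reward(s))$ as real variables is also rightly abandoned, since first-order real arithmetic cannot express the logarithm.)

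The repair is essentially the paper's proof: since MD-schedulers suffice for mean payoff, enumerate all MD-schedulers in polynomial space; for each induced Markov chain, restrict to its BSCCs, compute the (rational, polynomial-size) stationary distribution $\theta$ by solving a linear system, and decide $\sum_s \theta_s\log(\reward(s))\bowtie 0$ by exponentiating to $\prod_s \reward(s)^{\theta_s}\bowtie 1$ and clearing denominators, which is a CSRI (resp.\ ESRI) instance and hence decidable in PSPACE by \cref{prop:logMP_MC}. Taking the maximum over the enumeration stays within PSPACE. Your bookkeeping in step~(3) about succinct representations and the closure observations in step~(4) are fine and would carry over to this corrected reduction; only the combinatorial object being compared to $1$ must change from ``product over a simple cycle'' to ``product weighted by the stationary distribution of an induced BSCC.''
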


\subsubsection{Gambling and non-gambling BSCCs}
If the maximal expected mean payoff with respect to $\log(\reward)$ is zero, a more careful analysis is necessary.
As before, we need to investigate the nature of cycles within the system.
However, contrary to Markov chains, we cannot simply analyse all potential cycles:
Only cycles that are induced by MD-schedulers optimal with respect to the logarithmic mean payoff may be taken into consideration.
Only these cycles may be chosen with positive frequency by schedulers optimal with respect to the logarithmic mean payoff.
If a scheduler takes any other cycle with positive frequency, the logarithmic mean payoff is negative and therefore the multiplicative reward almost surely zero.
For further details, see \cite[Sec.~6.1]{Kallenberg}.

We follow the classification of ECs in \cite{BBDGS2018}.
First, we define the mean payoff-maximizing actions for each state.
Since in a strongly connected MDP,  any action can be chosen arbitrarily often by an optimal scheduler for  prefix-independent objective such as mean payoff, the definition requires the restriction to MD-schedulers.
For $s \in \States$, let
\begin{align*}
	\Actions^{\max}(s) = \{\alpha \in \Actions(s) \mid \alpha \text{ is part of a BSCC induced by an } \\
	\text{MD-scheduler $\strategy$ with $\ExpectationMDP<\MDP, \initialstate><\strategy>[\MP_{\log(\reward)}] = \ExpectationMDP<\MDP, \initialstate><\max>[\MP_{\log(\reward)}]$}\} .
\end{align*}
In other words, $\Actions^{\max}$ are all actions that an optimal scheduler may choose with positive frequency (a slightly stronger requirement than \enquote{infinitely often}).
Note that $\Actions^{\max}(s)$ is empty for states that are not part of any BSCC induced by a mean payoff-optimal MD-scheduler.
By $\MDP^{\max}$, we denote the sub-MDP that is obtained from $\MDP$ by disabling  all actions in $\Actions(s)\setminus \Actions^{\max}(s)$ for all states $s$.
We call ECs of $\MDP^{\max}$ \emph{max-ECs}, and BSCCs  of $\MDP^{\max}$ \emph{max-BSCCs}.
Note that there may be exponentially many such \emph{max-ECs} / \emph{max-BSCCs}.
As a by-product of checking whether the maximal logarithmic mean payoff is zero, we can compute the sets $\Actions^{\max}(s)$:
\begin{restatable}{proposition}{Amax}
Given a strongly connected $\MDP$ as above and a rational reward function $\reward\colon \States \to \Rationals_{>0}$ with  $\ExpectationMDP<\MDP, \initialstate><\max>[\MP_{\log(\reward)}]=0$ for any state $\initialstate\in \States$, the set $\Actions^{\max}(s)$ for all states $s\in \States$
can be computed in polynomial space.
\end{restatable}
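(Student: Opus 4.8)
The plan is to characterize $\Actions^{\max}(s)$ via the existence of a bottom strongly connected component whose logarithmic mean payoff is exactly $0$, and then to decide this existence by brute-force enumeration over all candidate BSCCs, each individual test running in polynomial time and the enumeration using only polynomially many bits of bookkeeping.

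\emph{Reduction to BSCCs.}
First I would show: $\alpha\in\Actions^{\max}(s)$ if and only if there is a pair $(B,\kappa)$ consisting of a set $B\subseteq\States$ with $s\in B$ and a map $\kappa$ assigning to each $t\in B$ some $\kappa(t)\in\Actions(t)$ with $\kappa(s)=\alpha$, such that (i)~$B$ is closed under $\kappa$, i.e.\ $\mdptransitions(t,\kappa(t),t')>0$ implies $t'\in B$, (ii)~the directed graph on $B$ with an edge $t\to t'$ whenever $\mdptransitions(t,\kappa(t),t')>0$ is strongly connected, and (iii)~the (state-independent) mean payoff $m_{B,\kappa}$ of the irreducible Markov chain induced by $(B,\kappa)$ with respect to $\log(\reward)$ equals $0$. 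For the ``only if'' direction: if $\alpha$ is played with positive frequency by an MD-scheduler $\strategy$ that is optimal from $\initialstate$, then (cf.\ \cite[Sec.~6.1]{Kallenberg}) there is a BSCC $B$ of the Markov chain induced by $\strategy$ that is reachable from $\initialstate$, with $s\in B$ and $\strategy(s)=\alpha$; since the mean payoff of $\strategy$ from $\initialstate$ is a convex combination with positive weights of the mean payoffs of the reachable BSCCs, all of which are $\le \ExpectationMDP<\MDP, \initialstate><\max>[\MP_{\log(\reward)}]=0$ (the MDP can stay forever inside any such BSCC and thereby realize its mean payoff), optimality forces $m_{B,\strategy|_B}=0$, so $(B,\strategy|_B)$ is the desired pair. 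For the ``if'' direction: given $(B,\kappa)$ as above, extend $\kappa$ to an MD-scheduler on all of $\States$ by choosing, for each state outside $B$, an action along a shortest path towards $B$ (which exists since $\MDP$ is strongly connected); then $B$ is the unique BSCC reachable from any state, so this scheduler has mean payoff $m_{B,\kappa}=0$, hence is optimal, and it plays $\alpha$ at $s$ with positive frequency. In particular this also shows that $\Actions^{\max}(s)$ does not depend on the choice of $\initialstate$.

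\emph{Deciding the condition in polynomial space.}
For each state $s$ and each $\alpha\in\Actions(s)$ I would enumerate all candidate pairs $(B,\kappa)$: such a candidate is specified by choosing for every $t\in\States$ either ``$t\notin B$'' or an action $\kappa(t)\in\Actions(t)$, so there are at most exponentially many of them and they can be iterated using a counter of polynomially many bits. For a fixed candidate, checking conditions (i) and (ii) together with $s\in B$ and $\kappa(s)=\alpha$ is graph analysis and runs in polynomial time; computing the rational stationary distribution $\theta$ of the irreducible Markov chain $(B,\kappa)$ amounts to solving a linear system and yields $\theta_t=p_t/q_t$ in binary. Condition (iii), $\sum_{t\in B}\theta_t\log\reward(t)=0$, is equivalent to $\prod_{t\in B}\reward(t)^{\theta_t}=1$; writing $\reward(t)=n_t/d_t$ in lowest terms and $Q$ for the least common multiple of the $q_t$, this is in turn equivalent to $\prod_{t\in B}n_t^{\,Q\theta_t}=\prod_{t\in B}d_t^{\,Q\theta_t}$, where all bases and exponents are natural numbers of polynomial bit-length -- an instance of equality of succinctly represented integers (ESRI), which is decidable in polynomial time \cite{DBLP:journals/toct/EtessamiSY14}. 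We place $\alpha$ into $\Actions^{\max}(s)$ exactly if some candidate passes all these checks. At every moment the algorithm stores only the current candidate plus the polynomial workspace of one subroutine, so it runs in polynomial space; this is essentially the same enumeration behind \Cref{prop:check_logMP0}, now additionally recording the state--action pairs that occur in a mean-payoff-zero BSCC.

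\emph{Main obstacle.}
The polynomial-space bound itself is routine (exponentially many candidates, a polynomial counter, polynomial-time tests). The delicate points are, first, establishing the BSCC characterization precisely -- in particular the direction that a mean-payoff-zero BSCC can always be completed to a \emph{globally} optimal MD-scheduler, which relies on strong connectivity, together with the converse that the positive-frequency actions of an optimal scheduler must lie in such a BSCC, which relies on every BSCC of $\MDP$ having mean payoff $\le 0$ and on prefix-independence of mean payoff -- and, second, sidestepping transcendental arithmetic with the logarithms by exponentiating the test ``$m_{B,\kappa}=0$'' into an ESRI instance.
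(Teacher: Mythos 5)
Your proposal is correct and follows essentially the same route as the paper: the paper's proof piggybacks on the PSPACE enumeration of MD-schedulers from \cref{prop:check_logMP0}, recording the state--action pairs lying in a BSCC whose logarithmic mean payoff is zero, with each zero-test reduced (as in \cref{prop:logMP_MC}) to an ESRI instance decidable in polynomial time. You enumerate the candidate BSCCs $(B,\kappa)$ directly rather than full MD-schedulers and spell out the characterization of $\Actions^{\max}$ (that a mean-payoff-zero BSCC extends to a globally optimal MD-scheduler via strong connectivity, and conversely that every BSCC realized with positive probability by an optimal scheduler must have mean payoff zero) in more detail than the paper does, but this is the same argument.
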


\begin{remark}
Alternatively, the sets $\Actions^{\max}(s)$ can be characterized via the linear program for the optimal mean payoff in strongly connected MDPs (see \cite[Sec.~6.1]{Kallenberg}).
Using variables $v_s$ for all $s\in \States$ and a variable $x$, the LP is to minimize $x$ subject to
$x+ v_s \geq \log(r(s)) + {\sum}_{t\in \States} \mdptransitions(s,\alpha,t) \cdot v_t \quad \forall s\in \States, \alpha\in \Actions(s)$.
The optimal mean payoff is the value of $x$. 
For an optimal solution $(x^\ast,(v^\ast_s)_{s\in\States})$, we have $\alpha \in \Actions^{\max}(s)$ exactly when
the corresponding constraint is \emph{active} (i.e.\ left-hand \emph{equals} right-hand side).
However, due to the occurring logarithms, this representation is not directly applicable in an algorithm computing the sets $\Actions^{\max}(s)$.
\end{remark}

Assuming $\ExpectationMDP<\MDP, \initialstate><\max>[\MP_{\log(\reward)}] = 0$,  all max-ECs $E$ satisfy  $\ExpectationMDP<E, s><\sched>[\MP_{\log(\reward)}] = 0$ for any state $s$ and scheduler $\sched$ in $E$.
Now, we distinguish \emph{gambling} and \emph{non-gambling} max-ECs:

\begin{definition}
With $\MDP$ and $r$ as above with $\ExpectationMDP<\MDP, \initialstate><\max>[\MP_{\log(\reward)}] = 0$, we call a max-EC $E$ \emph{gambling}
if it contains an n-cycle (see \Cref{def:n-cycle}), and \emph{non-gambling} otherwise. 
Gambling and non-gambling max-BSCCs are such max-ECs with one action per state.
\end{definition}
The following lemma now states a result of \cite{BBDGS2018} for additive rewards translated to our setting.

\begin{restatable}{lemma}{gamblinglimsup}
\label{lem:gambling_limsup}
	Let $\MDP$ and $\reward$ be as above with $\ExpectationMDP<\MDP, \initialstate><\max>[\MP_{\log(\reward)}] = 0$.
	If there is a gambling max-BSCC, $\ExpectationMDP<\MDP, s><\max>[\pathreward] = \infty$  for all states $s \in \States$.
	Further, there is an MD-scheduler $\sched$ achieving $\ExpectationMDP<\MDP, s><\sched>[\pathreward] = \infty$ for all $s \in \States$.
\end{restatable}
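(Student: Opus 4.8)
\emph{Proof idea.}
The plan is to reduce everything to producing a single MD-scheduler: since $\ExpectationMDP<\MDP, s><\max>[\pathreward] = \sup_{\strategy}\ExpectationMDP<\MDP, s><\strategy>[\pathreward] \geq \ExpectationMDP<\MDP, s><\sched>[\pathreward]$, it suffices to construct an MD-scheduler $\sched$ with $\ExpectationMDP<\MDP, s><\sched>[\pathreward] = \infty$ for every $s \in \States$, and both assertions follow at once. Let $B$ be the given gambling max-BSCC, with its memoryless deterministic action choice $\alpha_s \in \Actions^{\max}(s)$ for $s \in B$. First I would set $\sched(s) = \alpha_s$ for $s \in B$ and, for $s \notin B$, let $\sched(s)$ be an action lying on a fixed shortest path in the graph of $\MDP$ from $s$ to $B$; such a path exists because $\MDP$ is strongly connected. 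Under $\sched$ the set $B$ is closed, and since every state outside $B$ moves $\sched$-closer to $B$ with positive probability, the $\sched$-induced finite Markov chain reaches $B$ almost surely from every state and stays there afterwards.

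Next I would pass to logarithms. Fixing $s$ and working under $\ProbabilityMDP<\MDP, s><\sched>$, almost every path $\infinitepath$ enters $B$ at some finite time $m$ and then remains in $B$; writing $L_n(\infinitepath) = \sum_{k \leq n}\log(\reward(\infinitepath_k))$, for $n \geq m$ we have $L_n = L_{m-1} + \sum_{k = m}^{n}\log(\reward(\infinitepath_k))$ with $L_{m-1}$ a constant along $\infinitepath$. Since $\pathreward(\infinitepath) = \limsup_n \exp(L_n(\infinitepath)) = \exp\bigl(\limsup_n L_n(\infinitepath)\bigr)$ by monotonicity and continuity of $\exp$, it suffices to prove $\limsup_n L_n = +\infty$ almost surely, and by the displayed decomposition this is the same statement for the partial sums of $\log(\reward)$ along a path confined to $B$ forever. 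By the observation preceding the lemma, inside the max-EC $B$ we have $\MP_{\log(\reward)} = 0$ almost surely, and by the gambling assumption $B$ contains an n-cycle, i.e.\ a cycle whose $\log(\reward)$-sum is nonzero. Hence $B$, viewed as a finite irreducible Markov chain with reward $\log(\reward)$, is exactly the ``gambling'' situation (zero mean payoff, a cycle of nonzero weight) treated for additive rewards in \cite{BBDGS2018}, which yields $\limsup_n L_n = +\infty$ almost surely for such a path; adding back the constant $L_{m-1}$ and exponentiating gives $\pathreward(\infinitepath) = \infty$ for $\ProbabilityMDP<\MDP, s><\sched>$-almost every $\infinitepath$, so $\ExpectationMDP<\MDP, s><\sched>[\pathreward] = \infty$ and $\ExpectationMDP<\MDP, s><\max>[\pathreward] = \infty$ for all $s \in \States$.

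The main obstacle is the additive recurrence fact just invoked, so I would at least recall its proof. Fix a reference state $u \in B$, let $\tau_1 < \tau_2 < \cdots$ be the successive visit times to $u$, and set $Y_i = \sum_{\tau_i \leq k < \tau_{i+1}}\log(\reward(\infinitepath_k))$; by the strong Markov property the $Y_i$ are i.i.d. The renewal--reward identity together with $\MP_{\log(\reward)} = 0$ forces $\Expectation[Y_i] = \MP_{\log(\reward)} \cdot \Expectation[\tau_2 - \tau_1] = 0$, while $Y_i$ is non-degenerate: an excursion from $u$ that traverses the n-cycle once and one that traverses it an extra time both have positive probability (using strong connectivity of $B$) and their $\log(\reward)$-sums differ by the cycle's nonzero $\log(\reward)$-weight. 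Thus $\sum_{i \leq N}Y_i$ is a mean-zero random walk with non-degenerate increments, recurrent by the Chung--Fuchs theorem, whence $\limsup_N \sum_{i \leq N}Y_i = +\infty$ almost surely; since the partial sums $\sum_{k \leq n}\log(\reward(\infinitepath_k))$ coincide with $\sum_{i \leq N}Y_i$ up to a fixed additive constant at the successive return times to $u$, their $\limsup$ over all $n$ is at least as large, hence $=+\infty$ almost surely.

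Beyond this recurrence statement (which is essentially a citation to \cite{BBDGS2018}, transported through $\log(\reward)$), the remaining work is routine bookkeeping from the first two paragraphs: extending the BSCC's memoryless choice to an MD-scheduler that reaches $B$ almost surely from every state of the strongly connected MDP, discarding the finite prefix before $B$ is entered, and commuting $\exp$ past $\limsup$.
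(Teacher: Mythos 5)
Your proposal is correct and follows essentially the same route as the paper's proof: apply the additive-reward recurrence result of \cite{BBDGS2018} to the reward function $\log(\reward)$ to get $\limsup_n \sum_{k\leq n}\log(\reward(\infinitepath_k)) = \infty$ almost surely inside the gambling max-BSCC, and realize it via an MD-scheduler that almost surely reaches and then stays in that BSCC. Your additional renewal/Chung--Fuchs argument merely expands the citation the paper leaves implicit, which is a fine (and welcome) elaboration rather than a different approach.
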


Intuitively, staying in a gambling max-BSCC means that the sum of the logarithmic rewards on prefixes of a path follows a random walk without drift. So, arbitrarily high values are reached infinitely often almost surely.
If the sum of the  logarithmic rewards is arbitrarily high infinitely often, the same of course holds for the multiplicative reward of the prefixes. As $\pathreward$ is the $\limsup$ of these multiplicative rewards,
the obtained value almost surely equals $\infty$ in a gambling max-BSCC.

Analogously, staying in a gambling max-BSCC forever yields a $\pathrewardinf$ of $0$ almost surely.
So, without exiting the BSCC, \enquote{gambling} is not profitable for maximizing the expected value of $\pathrewardinf$.
If, however, a state $s$ with positive value can be reached from a gambling max-BSCC, the gambling max-BSCC can be used to almost surely generate an arbitrarily high multiplicative reward before moving to state $s$ with positive probability and realizing a positive value from then on, as sketched in \cref{rem:infimum_gambling_infinite}.

Note that we have just seen that (contrary to the case of Markov chains) the treatment of $\pathreward$ and $\pathrewardinf$ is not symmetric:
While the maximal expected value of $\pathreward$ is $\infty$ as soon as there is \emph{any} cycle with multiplicative reward different from $1$ in a max-BSCC, the maximal expected value of $\pathrewardinf$ is $0$ if  there exists such a cycle in every max-BSCC. So, in the case of $\pathrewardinf$ a quantifier alternation is introduced.

\subsubsection{Identifying non-gambling ECs}
In the following, we discuss how to algorithmically identify non-gambling max-ECs, i.e.\ max-ECs with only 1-cycles.
Before we proceed, observe that without the restriction to MDPs with maximal expected logarithmic mean-payoff zero, this problem is NP-complete (similar to the additive case \cite[Thm.~3.12]{BBDGS2018}).
The proof uses a reduction from the \emph{subset product} problem, similar to the well-known subset sum problem.
\begin{restatable}{lemma}{NPhardgambling}
	Deciding for a strongly connected MDP and reward function whether there exists an MD-scheduler such that all cycles under this scheduler are 1-cycles is NP-complete.
	This already holds if all distributions are Dirac (i.e.\ the MDP is a labelled transition system).
\end{restatable}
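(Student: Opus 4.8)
The plan is to establish NP-hardness by reduction from the \emph{subset product} problem and membership in NP by a standard guess-and-check argument. Recall that subset product asks, given positive integers $a_1, \dots, a_m$ and a target $b$, whether some subset $I \subseteq \{1, \dots, m\}$ satisfies $\prod_{i \in I} a_i = b$; this is NP-complete by a routine adaptation of the subset sum hardness proof (or can be cited directly). For the hardness direction I would build a deterministic transition system (an MDP in which every distribution is Dirac) shaped as a chain of $m$ "diamond" gadgets in sequence, closed into a single large cycle. In the $i$-th gadget the unique entry state has two enabled actions: a "take" action whose edge carries reward $a_i$ (and otherwise reward $1$ along a short detour), and a "skip" action whose edges all carry reward $1$. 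After the $m$ gadgets, a final edge carries reward $1/b$ (a positive rational, which the setting allows) back to the start. Thus a memoryless deterministic scheduler corresponds exactly to a choice of subset $I$, and the unique cycle it induces has multiplicative reward $\big(\prod_{i \in I} a_i\big) \cdot (1/b)$. This equals $1$ — i.e. the cycle is a 1-cycle in the sense of \Cref{def:n-cycle} — precisely when $\prod_{i \in I} a_i = b$, so the MDP admits an MD-scheduler all of whose cycles are 1-cycles if and only if the subset product instance is a yes-instance. I would also need to confirm that in this construction every MD-scheduler induces exactly one simple cycle (so "all cycles are 1-cycles" reduces cleanly to "that cycle is a 1-cycle"); the chain-of-diamonds-closed-into-a-loop topology is chosen exactly so that after collapsing each diamond to its chosen branch the graph is a single cycle.

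For membership in NP, I would argue that it suffices to guess an MD-scheduler $\strategy$ (a choice of one action per state, of polynomial size) and then verify in polynomial time that every cycle in the resulting Markov chain $\MDP_\strategy$ is a 1-cycle. The verification is exactly Step~2 of the Markov-chain algorithm described earlier in the excerpt: by \Cref{stm:unique_reward}, all cycles in a (sub-)Markov chain are 1-cycles iff for every two states $s,t$ all paths from $s$ to $t$ carry the same multiplicative reward, which is checked by the breadth-first "potential" labelling already spelled out after \Cref{stm:unique_reward} — assign each reachable state the multiplicative reward of the path by which it was first reached, and reject if a back-edge ever produces a conflicting value. Since $\MDP_\strategy$ need not be strongly connected, I would run this check per BSCC (or, equivalently, run it on each strongly connected component, since 1-cycle-ness only constrains cycles, and every cycle lies inside an SCC). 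All numbers involved are products of at most linearly many input rewards, hence of polynomial bit-length, so the whole check is polynomial; that gives the NP upper bound.

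The main obstacle I anticipate is the hardness direction, specifically getting the gadget topology right so that (a) MD-schedulers biject with subsets, (b) each MD-scheduler induces a \emph{single} cycle whose reward is exactly $\prod_{i\in I}a_i / b$ with no stray factors from the routing edges, and (c) the construction stays within a \emph{strongly connected} MDP with Dirac transitions, as the statement demands. Point (c) forces the "closed loop" shape rather than a DAG with a sink, and one must double-check strong connectivity holds for \emph{every} branch choice, not just some. A minor additional subtlety is whether the problem is phrased with integer rewards only; if so, I would clear the denominator $b$ by rescaling — multiply every reward on the cycle by a common factor — or, more simply, place the reward $b$ on one dedicated edge and the rewards $a_i$ on the gadget edges but with an extra normalizing edge, arranging the product of all "fixed" edges to cancel; since the paper explicitly allows rational rewards $\reward\colon\States\to\Rationals_{>0}$, the cleanest route is just to use the rational reward $1/b$ directly and note the encoding is polynomial. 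Everything else is bookkeeping.
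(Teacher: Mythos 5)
Your proposal is correct and follows essentially the same route as the paper: NP-membership by guessing an MD-scheduler and running the Markov-chain 1-cycle check on the induced chain, and NP-hardness by a reduction from subset product using a chain of pick/skip (diamond) gadgets closed into a single loop with a final reward of $1/T$, so that each MD-scheduler induces a unique cycle of reward $\prod_{i\in I} a_i / T$. The only cosmetic difference is that you place rewards on edges while the paper places them on the intermediate pick/omit states, which changes nothing.
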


As it turns out, the restriction to $\Actions^{\max}$ in MDPs with maximal expected logarithmic mean-payoff zero reduces the complexity of the problem.
In particular, we show that individual actions can be pinpointed as \enquote{culprits} for n-cycles, instead of having to consider all possible combinations.
Without loss of generality, we consider an MDP $\MDP = (\States, \Actions, \mdptransitions)$ with reward function $\reward$ such that $\ExpectationMDP<\MDP, \initialstate><\max>[\MP_{\log(\reward)}] = 0$ that comprises one max-EC only containing logarithmic mean-payoff-optimal actions, i.e.\ it is one strongly connected sub-MDP of $\MDP^{\max}$.
Thus, in $\MDP$ \emph{every} scheduler $\sched$ has $\ExpectationMDP<\MDP, \initialstate><\sched>[\MP_{\log(\reward)}] = 0$ for any state $\initialstate$.
We want to determine whether this MDP contains a gambling BSCC or not.

\newcommand{\logtotalrewardto}[1]{\totalreward^{\to #1}_{\log(\reward)}}
\begin{restatable}{lemma}{rewardtounique} \label{lem:reward_to_is_unique}
	Let $s, t \in \States$ two states in $\MDP$, and $\sigma, \tau$ two MD-schedulers which almost surely reach $t$ from $s$, i.e.\ $\ProbabilityMDP<\MDP, s><\sigma>[\reach t] = 1$ and likewise for $\tau$.
	Then,
	\begin{equation*}
		\ExpectationMDP<\MDP, s><\sigma>[\logtotalrewardto{t}] = \ExpectationMDP<\MDP, s><\tau>[\logtotalrewardto{t}],
	\end{equation*}
	where $\logtotalrewardto{t}(\rho) = \sum_{i = 1}^{k} \log(\reward(\rho_i))$ and $k = \min \{i \mid \rho_i = t\} \union \{ \infty \}$.
	(For $k = \infty$, we consider w.l.o.g.\ the $\limsup$ of the sum -- since the set of paths that do not reach $t$ have measure $0$, their value is irrelevant.)
\end{restatable}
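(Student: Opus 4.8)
The plan is to reduce the statement to the existence of a single \emph{bias} (relative value) function for $\log(\reward)$ and then read off $\ExpectationMDP<\MDP, s><\sigma>[\logtotalrewardto{t}]$ from it by an optional‑stopping argument, which makes the value manifestly independent of the chosen scheduler. First I would establish that, because $\MDP$ is a strongly connected sub‑MDP of $\MDP^{\max}$ with $\ExpectationMDP<\MDP, \initialstate><\max>[\MP_{\log(\reward)}]=0$, there is a vector $h\colon\States\to\Reals$ satisfying the Poisson equation
\[
	h(s') \;=\; \log(\reward(s')) \;+\; {\sum}_{t'\in\States}\mdptransitions(s',\alpha,t')\cdot h(t')
	\qquad\text{for \emph{all} } s'\in\States,\ \alpha\in\Actions(s').
\]
Standard average‑reward theory for communicating MDPs (cf.\ the mean‑payoff linear program recalled above and \cite[Sec.~6.1]{Kallenberg}, \cite{DBLP:books/wi/Puterman94}) supplies such an $h$ together with the optimal gain, which here is the constant $0$ --- but a priori only with equality at the gain‑optimal actions. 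Upgrading to equality at \emph{every} enabled action is precisely where the hypothesis $\MDP\subseteq\MDP^{\max}$ enters: all enabled actions are already $\log(\reward)$‑optimal, and if some enabled $\alpha$ at $s'$ made the corresponding Bellman inequality strict, then an MD‑scheduler that uses $\alpha$ at $s'$ and under which $s'$ is recurrent (available since $\MDP$ is strongly connected, by an attractor‑style construction) would, after summing the Bellman inequalities against its stationary distribution, have $\log(\reward)$‑gain strictly below $0$, contradicting that every scheduler on $\MDP$ has gain $0$.

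Given such an $h$, I would fix an MD‑scheduler $\sigma$ with $\ProbabilityMDP<\MDP, s><\sigma>[\reach t]=1$, put $\kappa=\min\{i\mid\rho_i=t\}$ (almost surely finite, with $\ExpectationMDP<\MDP, s><\sigma>[\kappa]<\infty$ since $\MDP^\sigma$ is a finite absorbing Markov chain), and consider the stopped process
\[
	N_n \;=\; {\sum}_{i=0}^{(n\wedge\kappa)-1}\log(\reward(\rho_i)) \;+\; h(\rho_{n\wedge\kappa}).
\]
The Poisson equation makes $(N_n)_n$ a martingale: for $n<\kappa$ one computes $\ExpectationMDP<\MDP, s><\sigma>[N_{n+1}-N_n\mid\mathcal{F}_n]=\log(\reward(\rho_n))+\sum_{t'}\mdptransitions(\rho_n,\sigma(\rho_n),t')\,h(t')-h(\rho_n)=0$, and $N_n=N_\kappa$ for $n\ge\kappa$. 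Since $|N_{n\wedge\kappa}|\le C\kappa+H$ with $C=\max_{s'}|\log(\reward(s'))|$ and $H=\max_{s'}|h(s')|$, which is integrable, optional stopping gives $\ExpectationMDP<\MDP, s><\sigma>[N_\kappa]=\ExpectationMDP<\MDP, s><\sigma>[N_0]=h(s)$. Using $\rho_\kappa=t$ on the measure‑one event of reaching $t$, this rearranges to $\ExpectationMDP<\MDP, s><\sigma>[\sum_{i=0}^{\kappa-1}\log(\reward(\rho_i))]=h(s)-h(t)$; and since $\logtotalrewardto{t}=\sum_{i=1}^{\kappa}\log(\reward(\rho_i))=\sum_{i=0}^{\kappa-1}\log(\reward(\rho_i))+\log(\reward(t))-\log(\reward(s))$ pointwise on that event, we conclude
\[
	\ExpectationMDP<\MDP, s><\sigma>[\logtotalrewardto{t}] \;=\; h(s)-h(t)+\log(\reward(t))-\log(\reward(s)),
\]
which mentions neither $\sigma$ nor $\tau$. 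Running the identical computation for $\tau$ yields the same right‑hand side, which is the claim.

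The main obstacle is Step~1: producing a \emph{single} bias $h$ that solves the Poisson equation with equality simultaneously at all enabled actions. Existence of \emph{some} bias with equality at the optimal actions is textbook, but the extension to all enabled actions genuinely relies on the preprocessing to $\MDP^{\max}$ (all actions $\log(\reward)$‑optimal, every scheduler of gain $0$) together with the standard ``a strict Bellman inequality at a recurrent state strictly lowers the gain'' argument; without restricting to $\Actions^{\max}$ the statement is simply false, since n‑cycles of non‑optimal schedulers would produce differing expected total logarithmic rewards to $t$. Everything else --- the martingale identity and the integrability underpinning optional stopping --- is routine given finiteness of $\MDP$ and boundedness of $\log(\reward)$ and $h$.
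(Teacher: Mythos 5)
Your proof is correct, but it takes a genuinely different route from the paper. The paper argues by contradiction with a \emph{round-trip} construction: it picks a common MD-scheduler $\zeta$ returning from $t$ to $s$, composes it with $\sigma$ and with $\tau$ to obtain two schedulers that cycle $s \to t \to s$ forever, observes that the expected logarithmic reward per round trip must be $0$ for both (otherwise the induced mean payoff w.r.t.\ $\log(\reward)$ would be nonzero, contradicting that \emph{every} scheduler in this sub-MDP of $\MDP^{\max}$ has gain $0$), and splits the round-trip expectation by linearity to conclude $\ExpectationMDP<\MDP, s><\sigma>[\logtotalrewardto{t}] = -\ExpectationMDP<\MDP, t><\zeta>[\logtotalrewardto{s}] = \ExpectationMDP<\MDP, s><\tau>[\logtotalrewardto{t}]$. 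You instead build a bias $h$ solving the Poisson equation with equality at \emph{all} enabled actions and extract the value by optional stopping. Both arguments hinge on the same hypothesis (every scheduler has logarithmic gain zero after restriction to $\Actions^{\max}$), and your stationary-distribution argument for upgrading the Bellman inequalities to equalities is sound — it is in fact the content of the paper's own remark that $\Actions^{\max}$ corresponds to the active constraints of the mean-payoff LP. The paper's proof is shorter and avoids constructing $h$, though it leaves the renewal-type step ``nonzero expected cycle reward implies nonzero mean payoff'' implicit, which needs essentially the same integrability facts ($\ExpectationMDP<\MDP, s><\sigma>[\kappa]<\infty$) that you spell out. Your approach buys an explicit closed form $Q(s,t) = h(s)-h(t)+\log(\reward(t))-\log(\reward(s))$, which makes the cocycle identity $Q(s,t)+Q(t,u)=Q(s,u)$ immediate and would slightly streamline the subsequent characterization of $\Actions^\ast$; the cost is invoking (or re-deriving) the average-reward optimality equation for communicating MDPs.
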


\begin{proof}[Proof sketch]
Let $\zeta$ be an MD-scheduler  that almost surely moves from $t$ back to $s$.
If $\ExpectationMDP<\MDP, s><\sigma>[\logtotalrewardto{t}] \not= \ExpectationMDP<\MDP, s><\tau>[\logtotalrewardto{t}]$, combining $\sigma$ and $\zeta$ or combining $\tau$ and $\zeta$
to one scheduler that moves from $s$ back to $s$ yields a scheduler that accumulates -- in expectation -- logarithmic reward different from $0$ between visits to $s$. This contradicts the fact that every scheduler has expected logarithmic reward zero.
\end{proof}

Thus, for any two states $s, t$ and scheduler $\strategy$ which almost surely moves from $s$ to $t$, the value $\ExpectationMDP<\MDP, s><\strategy>[\logtotalrewardto{t}]$ is the same and we refer to it with $Q(s, t)$, in particular $Q(s, s) = 0$.
However, this does not directly tell us anything about the value of actual cycles under different schedulers.

\begin{restatable}{lemma}{idetifygoodactions} \label{stm:identify_good_actions}
	Fix an EC $E$ in $\MDP$ and a state $x$ in $E$.
	For every state $s \in \States$, let $\Actions^\ast(s)$ denote all actions $a \in \Actions(s)$ where $\abs{\{ \log(\reward(u)) + Q(u, x) \mid u \in \support(\mdptransitions(s, a)) \}} = 1$.
	Then, $E$ contains only 1-cycles if and only if it only uses actions of $\Actions^\ast$.
\end{restatable}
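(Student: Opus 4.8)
The plan is to linearize the logarithmic rewards through a \emph{potential}. Fix the reference state $x\in E$ and put $h(s)\coloneqq Q(s,x)$ for $s\in\States$, so in particular $h(x)=Q(x,x)=0$. For $a\in\Actions(s)$ and $u\in\support(\mdptransitions(s,a))$ define the \emph{local deviation}
\[
  g(s,a,u)\;\coloneqq\;\log(\reward(u))+h(u)-h(s).
\]
I would first record two facts. \textbf{(A) One-step identity:} $\sum_{u}\mdptransitions(s,a,u)\,g(s,a,u)=0$ for every $s$ and every $a\in\Actions(s)$, i.e.\ $h(s)=\sum_u\mdptransitions(s,a,u)\bigl(\log(\reward(u))+h(u)\bigr)$. \textbf{(B) Telescoping along cycles:} for every finite cycle $c=s_0\,a_0\,s_1\cdots a_{k-1}\,s_k$ of $\MDP$ with $s_0=s_k$ and $s_{i+1}\in\support(\mdptransitions(s_i,a_i))$, summing $g(s_i,a_i,s_{i+1})$ over $i$ telescopes the $h$-terms and leaves $\log\multreward(c)=\sum_{i=0}^{k-1}g(s_i,a_i,s_{i+1})$, where $\multreward(c)=\prod_{i=0}^{k-1}\reward(s_i)$. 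Fact (B) is a one-line computation. For fact (A) with $s\neq x$ it suffices to condition on the first step of any proper scheduler from $s$ whose first action is $a$ and use that its value is $Q(s,x)$ (by the reasoning behind \Cref{lem:reward_to_is_unique}, which applies to arbitrary proper schedulers). For $s=x$ the equation concerns \emph{returns} to $x$: the scheduler that plays $a$ at $x$, then reaches $x$ almost surely, then repeats, visits $x$ infinitely often with finite expected return time, so the standing assumption that every scheduler has expected logarithmic mean payoff $0$, together with a renewal-reward argument, forces the expected logarithmic reward collected between two visits of $x$ to be $0$; unfolding this expectation (recalling that rewards are obtained on leaving a state) gives exactly the identity. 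Combining (A) with the definition of $\Actions^\ast$ yields the pivotal reformulation: $a\in\Actions^\ast(s)$ iff $g(s,a,u)=0$ for \emph{all} $u\in\support(\mdptransitions(s,a))$, because $\Actions^\ast(s)$ is precisely the set of $a$ with $u\mapsto g(s,a,u)$ constant on the support, and by (A) a constant of zero weighted average is $0$.

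With this, the direction ``$E$ uses only actions of $\Actions^\ast$ $\Rightarrow$ $E$ has only $1$-cycles'' is immediate: then $g$ vanishes on every edge used by $E$, so by (B) every cycle $c$ of $E$ satisfies $\log\multreward(c)=0$, i.e.\ is a $1$-cycle.

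For the converse, assume $E$ contains only $1$-cycles. The (purely graph-theoretic) argument behind \Cref{stm:unique_reward}, applied to the strongly connected graph underlying $E$, shows all paths inside $E$ from $w$ to $w'$ share a common multiplicative reward $R(w,w')$. Since in $E$ every prefix of a path out of $s$ up to its first visit of $x$ is again a path from $s$ to $x$, the variable $\logtotalrewardto{x}$ is almost surely constant under a proper scheduler, equal to $\log\bigl(R(s,x)\,\reward(x)/\reward(s)\bigr)$, and hence $h(s)=Q(s,x)=\log\bigl(R(s,x)\,\reward(x)/\reward(s)\bigr)$ for all $s\in E$. Now fix $s\in E$, an action $a\in\Actions(s)$ used by $E$, and $u\in\support(\mdptransitions(s,a))$; since $E$ is an EC we have $u\in E$, and prepending the edge $s\xrightarrow{a}u$ to a path $u\rightsquigarrow x$ inside $E$ gives $R(s,x)=\reward(s)\,R(u,x)$. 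Substituting the formula for $h$ into $g$ then yields $g(s,a,u)=\log\bigl(R(u,x)\,\reward(s)/R(s,x)\bigr)=0$. As this holds for every $u\in\support(\mdptransitions(s,a))$, the reformulation above gives $a\in\Actions^\ast(s)$, so $E$ uses only actions of $\Actions^\ast$.

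I expect fact (A), specifically at the reference state $x$ where $Q(x,x)=0$ is by itself uninformative, to be the only real obstacle: there one must turn ``every scheduler has logarithmic mean payoff $0$'' into the one-step equation via renewal-reward, handling the bookkeeping of which state's reward is counted over an excursion around $x$ and the finiteness of expected return times. A minor side point is that $Q(s,x)$ is the common value of $\ExpectationMDP<\MDP,s><\sched>[\logtotalrewardto{x}]$ over \emph{all} proper schedulers (not only MD ones), which follows either from the same combination-of-schedulers argument used for \Cref{lem:reward_to_is_unique} or by reading $Q(\cdot,x)$ off the associated additive stochastic-shortest-path instance with target $x$.
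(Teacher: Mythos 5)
Your proof is correct and establishes both directions. It shares the paper's central ingredient --- the one-step identity $Q(s,x)=\sum_{u}\mdptransitions(s,a,u)\cdot(\log(\reward(u))+Q(u,x))$ for every enabled action --- but packages everything else differently. The paper proves \enquote{only $\Actions^\ast$ $\Rightarrow$ only 1-cycles} by an induction on path length (all $\Actions^\ast$-paths from $s$ to $x$ have multiplicative reward $\exp(Q(s,x))$, hence cycles have reward $\exp(Q(x,x))=1$), and the converse in contrapositive form, by picking a minimal and a maximal successor of an action outside $\Actions^\ast$ and concatenating the two resulting paths with a common return path to exhibit two cycles of different reward. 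Your potential/telescoping formulation (with $g(s,a,u)=\log(\reward(u))+Q(u,x)-Q(s,x)$ summing to $\log\multreward(c)$ along any cycle $c$, and $a\in\Actions^\ast(s)$ iff $g(s,a,\cdot)\equiv 0$ on the support) replaces the induction by a one-line telescoping computation, and your converse is direct rather than contrapositive, using the graph-theoretic core of \cref{stm:unique_reward} to write $Q(\cdot,x)$ explicitly on $E$; what this buys is a shorter argument that makes the role of $Q$ as a harmonic potential transparent. A genuine point in your favour: you correctly flag that the one-step identity at the reference state $x$ itself does \emph{not} follow from the uniqueness of $Q$ alone (since $Q(x,x)=0$ holds vacuously by the first-visit definition of $\logtotalrewardto{x}$), but requires the standing assumption that every scheduler has logarithmic mean payoff zero, via an excursion/renewal argument as in the proof of \cref{lem:reward_to_is_unique}. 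The paper asserts the identity for all pairs \enquote{as the value $Q(s,t)$ is unique} and glosses over precisely this case, even though it is needed to conclude that cycles through $x$ built from $\Actions^\ast$-actions have reward $1$; your treatment fills this gap.
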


\begin{proof}[Proof sketch]
	\underline{If:}
	By induction, it follows that when restricted to actions in $\Actions^\ast$ \emph{every path} from a state $s$ to $x$ has the same \emph{multiplicative} reward, namely $\exp(Q(s, x))$.
	Otherwise, if two paths would have a different reward, one could follow them until the point where they separate and get a contradiction.
	This then directly extends to any pair of states, i.e.\ all paths using only $\Actions^\ast$ from $s$ to $t$ yield $\exp(Q(s, t))$, and, in particular, $Q(s, s) = 0$, so $\exp(Q(s, s)) = 1$.

	\underline{Only if:}
	If at some state $s$ we follow an action not in $\Actions^\ast$, we get two paths from $s$ to $x$ which necessarily yield more and less than $Q(s, x)$, respectively.
	Concatenated with a path back from $x$ to $s$, we get two cycles with different values.
	In particular, at least one of them is not a 1-cycle.
\end{proof}
This directly leads us to an algorithm for computing $A^\ast$.
\begin{restatable}{lemma}{computeAstar}
In the $\MDP$ as above with $\ExpectationMDP<\MDP, \initialstate><\max>[\MP_{\log(\reward)}] = 0$ and only consisting of one max-EC,
	the set $A^\ast(s)$ can be computed in polynomial time.
\end{restatable}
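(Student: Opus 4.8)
The plan is to compute $A^\ast(s)$ directly from the characterization in \Cref{stm:identify_good_actions}, which reduces the task to (a) computing the values $Q(u, x)$ for all states $u$, and (b) for each state $s$ and action $a \in \Actions(s)$, checking whether $\{\log(\reward(u)) + Q(u,x) \mid u \in \support(\mdptransitions(s,a))\}$ is a singleton. Since the MDP consists of a single max-EC with logarithmic mean-payoff zero, every scheduler that almost surely reaches $x$ from $u$ yields the same $\ExpectationMDP<\MDP, u><\strategy>[\logtotalrewardto{x}] = Q(u,x)$ by \Cref{lem:reward_to_is_unique}, so I am free to fix one convenient MD-scheduler. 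The key point enabling polynomial time is that I never manipulate the transcendental numbers $\log(\reward(u))$ symbolically: instead I work with the underlying rationals and, where a comparison is needed, reduce it to an equality check of succinctly represented integers, which is in P (ESRI).

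First I would fix an arbitrary MD-scheduler $\zeta$ that almost surely reaches $x$ from every state (such a scheduler exists since the sub-MDP is strongly connected, e.g.\ shortest-path-to-$x$). Under $\zeta$ the system becomes a Markov chain in which $x$ is reached almost surely, so the values $q_u \coloneqq \ExpectationMC<\MC_\zeta, u>[\logtotalrewardto{x}] = \sum_{t} \log(\reward(t)) \cdot (\text{expected number of visits to } t \text{ before reaching } x)$ satisfy the standard linear system $q_x = 0$ and $q_u = \log(\reward(u)) + \sum_{u'} \mdptransitions(u,\zeta(u),u')\cdot q_{u'}$ for $u \neq x$. This is a linear system with rational coefficients, so its (unique, since $x$ is reached a.s.) solution can be written as $q_u = \sum_{t \in \States} c_{u,t}\cdot \log(\reward(t))$ with rational coefficients $c_{u,t}$ of polynomial bit-size, computable in polynomial time by Gaussian elimination. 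I keep each $q_u$ in this symbolic form as a rational linear combination of the $\log(\reward(t))$; by \Cref{lem:reward_to_is_unique} this equals $Q(u,x)$.

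Next, for a fixed state $s$ and action $a$, I need to decide whether all $u \in \support(\mdptransitions(s,a))$ give the same value of $\log(\reward(u)) + Q(u,x)$. Each such quantity is again a rational linear combination $\sum_{t} e_{u,t} \cdot \log(\reward(t))$ of polynomial size (namely $e_{u,t} = c_{u,t}$ plus an extra $1$ in the coordinate $t=u$). Comparing two of them, say for $u$ and $u'$, amounts to testing $\sum_t (e_{u,t} - e_{u',t})\cdot \log(\reward(t)) = 0$, equivalently $\prod_t \reward(t)^{e_{u,t}-e_{u',t}} = 1$; clearing the (polynomially many, polynomially large) denominators of the exponents and splitting into the positive and negative parts turns this into an instance of ESRI, which is in P by the result of \cite{DBLP:journals/toct/EtessamiSY14} recalled in \Cref{sec:prelim}. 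Doing this for all $s$, all $a \in \Actions(s)$, and all pairs $u, u' \in \support(\mdptransitions(s,a))$ — at most quadratically many tests, each of polynomial size — yields $A^\ast(s)$ for every $s$ in polynomial time overall.

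The main obstacle is purely bookkeeping: making sure the symbolic representation of the $Q(u,x)$ stays polynomial in size (which it does, since it is the solution of one fixed linear system over $\Rationals$, independent of any logarithms) and that the singleton test is genuinely an \emph{equality} comparison rather than an ordering — so that we may invoke ESRI $\in$ P rather than the CSRI-oracle. A minor point to address is well-definedness: a priori $Q(u,x)$ depends only on $u$ and $x$ by \Cref{lem:reward_to_is_unique}, so the symbolic combination obtained from the particular $\zeta$ is the correct value, and the test in \Cref{stm:identify_good_actions} is independent of the chosen $\zeta$.
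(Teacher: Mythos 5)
Your proposal is correct and follows essentially the same route as the paper: fix an MD-scheduler reaching $x$ almost surely, obtain each $Q(u,x)$ as a rational linear combination of the values $\log(\reward(t))$ by solving a linear equation system (the paper phrases the coefficients as expected visiting times, which are exactly your $c_{u,t}$), and reduce the singleton test of \cref{stm:identify_good_actions} to polynomially many equality queries of the form $\sum_i p_i\cdot\log(r_i)=0$, i.e.\ ESRI instances solvable in P. The only nit is an indexing convention: your Bellman recursion $q_u = \log(\reward(u)) + \sum_{u'}\mdptransitions(u,\zeta(u),u')\,q_{u'}$ charges the reward of the \emph{current} state, whereas $\totalreward^{\to x}_{\log(\reward)}$ is defined to sum from index $1$ up to and including the first hit of $x$, so your $q_u$ differs from $Q(u,x)$ by the constant-free shift $\log(\reward(u))-\log(\reward(x))$ and this must be accounted for when forming the quantities $\log(\reward(u))+Q(u,x)$ to be compared --- a bookkeeping fix that affects neither the method nor the complexity.
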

\begin{proof}[Proof Sketch]
	Fix an arbitrary state $x$ and MD-scheduler $\sched$ under which $x$ is reached almost surely from every state.
	Then, for every starting state $s$ compute the \emph{expected visiting time} $\text{evt}_s(u)$ \cite{DBLP:conf/tacas/MertensKQW24}, i.e.\ how often is state $u$ visited on average until $x$ is reached, by solving a linear equation system.
	The expected reward until reaching $x$ can be expressed as $\ExpectationMDP<\MDP, s><\sched>[\logtotalrewardto{x}] = \sum_{u \in \States} \text{evt}_s(u) \cdot \log(\reward(t))$.
	This value is, in general, irrational.
	However, to decide $A^\ast$, we only need to decide whether all successors have the \emph{same} value (\cref{stm:identify_good_actions}).
	This can be reduced to queries of the form \enquote{$\sum p_i \cdot \log(r_i) = 0$}, which in turn reduces to an ESRI instance (as in \cref{prop:logMP_MC}).
\end{proof}

\begin{remark}
	Our approach directly characterizes \emph{all} gambling and non-gambling ECs through the set $A^\ast$.
	In comparison, \cite[Sec.~3.3]{BBDGS2018} identifies a non-gambling EC for the corresponding notion for additive rewards by iteratively picking candidates and modifying the transition probabilities of the MDP if that EC is gambling.
	For additive rewards, this works in polynomial time.
	Transferring this approach to multiplicative rewards does not directly yield a polynomial-time algorithm.
	In the other direction, we  believe that our characterization can be directly transferred to the additive case to provide an alternative polynomial-time algorithm that is arguably conceptually simpler.
\end{remark}

\subsubsection{Adapted spider construction}

\begin{figure} 
\centering
\tikzset{
	reward/.style={gray,font=\scriptsize\sffamily,outer sep=0pt,inner sep=1pt,anchor=center,draw,fill=white,rounded corners=1pt}
}
\begin{tikzpicture}[auto,xscale=1.1]
\node[state] at (0,0) (sinit) {$\initialstate, 1$};
\node[state,fill=white] at (0,-1.25) (s) {$s, \frac{1}{8}$\strut};
\node[state,fill=white] at (1.25,-1.25) (t) {$t, 2$\strut};
\node[state,fill=white] at (2.5,-1.25) (u) {$u, 4$\strut};

\node[state] at (0,-3) (t1) {$t_1$};
\node[state] at (1.25,-3) (t2) {$t_2$};
\node[state] at (2.5,-3) (t3) {$t_3$};

\path[directedge]
	(sinit) edge (s)
	(s) edge (t)
	(t) edge (u)
	(u) edge[bend right] (s)
	(t) edge (t2)
;

\path[probedge]
	(s) edge[out=-90,in=90,swap] node[prob] {$\frac{1}{2}$} (t1)
	(s) edge[out=-90,in=90] node[prob] {$\frac{1}{2}$} (t2)
	(u) edge[out=-90,in=90,swap] node[prob] {$\frac{2}{3}$} (t2)
	(u) edge[out=-90,in=90] node[prob] {$\frac{1}{3}$} (t3)
;
\begin{scope}[on background layer]
	\node[fit=(s) (u),rectangle,rounded corners,draw=lightgray,inner sep=4pt,thick] {};
\end{scope}
\end{tikzpicture}\quad
\begin{tikzpicture}[auto,xscale=1.1]
\node[state] at (0,0) (sinit) {$\initialstate, 1$};
\node[state,fill=white] at (0,-1.25) (s) {$s, \frac{1}{8}$\strut};
\node[state,fill=white] at (1.25,-1.25) (t) {$t, 2$\strut};
\node[state,fill=white] at (2.5,-1.25) (u) {$u, 4$\strut};
\node[state] at (1.25, 0) (bot) {$\bot, 1$};

\node[actionnode] at (0,-2.15) (se) {};
\node[actionnode] at (2.5,-2.15) (ue) {};

\node[state] at (0,-3) (t1) {$t_1$};
\node[state] at (1.25,-3) (t2) {$t_2$};
\node[state] at (2.5,-3) (t3) {$t_3$};

\path[directedge]
	(sinit) edge (s)
	(s) edge node[reward] {$1$} (t)
	(u) edge[swap] node[reward] {$\frac{1}{2}$} (t)
	(t) edge node[reward] {$1$} (t2)
	(t) edge node[reward] {$4$} (bot)
	(t) edge (ue)
;
\path[actionedge]
	(t) edge node[reward] {$\frac12$} (se)
	(t) edge[swap] node[reward] {$1$} (ue)
;

\path[probedge]
	(se) edge[out=-90,in=90,swap] node[prob] {$\frac{1}{2}$} (t1)
	(se) edge[out=-90,in=90] node[prob] {$\frac{1}{2}$} (t2)
	(ue) edge[out=-90,in=90,swap] node[prob] {$\frac{2}{3}$} (t2)
	(ue) edge[out=-90,in=90] node[prob] {$\frac{1}{3}$} (t3)
;
\begin{scope}[on background layer]
	\node[fit=(s) (u),rectangle,rounded corners,draw=lightgray,inner sep=4pt,thick] {};
\end{scope}
\end{tikzpicture}
\caption{
	Illustration of the adapted spider construction for $\pathreward$.
	The gray end component in the left MDP is removed on the right.
	All transitions leaving the end component are moved to start from state $t$ after the construction.
	The weights are adjusted accordingly.
	Additionally, one can move to the new state $\bot$ from $t$.
	As in the text, we write rewards on transitions for readability.
}
\label{fig:spider}
\end{figure}
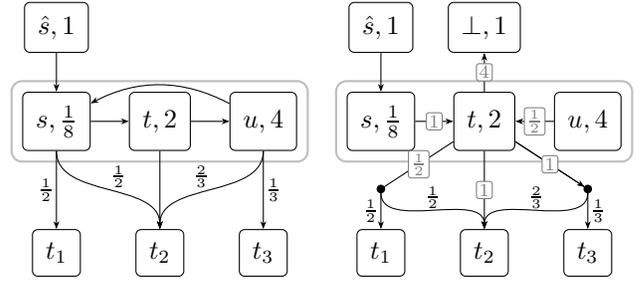

Let $\MDP = (\States, \Actions, \mdptransitions)$ with reward function $r$ and initial state $\initialstate \in \States$ be a strongly connected MDP as above. 
Suppose $\ExpectationMDP<\MDP, \initialstate><\max>[\MP_{\log(\reward)}] = 0$.

We provide an adaptation of the \emph{spider construction} of \cite{BBDGS2018} to remove non-gambling max-BSCCs $E$ (containing only cycles with multiplicative reward $1$).
So, let $E = (T, B)$ be such a max-BSCC where $B \colon T \to \Actions$ specifies which action is part of the BSCC.
We construct an MDP $\MDP_E = (\States', \Actions', \mdptransitions')$ in which the non-gambling max-BSCC $E$ is removed.

Since all cycles have multiplicative reward $1$, for any pair of states $s, t$ there again is a unique multiplicative reward for all paths from $s$ to $t$.
Define this value as $R(s, t)$.
Note that $R(s, t)$ can be computed exactly as in \cref{sec:mc:scc}.

Let $s_E$ be an arbitrary fixed state in $E$.
In the sequel, to ease the formulation, we use the following abbreviation:
When we say we add an action to state $s$ with successor distribution $d$ and reward $x$, this means creating an intermediate state $s_x$ with reward $x$, to which $s$ can transition with probability $1$, and adding a single action to $s_x$ which has the successor distribution $d$.
We then perform the following steps:

Step~1:
	First, we compute the value of the state $s_E$ within the end component.
	Here, we have to distinguish whether we are interested in $\pathreward$  or in $\pathrewardinf$.

Step~2:
	If a scheduler stays in $E$ forever, the  expected value of $\pathreward$ when starting in $s_E$ is $\overline{v_E} \coloneqq \max_{t\in T} R(s_E,t)$.
	Now, we add a new absorbing state $\bot$ with a self-loop and reward $1$.
	In $s_E$, we enable a new action $\mathit{stay}$ that leads to $\bot$ with probability $1$ and reward $\overline{v_E}/\reward(s_E)$.

	Regarding $\pathrewardinf$, the  expected value of $\pathrewardinf$ when starting in $s_E$ is $\underline{v_E} \coloneqq \min_{t\in T} R(s_E,t)$.
	Also in this case, we add the absorbing state $\bot$ with a self-loop and reward $1$ and the  action $\mathit{stay}$ that leads to $\bot$ with probability $1$ and reward $\underline{v_E}/\reward(s_E)$.

Step~3:
	Next, we remove the action  $B(t)$ from all states $t \in T \setminus \{s_E\}$.
	To replace these, we add the action $\mathit{center}_t$ to each $t$, which leads to the state $s_E$ with probability $1$ and with reward $R(t, s_E)/\reward(t)$.

Step~4:
	Finally, for each state $t \in T$ and  $\beta \not = B(t)$, we enable a new action $\tau_{t, \beta}$ in state $s_E$ leading to the same successors as $\beta$ in $t$ with the same probabilities.
	More formally, for all $s\in S$, we let $\mdptransitions'(s_E, \tau_{t,\beta}, s) = \mdptransitions(t, \beta, s)$ and assign a reward of $R(s_E,t) \cdot \reward(t)/\reward(s_E)$ to $\tau_{t, \beta}$.
All remaining transition probabilities and rewards remain unchanged.
Denote the resulting MDP by $\MDP_E=(\States', \Actions', \mdptransitions')$.

\begin{restatable}{lemma}{correctnessspider} \label{stm:spider_preserves_value}
	We have $\ExpectationMDP<\MDP, \initialstate><\max>[\pathreward] = \ExpectationMDP<\MDP_E, \initialstate><\max>[\pathreward]$ when using the value $\overline{v_E}$ in the adapted spider construction.
	Analogously, $\ExpectationMDP<\MDP, \initialstate><\max>[\pathrewardinf] = \ExpectationMDP<\MDP_E, \initialstate><\max>[\pathrewardinf]$ with the value $\underline{v_E}$.
\end{restatable}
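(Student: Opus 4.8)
The plan is to prove both equalities by mutual scheduler translations between $\MDP$ and $\MDP_E$, carrying out the argument for $\pathreward$ in detail and obtaining the $\pathrewardinf$ statement by replacing $\overline{v_E},\limsup,\max$ with $\underline{v_E},\liminf,\min$ throughout.

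First I would record the combinatorial facts underlying the gadgets. Since $E$ is non-gambling, \Cref{stm:unique_reward} gives a well-defined value $R(a,b)$ for every pair $a,b\in T$, and $R$ is multiplicative along concatenations, $R(a,b)R(b,c)=R(a,c)$ with $R(a,a)=1$ (one can always route through $s_E$, as $E$ is strongly connected); hence $\overline{v_E}=\max_{t\in T}R(s_E,t)$ and, for every $p\in T$, $\max_{t\in T}R(p,t)=R(p,s_E)\cdot\overline{v_E}$. Checking the intermediate-state rewards then shows that, as a multiplicative factor on the running product, $\mathit{center}_t$ contributes $R(t,s_E)$, $\tau_{t,\beta}$ contributes $R(s_E,t)\cdot\reward(t)$ and then proceeds with the distribution $\mdptransitions(t,\beta,\cdot)$, and $\mathit{stay}$ contributes $\overline{v_E}$ and then locks the run into the reward-$1$ sink $\bot$. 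I would also note that the retained action $B(s_E)$ together with the $\mathit{center}_t$-actions only forms cycles of multiplicative reward $1$ (the $1$-cycle property forces $R(t'',s_E)=1/\reward(s_E)$ for every $t''\in\support(\mdptransitions(s_E,B(s_E),\cdot))$), so the residual structure left by the construction is harmless and, in particular, introduces no new gambling BSCC.

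Next I would settle the two ``structure-preserving'' inequalities. For $\ExpectationMDP<\MDP, \initialstate><\max>[\pathreward]\geq\ExpectationMDP<\MDP_E, \initialstate><\max>[\pathreward]$, given $\sched'$ for $\MDP_E$ I build $\sched$ for $\MDP$ that mirrors $\sched'$ outside the compressed copy of $E$, realizes $\mathit{center}_t$ by following a fixed $t\to s_E$ path inside $E$, realizes $\tau_{t,\beta}$ by following a fixed $s_E\to t$ path and then playing $\beta$, and realizes $\mathit{stay}$ by never leaving $E$ again. This yields a probability-preserving coupling of the path spaces: at every step where both runs sit at a shared state the running products coincide (by the factor computation and $R(p,s_E)R(s_E,q)=R(p,q)$), and staying in the BSCC $E$ forever gives $\pathreward$ equal to (the product accumulated on entering)$\times\overline{v_E}$ almost surely, since almost surely all states of $T$ are visited infinitely often and the product inside $E$ is always that accumulated value times $R(s_E,\cdot)$. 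Inside a detour the $\MDP$-run attains further intermediate products, which can only raise a $\limsup$, so $\pathreward$ along the $\MDP$-run is $\geq$ that along the $\MDP_E$-run pointwise, and taking expectations gives the inequality. Dually, $\ExpectationMDP<\MDP, \initialstate><\max>[\pathrewardinf]\leq\ExpectationMDP<\MDP_E, \initialstate><\max>[\pathrewardinf]$ follows by translating a scheduler of $\MDP$ into one of $\MDP_E$ that routes every $E$-wander through $s_E$ via the matching $\mathit{center}/\tau$ actions, the $\MDP_E$-run dropping precisely the interior low-water values and hence only raising the $\liminf$.

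The remaining inequalities, $\ExpectationMDP<\MDP, \initialstate><\max>[\pathreward]\leq\ExpectationMDP<\MDP_E, \initialstate><\max>[\pathreward]$ and $\ExpectationMDP<\MDP, \initialstate><\max>[\pathrewardinf]\geq\ExpectationMDP<\MDP_E, \initialstate><\max>[\pathrewardinf]$, are where I expect the main obstacle. The difficulty is that in $\MDP$ a scheduler may, on infinitely many excursions into $E$, wander up to the extremal interior product (which contributes to the $\limsup$, resp.\ $\liminf$) and then leave without committing, whereas in $\MDP_E$ the interior of $E$ is no longer reachable and the factor $\overline{v_E}$ can be captured only by the terminal action $\mathit{stay}$. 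The quantitative point is that every interior product attained during an excursion entering at $p$ with accumulated product $W$ is bounded above by $W\cdot R(p,s_E)\cdot\overline{v_E}$, which is exactly the factor $\MDP_E$ achieves by playing $\mathit{center}_p$ followed by $\mathit{stay}$. So, given $\sched$ for $\MDP$ and $\epsilon>0$, I would construct $\sched'$ for $\MDP_E$ that simulates $\sched$ but at each visit to $s_E$ during an excursion commits, with a carefully chosen vanishing probability, to playing $\mathit{stay}$; a Borel--Cantelli / optional-stopping argument ensures $\sched'$ commits almost surely and, in the limit, locks in a product within $\epsilon$ of $\sup_i\big(W_{s_E}^{(i)}\cdot\overline{v_E}\big)\geq\pathreward(\rho)$ along a trajectory $\rho$ of $\sched$ (where $W_{s_E}^{(i)}$ is the product accumulated at $s_E$ on the $i$-th excursion), the finitely-many-excursions case already being covered by the structure-preserving argument. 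Taking the supremum over $\sched$ and letting $\epsilon\to0$ gives the inequality; the $\pathrewardinf$ case is the mirror image, committing to $\mathit{stay}$ to secure the low extremal value $\underline{v_E}$, and since the construction creates no new gambling BSCC the special phenomenon of \Cref{rem:infimum_gambling_infinite} does not interfere.
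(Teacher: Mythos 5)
Your overall strategy---mutual scheduler translations together with the reward bookkeeping $R(p,s_E)\cdot R(s_E,t)=R(p,t)$ and $R(p,s_E)\cdot\overline{v_E}=\max_{t'\in T}R(p,t')$---is exactly the paper's, and your two ``structure-preserving'' inequalities are fine (modulo the small point that mimicking ``stay in $E$ forever with some probability'' requires the $\MDP_E$-scheduler to randomize between $\mathit{stay}$ and the $\tau_{t,\beta}$ according to the exit distribution induced by the original scheduler; this is routine). The paper's own proof consists essentially of only this part. You also correctly isolate the real difficulty, which the paper passes over in silence: a path of $\MDP$ that enters and exits $E$ infinitely often can have its $\limsup$ driven by the interior peaks $V_i\cdot\overline{v_E}$ of the excursions (where $V_i$ is the running product at $s_E$ in excursion $i$), while the translated path in $\MDP_E$ only ever attains the values $V_i$ and can recover the factor $\overline{v_E}$ only by irrevocably playing $\mathit{stay}$.

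However, your resolution of that case has a genuine gap. Committing to $\mathit{stay}$ ``with a carefully chosen vanishing probability'' does make commitment happen almost surely, but the value locked in is $V_I\cdot\overline{v_E}$ for a \emph{random stopping index} $I$, and nothing guarantees this is within $\epsilon$ of $\sup_i V_i\cdot\overline{v_E}$: a stopping rule cannot see the future of the sequence, and in general $\sup_{\tau}\Expectation[V_\tau]$ is strictly smaller than $\Expectation[{\limsup}_i\, V_i]$ (this is precisely the optimal-stopping-versus-prophet phenomenon, and your argument does not exploit any structure of the $V_i$ that would rule it out). Moreover, a scheduler that commits almost surely forfeits the contribution to the $\limsup$ of the times spent outside $E$, so even with the stopping issue resolved you would still need to show that one scheduler captures, in expectation, the maximum of that outside contribution and $\overline{v_E}\cdot{\limsup}_i V_i$. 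The mirrored claim for $\pathrewardinf$ is in even worse shape: there the problematic direction is from $\MDP_E$ to $\MDP$, where no analogue of ``committing to $\mathit{stay}$'' is available---realizing infinitely many exited compressed excursions inside $\MDP$ unavoidably introduces the low interior values $V_i\cdot\underline{v_E}$, so ``securing'' a value is not an option for the $\MDP$-scheduler. What is missing is an argument that the infinitely-many-exited-excursions regime cannot contribute more (resp.\ less) to the expectation than what $\MDP_E$ achieves, e.g.\ via the classification of the end component formed by the infinitely often visited state-action pairs (if its maximal logarithmic mean payoff is negative the running product tends to $0$ and the tail is irrelevant; the zero case falls under the gambling/non-gambling dichotomy already treated elsewhere), rather than a direct stopping-time simulation.
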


\begin{remark}
The correctness of the value $\underline{v_E}$ used for action $\mathit{stay}$ in case we investigate $\pathrewardinf$ relies on the fact that $E$ is a BSCC and hence staying in $E$ almost surely 
implies visiting every state in $E$ infinitely often. Applying the adapted spider construction to ECs instead of BSCCs would require additional work to determine the correct value to use for the 
action $\mathit{stay}$. 
For $\pathreward$, on the other hand, we could apply the adapted spider construction also to non-gambling max-ECs that are not BSCCs.
The reason is that staying in an end component $E$, we would have $\ExpectationMDP<E, t><\max>[\pathreward] = \max_{t'\in T} R(t,t')$. In contrast,
$\ExpectationMDP<E, t><\max>[\pathrewardinf] \not= \min_{t'\in T} R(t,t')$ is possible.
\end{remark}

\subsubsection{Iterative application of the adapted spider construction}
Still assuming that $\ExpectationMDP<\MDP, \initialstate><\max>[\MP_{\log(\reward)}] = 0$, we apply the adapted spider construction iteratively to remove all non-gambling max-BSCCs.
Starting with the MDP $\MDP_0=\MDP$, we construct a sequence of MDPs as follows: If $\MDP_i$ has been constructed and contains a non-gambling max-BSCC that is not an absorbing state,
we remove one such BSCC with the spider construction. While doing that, we always choose states present in the original MDP $\MDP$ as the center of the spider construction and not auxiliary states added in the process.  This is always possible: As all new auxiliary states introduced in previous applications of the spider construction can only be reached via one action from a state present before the respective application of the spider construction, an auxiliary state is only part of a BSCC if also the corresponding state from the original MDP is part of the BSCC.
 We let $\MDP_{i+1}$ be the resulting MDP. If $\MDP_i$ contains no non-gambling max-BSCC except for absorbing states, 
the process terminates.
To show that this always happens,  
we use the following result.

\begin{restatable}{lemma}{spidernumber}
\label{lem:spider_number_iterations}
With $\MDP = (\States, \Actions, \mdptransitions)$ and $\MDP_E = (\States', \Actions', \mdptransitions')$ as above, we have
$
{\sum}_{t\in S} \left|\Actions(t)\right| - 1  > {\sum}_{t\in S'} \left|\Actions'(t)\setminus\{\mathit{stay}\}\right| - 1
$.
\end{restatable}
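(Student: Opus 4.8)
The plan is to track the quantity $\Phi(\MDP) = \sum_{t\in S} (|\Actions(t)| - 1)$ (equivalently $\sum_t |\Actions(t)| - |S|$, the ``excess'' of actions over states) and show the spider construction strictly decreases it, once we discount the newly introduced $\mathit{stay}$ actions as the statement does. The intuition is that the non-gambling max-BSCC $E = (T,B)$ is a strongly connected sub-MDP using one action $B(t)$ per state $t\in T$, so it contains exactly $|T|$ ``BSCC actions''; the construction deletes all of them but re-attaches the associated branching (the actions $\beta\neq B(t)$ that used to leave $T$ from various states) onto the single center state $s_E$, plus it adds $\mathit{center}_t$ for each $t\in T\setminus\{s_E\}$. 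The claim is that the number of re-attached and newly created non-$\mathit{stay}$ actions is strictly smaller than $|T|$, so $\Phi$ drops.

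Concretely, I would carry out the following steps. First, recall from the construction exactly which actions are added and removed, being careful about the auxiliary-state convention (``adding an action with successor distribution $d$ and reward $x$'' really means adding an intermediate state plus one action on it; such an intermediate state contributes $0$ to the excess since it has exactly one action). So only the ``top-level'' action at the state where it is attached counts toward $\Phi$: for $s_E$ we gain the action $\mathit{stay}$ (excluded by the statement), plus one action $\tau_{t,\beta}$ for every pair $(t,\beta)$ with $t\in T$, $\beta\in\Actions(t)\setminus\{B(t)\}$; for each $t\in T\setminus\{s_E\}$ we gain the single action $\mathit{center}_t$ and lose the action $B(t)$; and for $s_E$ we also lose $B(s_E)$. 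Second, tally the net change to $\sum_t |\Actions(t)|$ when $\mathit{stay}$ is disregarded: we lose the $|T|$ actions $B(t)$, we gain $|T|-1$ actions $\mathit{center}_t$, and we gain $\sum_{t\in T}(|\Actions(t)|-1)$ actions $\tau_{t,\beta}$. Third, observe $S' \supseteq S$ with $|S'| = |S| + (\text{number of newly created intermediate states})$; since each newly created intermediate state carries exactly one action, it contributes $0$ to the excess and so cancels. Hence the net change to $\Phi$ (ignoring $\mathit{stay}$) is $-|T| + (|T|-1) + \sum_{t\in T}(|\Actions(t)|-1) = -1 + \sum_{t\in T}(|\Actions(t)|-1)$, and I must show this is negative, i.e.\ $\sum_{t\in T}(|\Actions(t)|-1) = 0$.

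The key observation making this work is that $E$ is a max-BSCC: a BSCC of $\MDP^{\max}$, meaning every state $t\in T$ has $|\Actions^{\max}(t)| = 1$ in the relevant sub-MDP, so there is only one logarithmic-mean-payoff-optimal action at each state of $T$, namely $B(t)$. I would argue that any action $\beta\in\Actions(t)\setminus\{B(t)\}$ that stays within $T$ would give an alternative action inside the max-EC, and since $E$ is a \emph{BSCC} of $\MDP^{\max}$ (not just an EC), no such in-$T$ alternative exists; hence every $\beta\neq B(t)$ leaves $T$, but more importantly, when we restrict attention to the sub-MDP $E$ itself as given by the data $(T,B)$, the action set used is literally $\{B(t)\}$ for each $t$, so within the construction the relevant ``$\Actions(t)$'' are exactly the actions of the strongly connected component we are collapsing, which has one action per state. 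Thus $\sum_{t\in T}(|\Actions(t)|-1) = 0$ and $\Phi$ strictly decreases by $1$ (discounting $\mathit{stay}$), which is precisely the stated inequality. I expect the main obstacle to be the bookkeeping around the auxiliary-state convention and pinning down exactly what ``$\Actions(t)$'' refers to in the summation — i.e.\ making rigorous that only the BSCC-defining actions of $T$ are being removed and that the auxiliary intermediate states do not inflate the excess — rather than any deep argument; once those definitions are nailed down the inequality is an immediate count.
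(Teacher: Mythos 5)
There is a genuine gap. Your tally of the construction is wrong in one place, and the claim you introduce to compensate for it is false. Concretely: in Step~4 the exit actions $\beta \in \Actions(t)\setminus\{B(t)\}$ of the states $t \in T$ are not \emph{duplicated} at $s_E$ --- they are \emph{removed} from $t$ and re-enabled at $s_E$ as $\tau_{t,\beta}$. Your bookkeeping counts the gain of $\sum_{t\in T}(\abs{\Actions(t)}-1)$ actions $\tau_{t,\beta}$ at $s_E$ but not the matching loss of those same actions at the states $t$, which is why you end up needing $\sum_{t\in T}(\abs{\Actions(t)}-1)=0$. That identity does not hold: $\Actions(t)$ in the lemma is the action set of $t$ in the ambient MDP $\MDP$, not in the sub-MDP $E=(T,B)$. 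Being a max-BSCC only says that $E$ uses one action per state; the states of $T$ may (and in general do) have further actions in $\MDP$ --- these are exactly the exits that Step~4 relocates, and if there were none, Step~4 would be vacuous and the end component could never be left (cf.\ the states $s$ and $u$ in \cref{fig:spider}, which carry probabilistic exit actions besides the BSCC action).

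Once the removal is accounted for, your potential-function argument goes through without the false claim and matches the paper's proof: each $\beta\neq B(t)$ is moved one-for-one from $t$ to $s_E$ (net $0$); each $B(t)$ with $t\neq s_E$ is replaced one-for-one by $\mathit{center}_t$ (net $0$); the auxiliary intermediate states each carry exactly one action and contribute nothing to the excess $\sum_t(\abs{\Actions(t)}-1)$, as you correctly observe; and $B(s_E)$ is replaced by $\mathit{stay}$, which is excluded from the right-hand sum. The net change is therefore exactly $-1$, independent of how many exit actions the states of $T$ have, which yields the strict inequality.
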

This allows us to bound the necessary  number of applications of the spider construction:

\begin{restatable}{lemma}{boundspider}
The iterative removal of non-gambling max-BSCCs terminates after at most $\sum_{t\in S} \left|\Actions(t)\right|$ successive applications of the adapted spider construction.
\end{restatable}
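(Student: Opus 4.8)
The plan is to combine the strict decrease established in Lemma~\ref{lem:spider_number_iterations} with a simple monovariant argument. Define, for any MDP $\MDP'=(\States', \Actions', \mdptransitions')$ obtained during the iterative process, the quantity $\Phi(\MDP') = \sum_{t\in \States'} \left(\left|\Actions'(t)\setminus\{\mathit{stay}\}\right| - 1\right)$, i.e.\ the total number of non-$\mathit{stay}$ actions minus the number of states. For the original MDP $\MDP_0=\MDP$ this is at most $\sum_{t\in \States} |\Actions(t)|$ (and in fact $\Phi(\MDP_0) = \sum_{t\in\States}(|\Actions(t)|-1)$ since $\MDP$ has no $\mathit{stay}$ actions). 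The first step is to observe that $\Phi$ is always a non-negative integer: each state retains at least one enabled action throughout the construction (the center action $\mathit{center}_t$ is added to every $t\in T\setminus\{s_E\}$ when $B(t)$ is removed, and $s_E$ keeps its action together with the new $\tau_{t,\beta}$'s), so $|\Actions'(t)\setminus\{\mathit{stay}\}| \geq 1$ for every $t$, hence each summand is $\geq 0$. One should double-check the edge case of states that become absorbing or auxiliary states introduced in previous rounds, but in all cases each state keeps a non-$\mathit{stay}$ outgoing action, so non-negativity holds.

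The second step is to apply Lemma~\ref{lem:spider_number_iterations} directly: whenever we perform one adapted spider construction step, passing from $\MDP_i$ to $\MDP_{i+1}$, we have $\Phi(\MDP_i) > \Phi(\MDP_{i+1})$. Since $\Phi$ takes values in the non-negative integers and strictly decreases with each application, the process can run for at most $\Phi(\MDP_0) + 1$ steps before it must terminate — and termination, by definition of the procedure, means that $\MDP_i$ contains no non-gambling max-BSCC other than absorbing states. Because $\Phi(\MDP_0) = \sum_{t\in\States}(|\Actions(t)|-1) < \sum_{t\in\States}|\Actions(t)|$, we get the claimed bound of at most $\sum_{t\in\States}|\Actions(t)|$ successive applications. (If one wants to be careful about whether the bound is $\sum_t(|\Actions(t)|-1)$ or $\sum_t|\Actions(t)|$, the latter is the stated, slightly looser, bound and follows immediately.)

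The only genuinely delicate point — and the one I would expect to require the most care — is verifying that $\mathit{stay}$ actions really can be excluded from the count without breaking the monovariant, i.e.\ that the decrease in Lemma~\ref{lem:spider_number_iterations} is robust under the fact that later spider steps are always centered at original (non-auxiliary) states. The excerpt already argues that an auxiliary state lies in a BSCC only if the corresponding original state does, so the BSCC $E$ removed at each stage, restricted to the relevant states, always contains an original state to serve as $s_E$; this is what guarantees the construction is well-defined at every iteration and that Lemma~\ref{lem:spider_number_iterations} applies verbatim each time. Granting that, the argument is a pure finiteness/monovariant wrap-up with no further computation needed.
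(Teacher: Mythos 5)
Your proposal is correct and follows essentially the same route as the paper: both use Lemma~\ref{lem:spider_number_iterations} as a strictly decreasing integer monovariant on the number of non-$\mathit{stay}$ actions, combined with the observation that centers are always chosen among original states so the lemma applies at every iteration. The only nitpick is that $\Phi$ can drop to $-1$ in the degenerate final step where the removed BSCC has no exiting actions (so $s_E$ is left with only $\mathit{stay}$), but that step coincides with termination, so your bound survives unchanged.
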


All in all, we obtain:
\begin{lemma}
\label{thm:poly_spider_application}
\label{lem:complexity_pre-processing}
In an MDP $\MDP$ with rational reward function $\reward$ as above, the iterative identification and removal of non-gambling max-BSCCs can be done in polynomial space.
\end{lemma}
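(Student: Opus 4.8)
The plan is to combine three facts: only polynomially many applications of the adapted spider construction occur; the intermediate MDPs $\MDP_i$ stay polynomially sized provided their rewards are kept in a succinct, factored form; and each single round -- including the search for a non-gambling max-BSCC to be removed -- runs in polynomial space. Here $\MDP$ is, as throughout this part, strongly connected with maximal expected logarithmic mean payoff $0$.

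The bound on the number of rounds is \Cref{lem:spider_number_iterations} together with the lemma following it, giving at most $m = \sum_t\lvert\Actions(t)\rvert$ applications. A short calculation from \Cref{lem:spider_number_iterations} -- summing the provided inequality along the at most $m$ rounds and using that each round adds exactly one $\mathit{stay}$ action -- bounds the number of non-$\mathit{stay}$ actions, and hence the total number of actions, of every $\MDP_i$ by $\BigO(m^2)$. Since each round adds only one auxiliary state per new $\mathit{center}_t$, $\tau_{t,\beta}$, and $\mathit{stay}$ action plus the single sink $\bot$, and since transition probabilities are only copied from $\MDP_i$ or set to $1$, the number of states, the number of actions, and the bit lengths of all transition probabilities of every $\MDP_i$ stay polynomial in the size of $\MDP$.

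The main obstacle is controlling the \emph{rewards}: the construction introduces rewards that are products of the values $R(s,t)$ and of old rewards, and each $R(s,t)$ is itself a product of up to $\lvert\States_i\rvert$ rewards, so in the ordinary fraction encoding the bit length of rewards would be multiplied by roughly $\lvert\States_i\rvert$ per round and become exponential. I would resolve this in the spirit of CSRI, storing every reward as a signed list of base--exponent pairs whose bases are the (co-prime) numerators and denominators of the \emph{original} rewards of $\MDP$: multiplication then only merges such lists and adds exponents, the set of occurring bases never changes, and each exponent stays bounded by $\lvert\States\rvert$ raised to the (polynomial) number of rounds and so has polynomially many bits. Thus $\MDP_i$ together with its rewards admits a polynomial-size representation throughout. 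With this encoding, all graph-theoretic work -- computing $\MDP_i^{\max}$, its maximal end components and their BSCCs, reachability, and the breadth-first computation of the values $R(s,t)$ exactly as in \Cref{sec:mc:scc} -- runs in polynomial time; the equalities $\prod_v r(v)^{c_v}=1$ with rational $c_v$ coming from expected visiting times, which arise when deciding $\Actions^\ast$ according to \Cref{stm:identify_good_actions} and the lemma following it, become ESRI instances after clearing denominators and are hence decidable in polynomial time; and the only steps that are not obviously polynomial -- computing the sets $\Actions^{\max}(s)$ and, within the spider construction, selecting $\overline{v_E}=\max_{t\in T}R(s_E,t)$ and $\underline{v_E}=\min_{t\in T}R(s_E,t)$ via pairwise comparisons of succinctly represented numbers -- lie in PSPACE.

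It remains to assemble the pieces. In each of the $\le m$ rounds one computes $\Actions^{\max}$, forms $\MDP_i^{\max}$ (in PSPACE), and then, within each of its at most $\lvert\States_i\rvert$ maximal end components, computes expected visiting times and the set $\Actions^\ast$ (in polynomial time modulo ESRI queries), restricts to $\Actions^\ast$-actions, and extracts from a maximal end component of the result a BSCC that is not an absorbing state; any such BSCC is a non-gambling max-BSCC, and one is produced whenever a non-gambling, non-absorbing max-BSCC exists. Correctness of this search uses the observation that, within a maximal end component, the values $\log(r(u))+Q(u,x)$ are independent of the reference state $x$ up to an additive constant -- by the uniqueness underlying \Cref{lem:reward_to_is_unique} -- so $\Actions^\ast$, and hence the property of a BSCC being non-gambling, is well defined component-wide. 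If such a BSCC is found, one performs the adapted spider construction on $\MDP_i$ with a center chosen among the original states (always possible, as argued in the text), obtaining $\MDP_{i+1}$; otherwise the process stops. Since each round uses only polynomial space, the $\MDP_i$ remain polynomially sized, and space is reused from round to round, the entire identification-and-removal procedure runs in polynomial space.
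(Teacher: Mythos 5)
Your proposal is correct and follows the same overall route the paper takes: the lemma is stated as a direct combination of the preceding results (polynomially many rounds via \Cref{lem:spider_number_iterations} and its successor, PSPACE computation of $\Actions^{\max}$, polynomial-time computation of $\Actions^\ast$ modulo ESRI queries, and the spider construction itself), and the paper gives no further argument. The one place where you go beyond the paper is the treatment of reward representations, and this is a genuinely necessary addition rather than pedantry: the rewards introduced by the spider construction are products of up to $\abs{\States_i}$ rewards of $\MDP_i$, so under the plain fraction encoding the bit length could be multiplied by a polynomial factor in every round and become exponential after polynomially many rounds --- which would already prevent writing the intermediate MDPs down in polynomial space. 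Your factored encoding (fixed bases taken from the numerators and denominators of the original rewards, exponents of polynomially many bits since they grow at most by a factor $\abs{\States_i}$ per round) resolves this, and you correctly observe that all subsequent comparisons and equality tests then remain CSRI/ESRI instances of polynomial size, so the PSPACE (respectively polynomial-time) bounds for computing $\Actions^{\max}$, $\Actions^\ast$, and the values $R(s,t)$, $\overline{v_E}$, $\underline{v_E}$ are preserved. The remaining bookkeeping (number of rounds, growth of the state and action sets, well-definedness of $\Actions^\ast$ per max-EC, and extraction of a non-gambling max-BSCC from the $\Actions^\ast$-restriction via \Cref{stm:identify_good_actions}) matches the paper's intent and is sound.
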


\subsection{Characterization and computation of maximal expected multiplicative rewards in MDPs}

Given an arbitrary MDP $\MDP=(\States,A,\mdptransitions)$, an initial state $\initialstate\in \States$ and a reward function $\reward \colon \States \to \RealsPositive$, we now address the problem to find the values $\ExpectationMDP<\MDP, \initialstate><\max>[\pathreward]$ and $\ExpectationMDP<\MDP, \initialstate><\max>[\pathrewardinf]$.
Recall that we assume that all states are reachable.

\subsubsection{Applying the analysis of ECs}
By the previous section, we know that both values are infinite if there is an end component $\cE$ with $\ExpectationMDP<\cE><\max>[\MP_{\log(\reward)}] > 0$.
This can be checked in polynomial space by \cref{prop:check_logMP0}.
If the maximal logarithmic mean-payoff is zero, we can furthermore remove all non-gambling max-BSCCs in polynomial space by \Cref{lem:complexity_pre-processing}.

\begin{assumption} \label{ass:log-mp}
	All end components satisfy $\ExpectationMDP<\cE><\max>[\MP_{\log(\reward)}] \leq 0$ and the spider construction has been applied to all ECs that only contain 1-cycles, i.e.\ there are no such ECs except for absorbing states with reward $1$.\footnote{Recall here that the reward of new intermediate states added by the spider construction depends on whether we investigate $\pathreward$ or $\pathrewardinf$.}
\end{assumption}
Regarding $\ExpectationMDP<\MDP, \initialstate><\max>[\pathreward]$, \Cref{lem:gambling_limsup} tells us that the existence of an end component $\cE$ with $\ExpectationMDP<\cE><\max>[\MP_{\log(\reward)}] = 0$ now implies that $\ExpectationMDP<\MDP, \initialstate><\max>[\pathreward]=\infty$
as there are gambling max-BSCCs in this case by our assumption.
So, when addressing $\ExpectationMDP<\MDP, \initialstate><\max>[\pathreward]$, we will assume that all end components $\cE$ except for absorbing states satisfy $\ExpectationMDP<\cE><\max>[\MP_{\log(\reward)}] < 0$.

For  $\ExpectationMDP<\MDP, \initialstate><\max>[\pathrewardinf]$, we have to take a closer look at remaining end components $\cE$ with $\ExpectationMDP<\cE><\max>[\MP_{\log(\reward)}] = 0$.

\begin{restatable}{lemma}{ECliminf}
\label{lem:ECliminf}
Let $\MDP$ be an MDP as above satisfying \Cref{ass:log-mp}. For a state $s$ in an EC $\cE$ with $\ExpectationMDP<\cE><\max>[\MP_{\log(\reward)}] = 0$, we have that $\ExpectationMDP<\MDP, \initialstate><\max>[\pathrewardinf]=\infty$ if and only if an absorbing state with reward $1$ is reachable from $s$, and $\ExpectationMDP<\MDP, \initialstate><\max>[\pathrewardinf]=0$ otherwise (i.e.\ no absorbing state with reward $1$ is reachable).
\end{restatable}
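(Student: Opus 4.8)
I would split the statement into two implications: (a)~if an absorbing state with reward~$1$ is reachable from $s$ then the value is $\infty$, and (b)~if no such state is reachable then the value is $0$; the missing \enquote{only if} direction of the $\infty$-characterisation is just the contrapositive of~(b). I would argue about the value of $s$ itself, $\ExpectationMDP<\MDP, s><\max>[\pathrewardinf]$; the claim for $\initialstate$ then follows, since every state is reachable from $\initialstate$ and in the $\infty$-case one may prepend a finite positive-reward path from $\initialstate$ to $s$. For~(a): since $s$ lies in an end component $\cE$ with $\ExpectationMDP<\cE><\max>[\MP_{\log(\reward)}]=0$ but all non-gambling max-BSCCs have been removed (\Cref{ass:log-mp}), $\cE$ must contain a gambling max-BSCC $B$ (otherwise $\cE$ would contain no max-BSCC, contradicting that its maximal logarithmic mean payoff is attained by an MD-scheduler). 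As $\cE$ is strongly connected, $B$ is reachable from $s$, and an absorbing state $\hat t$ with reward $1$ (which exists by hypothesis) is reachable from $B$ via $s$. Now I follow the construction sketched in \Cref{rem:infimum_gambling_infinite} and used in \Cref{lem:gambling_limsup}: fixing the BSCC strategy of $B$ turns it into a strongly connected Markov chain of zero logarithmic mean payoff containing an n-cycle, so the partial sums of $\log(\reward)$ sampled at successive returns to a fixed state $b\in B$ form a random walk whose i.i.d.\ increments have mean $0$ (renewal-reward) and are not identically $0$ (because of the n-cycle), hence almost surely exceed every level. For a threshold $K$, the scheduler $\strategy_K$ first reaches $b$ almost surely from $s$ inside $\cE$, then plays the BSCC strategy until the first return to $b$ at which the accumulated multiplicative reward is $\geq K$, and then follows a fixed finite path from $b$ to $\hat t$; this path is realised with some fixed probability $\varepsilon>0$ while collecting a fixed reward $\delta>0$, and after $\hat t$ all rewards are $1$, so on this event $\pathrewardinf\geq K\delta$. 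Hence $\ExpectationMDP<\MDP, s><\strategy_K>[\pathrewardinf]\geq\varepsilon\delta K$, which is unbounded in $K$.

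For~(b), assume no absorbing state with reward $1$ is reachable from $s$; I would show that $\pathrewardinf=0$ holds $\ProbabilityMDP<\MDP, s><\strategy>$-almost surely for every scheduler $\strategy$, which forces $\ExpectationMDP<\MDP, s><\max>[\pathrewardinf]=0$. Almost surely, the set of states visited infinitely often --- together with the actions chosen infinitely often --- forms an end component $\cE_\infinitepath$ in which $\infinitepath$ eventually stays, and $\cE_\infinitepath$ consists of states reachable from $s$. If $\cE_\infinitepath=\{a\}$, then $a$ carries a self-loop action taken infinitely often; by \Cref{ass:log-mp} we have $\reward(a)\leq 1$, and $\reward(a)=1$ is impossible (the resulting end component with only 1-cycles would then be a genuine absorbing state with reward $1$, not reachable from $s$), so $\reward(a)<1$ and the tail partial products tend to $0$, i.e.\ $\pathrewardinf(\infinitepath)=0$. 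Otherwise $\cE_\infinitepath$ has at least two states; by \Cref{ass:log-mp} it cannot be an end component all of whose cycles are 1-cycles (those were removed by the spider construction), so it contains an n-cycle, while still $\ExpectationMDP<\cE_\infinitepath><\max>[\MP_{\log(\reward)}]\leq 0$. Transporting the analysis of strongly connected Markov chains (\Cref{lem:meanpayoffMC}, \Cref{lem:zeromeanpayoff}; compare the right-hand chain of \Cref{fig:logzeromp_example}) to the behaviour of $\infinitepath$ inside $\cE_\infinitepath$, the partial sums of $\log(\reward)$ along $\infinitepath$ either drift to $-\infty$ (negative logarithmic mean payoff) or perform a recurrent, non-degenerate walk (zero logarithmic mean payoff together with the n-cycle); in both cases their $\liminf$ equals $-\infty$, so $\pathrewardinf(\infinitepath)=\liminf_n\prod_{k\leq n}\reward(\infinitepath_k)=0$.

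The main obstacle is the final step of~(b) for a non-trivial $\cE_\infinitepath$ under a history-dependent scheduler, which does not literally generate a random walk. I would handle it via the empirical state-action frequencies along $\infinitepath$: every subsequential limit is a convex combination of stationary distributions of BSCCs of MD-schedulers inside $\cE_\infinitepath$, each of logarithmic mean payoff $\leq 0$, which already yields $\limsup_n \frac{1}{n}\sum_{k\leq n}\log(\reward(\infinitepath_k))\leq 0$; and were the partial sums bounded below, these limits would be supported on BSCCs of logarithmic mean payoff exactly $0$, i.e.\ on max-BSCCs, which by \Cref{ass:log-mp} are gambling, so an n-cycle in such a BSCC is traversed infinitely often and forces oscillations of unbounded amplitude --- contradicting boundedness. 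This is precisely the \enquote{random walk without drift} phenomenon exhibited by \Cref{fig:logzeromp_example}. Everything else --- the standard almost-sure convergence of a path into an end component, the renewal-reward computation, and the reachability bookkeeping already present in \Cref{rem:infimum_gambling_infinite} --- is routine.
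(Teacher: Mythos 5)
Your proof follows essentially the same route as the paper: for the $\infty$ direction you construct the same family of schedulers $\sched_K$ that gamble inside a zero-logarithmic-mean-payoff (necessarily gambling) BSCC until the accumulated product exceeds $K$ and then exit along a fixed path towards the absorbing state, and for the $0$ direction you argue that almost every path under any scheduler either has negative logarithmic mean payoff or gambles forever inside its limit end component, so that the partial products have $\liminf$ equal to $0$. The only point where you diverge is in attempting to justify from scratch (via empirical state-action frequencies) that zero-drift gambling forces $\liminf_n \sum_{k\leq n}\log(\reward(\infinitepath_k)) = -\infty$ under arbitrary history-dependent schedulers --- the paper simply invokes the corresponding result of \cite{BBDGS2018} here, and your closing step (\enquote{oscillations of unbounded amplitude}) is the least rigorous part of an otherwise matching argument.
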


\begin{proof}[Proof sketch]
If an absorbing state with reward $1$ is reachable, a scheduler can stay in a gambling max-BSCC in $\cE$ until arbitrarily high multiplicative rewards are obtained, which happens almost surely, and then move to the absorbing state. 
Otherwise, all reachable ECs are gambling or have negative maximal  expected logarithmic mean payoff. In both cases, staying in such an EC leads to value $0$.
\end{proof}
If the situation of the lemma leading to value $\infty$ is the case, even a single infinite-memory scheduler can achieve value $\infty$:
The scheduler makes sure that a multiplicative reward of $4^i$ is achieved with probability at least $(1/2)^i$ (at least for sufficiently large $i$).

Algorithmically, the condition boils down to a simple reachability check given the results on the computation of max-ECs and non-gambling max-BSCCs presented above.
If the value of some state is $\infty$, also $\ExpectationMDP<\MDP, \initialstate><\max>[\pathrewardinf]=\infty$ as we assume that all states are reachable. 

If we assume that no state has value $\infty$ due to \Cref{lem:ECliminf},  all states in gambling max-BSCCs have value $0$. We can collapse all these states with value $0$ to a single absorbing state with value $0$.
Then, all remaining end components $\cE$ satisfy $\ExpectationMDP<\cE><\max>[\MP_{\log(\reward)}] < 0$.
So, also when addressing $\ExpectationMDP<\MDP, \initialstate><\max>[\pathrewardinf]$, we will from now on assume the following:

\begin{assumption}
\label{ass:negativeMP}
Let $\MDP$, $\initialstate$ and $\reward$ be as above.
We assume that all end components $\cE$ except for absorbing states satisfy $\ExpectationMDP<\cE><\max>[\MP_{\log(\reward)}] < 0$.
Further, we assume that all absorbing states have reward $0$ or $1$.
\end{assumption}

In summary, with our pre-processing we can check whether maximal expected values of $\pathreward$ or $\pathrewardinf$ are $\infty$ due to the behavior inside ECs. If this is not the case,
we can transform the MDP in polynomial space such that \Cref{ass:negativeMP} holds.

\subsubsection{Linear program for $\ExpectationMDP<\MDP, \initialstate><\max>[\pathreward]$ and $\ExpectationMDP<\MDP, \initialstate><\max>[\pathrewardinf]$}
\label{sec:LP}
Given an arbitrary MDP $\MDP=(\States,A,\mdptransitions)$, an initial state $\initialstate\in \States$ and a reward function $\reward \colon \States \to \RealsPositive$, we work under \Cref{ass:negativeMP}.
So, we assume that all states are reachable from $\initialstate$ and that all end components $\cE$ satisfy $\ExpectationMDP<\cE><\max>[\MP_{\log(\reward)}] < 0$ except for absorbing states with reward $1$.

\begin{figure}[t]
\begin{align*}
	&\text{Minimize ${\sum}_{s\in \States} x_s$ subject to}\tag{LP1}\label{eq:mdp_lp} \\
	&x_s \geq 0, \text{ $\forall s\in \States$,} \qquad x_t  = \reward(t), \text{ $\forall t \in T$,} \qquad \text{and} \\
	&x_s \geq  \reward(s) \cdot {\sum}_{s' \in \States} \mdptransitions(s,\alpha,s') \cdot x_{s'}, \text{ $\forall s\in S\setminus T$, $\alpha \in A(s)$.}
\end{align*}\vspace{-.5cm}
\caption{The linear program to optimize transient behaviour, called \ref{eq:mdp_lp}.} \label{fig:mdp_lp}
\end{figure}

Now, we construct a linear program with one variable $x_s$ per state $s\in \States$.
Let $T\subseteq \States$ be the set of absorbing states.
By \Cref{ass:negativeMP}, these states all have reward $0$ or $1$.
The linear program is given in \cref{fig:mdp_lp} (referred to as \ref{eq:mdp_lp}).
We show that solutions to it correspond to the values we seek.

\begin{theorem}
Consider the MDP $\MDP$, initial state $\initialstate$, and the reward function $\reward$ satisfying \Cref{ass:negativeMP}.
Then, the values
$v_s = \ExpectationMDP<\MDP, s><\max>[\pathreward] = \ExpectationMDP<\MDP, s><\max>[\pathrewardinf]$ for $s\in \States$
are the unique solution to \ref{eq:mdp_lp} if a finite solution exists.
If no feasible point exists, then $v_{\initialstate}=\infty$.
\end{theorem}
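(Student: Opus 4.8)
The plan is to reduce both objectives to a single ``multiplicative stochastic shortest path'' quantity and then to identify that quantity with the pointwise least feasible point of \ref{eq:mdp_lp}, mirroring the least‑fixed‑point / value‑iteration characterisation used for Markov chains in \Cref{sec:mc:scc}. Let $T$ be the set of absorbing states; by \Cref{ass:negativeMP} each $t\in T$ has $\reward(t)\in\{0,1\}$, so $\reward$ is idempotent on $T$. The first step — which I expect to be the main obstacle — is to show that under \Cref{ass:negativeMP} all ``useful'' reward is collected transiently: for \emph{every} scheduler $\sched$ and state $s$, almost surely $\pathreward(\infinitepath)=\pathrewardinf(\infinitepath)$, and this common value is $\multreward^{\to T}(\infinitepath)$, the multiplicative reward accumulated up to the first visit of $T$, on the event $\reach T$, and is $0$ on its complement (with $\multreward^{\to T}\coloneqq 0$ off $\reach T$). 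On $\reach T$ this is immediate, as once a state of $T$ is entered the partial products $\prod_{k\le n}\reward(\infinitepath_k)$ are eventually constant. On the complement I would invoke the standard fact \cite{DBLP:books/daglib/0020348} that almost surely the states visited infinitely often together with the actions used infinitely often form an end component $\cE$; since $\cE$ is disjoint from $T$ it is not a single absorbing state, so $\ExpectationMDP<\cE><\max>[\MP_{\log(\reward)}]<0$ by \Cref{ass:negativeMP}, and then \Cref{lem:MP_MDP} gives $\ExpectationMDP<\cE, s><\sched>[\pathreward]=0$, hence $\pathreward=0$ almost surely for any scheduler staying in $\cE$; the tail of our run does exactly this. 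This already yields $\ExpectationMDP<\MDP, s><\max>[\pathreward]=\ExpectationMDP<\MDP, s><\max>[\pathrewardinf]=:v_s$ with $v_s=\sup_{\sched}\ExpectationMDP<\MDP, s><\sched>[\multreward^{\to T}\cdot\indicator{\reach T}]$.

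Next, assume first that all $v_s$ are finite. I would verify that $v$ is feasible for \ref{eq:mdp_lp}: the equalities $v_t=\reward(t)$ are clear, and the inequalities $v_s\ge\reward(s)\sum_{s'}\mdptransitions(s,\alpha,s')\,v_{s'}$ follow by playing $\alpha$ at $s$ and then $\varepsilon$‑optimally from the successor, using $\pathreward(s\circ\infinitepath')=\reward(s)\cdot\pathreward(\infinitepath')$ and letting $\varepsilon\to 0$ (finiteness is used here). For minimality, take any feasible $x\ge 0$ and any scheduler $\sched$ from $s$, and consider $M_n=\big(\prod_{k=0}^{n-1}\reward(\infinitepath_k)\big)\cdot x_{\infinitepath_n}$. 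The inequality constraints of \ref{eq:mdp_lp}, the equality constraints on $T$, and idempotence of $\reward$ on $T$ make $(M_n)$ a non‑negative supermartingale with $M_0=x_s$; it therefore converges almost surely to some $M_\infty\ge 0$ with $\ExpectationMDP<\MDP, s><\sched>[M_\infty]\le x_s$ by Fatou. Since $M_n$ stabilises to $\multreward^{\to T}$ on $\reach T$ while $\pathreward=0$ off $\reach T$ (first step), we get $\ExpectationMDP<\MDP, s><\sched>[\pathreward]=\ExpectationMDP<\MDP, s><\sched>[M_\infty\cdot\indicator{\reach T}]\le x_s$, hence $v_s\le x_s$. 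As the objective of \ref{eq:mdp_lp} is $\sum_s x_s$ and every feasible point dominates $v$ coordinatewise, $v$ is the unique optimal solution.

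Finally, for the infeasible case I would argue the contrapositive: if $v_{\initialstate}<\infty$ then, since all states are reachable, prepending to a near‑optimal scheduler from $s$ a finite path from $\initialstate$ to $s$ of positive probability $p$ and positive multiplicative reward $c$ shows $v_{\initialstate}\ge p\cdot c\cdot v_s$, so $v_s<\infty$ for all $s$; by the previous paragraph $v$ is then feasible, contradicting infeasibility. The genuinely delicate point throughout is the first step: isolating the structure granted by \Cref{ass:negativeMP} (absorbing states only with reward $1$, and strictly negative logarithmic mean payoff in every other end component) is what simultaneously kills the contribution of runs that never reach $T$ and forces the $\limsup$‑ and $\liminf$‑objectives to coincide. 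The supermartingale bookkeeping of the second step is then routine, the only care needed being to confirm that the stopped process really converges to $\multreward^{\to T}$ on $\reach T$, which once more hinges on $\reward$ being $\{0,1\}$‑valued on $T$.
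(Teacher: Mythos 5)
Your proof is correct, and its overall architecture matches the paper's: both arguments first reduce the objective to the expected multiplicative reward accumulated until the first visit to the absorbing set $T$ (conditioning on $\lozenge T$ and using \Cref{ass:negativeMP} to kill the contribution of runs trapped forever in end components with negative logarithmic mean payoff, which also forces $\pathreward$ and $\pathrewardinf$ to coincide), and then show that the value vector is feasible for \ref{eq:mdp_lp} and dominated coordinatewise by every feasible point. Where you genuinely diverge is in the domination step. The paper shows that any LP solution is the least fixed point of the Bellman operator via Knaster--Tarski, identifies that least fixed point with the limit of the Kleene iteration $T^n(0)$, and proves by induction that $T^{n+1}(0)$ equals the optimal $n$-step reward-to-$T$, whose limit is the value. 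You instead run a pathwise supermartingale argument: for any finite feasible $x$ and any scheduler, $M_n=\bigl(\prod_{k<n}\reward(\infinitepath_k)\bigr)\cdot x_{\infinitepath_n}$ is a non-negative supermartingale (the LP inequalities give the drift condition off $T$, and idempotence of $\reward$ on $T$ gives it on $T$), so Doob convergence plus Fatou yields $\ExpectationMDP<\MDP, s><\sched>[\pathreward]\leq x_s$, hence $v\leq x$. Both routes are sound; yours avoids setting up the operator, its monotonicity, and the induction identifying $T^n(0)$ with $n$-step values, at the price of invoking martingale convergence, and it delivers the statement that every feasible point dominates $v$ (hence uniqueness of the LP optimum) somewhat more directly. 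The two points that need care in your write-up --- that $M_n$ stabilises to the reward-to-$T$ on $\lozenge T$, and that off $\lozenge T$ the partial products tend to $0$ pathwise because the tail of the run is trapped in a limit end component with strictly negative maximal logarithmic mean payoff --- are both handled correctly, the latter with no less rigour than the paper's own one-line appeal in its proof of the corresponding convergence lemma.
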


We prove the theorem step by step in the following lemmata.
We begin by establishing that the  non-feasibility of \ref{eq:mdp_lp} implies that the value is infinite.
\begin{restatable}{lemma}{feasabilityLP}
Let $\MDP$, $\initialstate$, and $\reward$ be as above.
If \ref{eq:mdp_lp} has no (finite) solution, then $\ExpectationMDP<\MDP, \initialstate><\max>[\pathreward] = \ExpectationMDP<\MDP, \initialstate><\max>[\pathrewardinf] = \infty$.
\end{restatable}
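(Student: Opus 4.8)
The plan is to argue the contrapositive via a monotone fixed‑point argument: whenever the value at $\initialstate$ is finite one can read off a feasible point of \ref{eq:mdp_lp}, so infeasibility forces $\ExpectationMDP<\MDP, \initialstate><\max>[\pathrewardinf] = \infty$, and then $\pathrewardinf \leq \pathreward$ pointwise gives $\ExpectationMDP<\MDP, \initialstate><\max>[\pathreward] = \infty$ as well. Let $U \colon \RealsNonneg^{\States} \to \RealsNonneg^{\States}$ be the update operator underlying \ref{eq:mdp_lp}: $U(x)_t = \reward(t)$ for absorbing $t \in T$, and $U(x)_s = \max_{\alpha \in A(s)} \reward(s) \sum_{s'} \mdptransitions(s,\alpha,s') \, x_{s'}$ for $s \in \States \setminus T$. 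Any non‑negative vector $x$ with $x_t = \reward(t)$ on $T$ and $x \geq U(x)$ is feasible for \ref{eq:mdp_lp}. Starting from $x^{(0)} = 0$ I set $x^{(n+1)} = U(x^{(n)})$; as $U$ is monotone and $x^{(0)} \leq x^{(1)}$, the sequence $(x^{(n)})_n$ is pointwise non‑decreasing (with $x^{(n)}_t = \reward(t)$ for all $n \geq 1$), and I write $x^\ast = \sup_n x^{(n)} \in [0,\infty]^{\States}$ for its limit.

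The first step is to show that the value vector $v = (v_s)_{s \in \States}$, $v_s = \ExpectationMDP<\MDP, s><\max>[\pathrewardinf]$, dominates every iterate. It suffices that $v$ be a pre‑fixed point of $U$: then $v \geq x^{(0)} = 0$ and monotonicity give $v \geq x^{(n)}$ for all $n$ by induction, hence $v \geq x^\ast$. The boundary part $v_t = \reward(t)$ is immediate, since the unique run from an absorbing state of reward $0$ or $1$ realizes $\pathrewardinf = \reward(t)$. For $s \in \States \setminus T$ and $\alpha \in A(s)$ I use the shift identity $\pathrewardinf(\infinitepath) = \reward(\infinitepath_0) \cdot \pathrewardinf(\infinitepath')$ for the suffix $\infinitepath' = \infinitepath_1 \infinitepath_2 \cdots$: a scheduler that plays $\alpha$ in $s$ and then follows $\varepsilon$‑optimal, suitably truncated continuations from each successor $s'$ witnesses $v_s \geq \reward(s) \sum_{s'} \mdptransitions(s,\alpha,s') \, v_{s'}$; maximizing over $\alpha$ yields $v_s \geq U(v)_s$.

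With $v \geq x^\ast$ in hand, suppose \ref{eq:mdp_lp} has no feasible point and, for contradiction, that $x^\ast_{\initialstate} < \infty$. Any state witnessing an infinite coordinate of $x^\ast$ is non‑absorbing, since $x^\ast_t = \reward(t)$ is finite for $t \in T$; and as every state is reachable from $\initialstate$, for each $s$ there is a path $\initialstate = u_0, u_1, \ldots, u_j = s$ with $\mdptransitions(u_i, \beta_i, u_{i+1}) > 0$ and no $u_i$ with $i < j$ absorbing. Unrolling $U$ along this path gives $x^{(n+j)}_{\initialstate} \geq \big( \prod_{i < j} \reward(u_i) \, \mdptransitions(u_i,\beta_i,u_{i+1}) \big) \, x^{(n)}_s$ for all $n$, so letting $n \to \infty$ shows $x^\ast_s \leq x^\ast_{\initialstate} / c_s < \infty$ with $c_s > 0$. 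Hence $x^\ast$ is finite in every coordinate; since $U$ is continuous and monotone it commutes with the supremum of the non‑decreasing sequence $(x^{(n)})_n$, so $x^\ast = U(x^\ast)$. Thus $x^\ast \geq 0$, $x^\ast_t = \reward(t)$ on $T$, and $x^\ast \geq U(x^\ast)$, i.e.\ $x^\ast$ is feasible for \ref{eq:mdp_lp} — contradiction. Therefore $x^\ast_{\initialstate} = \infty$, whence $\ExpectationMDP<\MDP, \initialstate><\max>[\pathrewardinf] = v_{\initialstate} \geq x^\ast_{\initialstate} = \infty$, and consequently $\ExpectationMDP<\MDP, \initialstate><\max>[\pathreward] = \infty$.

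The hard part will be the pre‑fixed‑point property of $v$ in the second paragraph: it is the one place where the actual semantics of $\pathrewardinf$ (rather than a finite‑horizon surrogate) enters, and one must be careful both with suprema that are not attained and with successor states of infinite value. The remaining ingredients — the monotone iteration, continuity of $U$ from below, and the back‑propagation of an infinite value along a reachable path — are routine; note in particular that we never need to know in advance that \ref{eq:mdp_lp} has a finite solution, only that, if it does, $x^\ast$ provides one.
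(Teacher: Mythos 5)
Your proposal is correct and rests on the same core idea as the paper's proof: argue the contrapositive by showing that the vector of optimal values satisfies the constraints of \ref{eq:mdp_lp} (i.e.\ is a pre-fixed point of the Bellman operator, established via the one-step decomposition $v_s \geq \reward(s)\sum_{s'}\mdptransitions(s,\alpha,s')\,v_{s'}$), so that finiteness of the value yields a feasible point. The additional scaffolding you introduce --- the value-iteration sequence $x^{(n)}$, its limit $x^\ast$, and the reachability-based propagation of finiteness --- is not needed in the paper's shorter argument (which uses the value vector itself as the feasible point and notes finiteness propagates since all states are reachable), but it is harmless and has the minor virtue of using $\varepsilon$-optimal continuations rather than presupposing attained optima, and of treating $\pathrewardinf$ directly so that the $\pathreward$ claim follows by domination.
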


Next, assuming that LP is feasible, we show that the solution is the least fixed point of an update operator corresponding to a Bellmann optimality equation.
This update operator $T\colon \Reals_{\geq 0}^\States  \to \Reals_{\geq 0}^\States $ is defined by $T_s(x) = \reward(s)$ if $s\in T$, and ${\max}_{\alpha\in A(s)} \reward(s) \cdot {\sum}_{s' \in \States} \mdptransitions(s,\alpha,s') \cdot x_{s'}$ otherwise, i.e.\ $s\in \States\setminus T$.
for all $s\in \States$ and $x\in  \Reals_{>0}^\States$.
This operator is monotone with respect to the pointwise ordering $\geq$.
That is, if $y\geq x$ holds componentwise, then $T(y) \geq T(x)$ holds componentwise.
Using the Knaster-Tarski theorem, we show:

\begin{restatable}{lemma}{knastertarski}
Let $\MDP$, $\initialstate$, $\reward$, and the function $T$ be as above.
Let $x\in  \Reals_{>0}^\States$ be a solution to \ref{eq:mdp_lp}.
Then, $x$ is the least fixed point of $T$.
\end{restatable}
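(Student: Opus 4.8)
The plan is to characterise the feasible region of \ref{eq:mdp_lp} purely in terms of the operator $T$ and then invoke the Knaster--Tarski theorem, exactly as announced. First I would record the dictionary between feasibility and pre-fixed points: a vector $x \in \Reals_{\geq 0}^{\States}$ is feasible for \ref{eq:mdp_lp} if and only if $T(x) \leq x$ componentwise. Indeed, the constraint $x_t = \reward(t)$ for $t$ in the absorbing set $T$ is precisely $x_t = T_t(x)$; the constraints $x_s \geq \reward(s)\sum_{s'}\mdptransitions(s,\alpha,s') x_{s'}$ for all $\alpha$ are, by taking the maximum over $\alpha\in A(s)$, equivalent to $x_s \geq T_s(x)$ for $s \in \States\setminus T$; and $x \in \Reals_{\geq 0}^{\States}$ is the nonnegativity requirement. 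Conversely, any fixed point of $T$ in $\Reals_{\geq 0}^{\States}$ is feasible, since $x_t = T_t(x) = \reward(t)$ and $x_s = T_s(x) \geq \reward(s)\sum_{s'}\mdptransitions(s,\alpha,s') x_{s'}$ for every enabled $\alpha$.

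Next I would extend $T$ to the complete lattice $[0,\infty]^{\States}$ (with the conventions $\reward(s)\cdot\infty = \infty$ and sums involving $\infty$ equal to $\infty$), where it is still monotone, and moreover $\omega$-continuous because each $T_s$ is either a constant or a maximum of finitely many nonnegative linear forms, and $T(\mathbf 0) \geq \mathbf 0$. By Knaster--Tarski, $T$ has a least fixed point $\mu$, equal to the infimum of all its pre-fixed points; equivalently, by Kleene iteration (value iteration from below), $\mu = \sup_{n} T^n(\mathbf 0)$. Now I invoke the hypothesis that \ref{eq:mdp_lp} admits a solution $x$: such an $x$ is in particular a finite feasible point, hence a finite pre-fixed point, so $\mu \leq x < \infty$, which shows $\mu \in \Reals_{\geq 0}^{\States}$ and that $\mu$ is genuinely the least fixed point of $T$ on $\Reals_{\geq 0}^{\States}$.

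Finally I would close the loop. The point $\mu$ is a fixed point of $T$, hence by the dictionary it is feasible for \ref{eq:mdp_lp}, and $\mu \leq y$ holds for every pre-fixed point $y$, in particular for every feasible $y$. Since the objective $\sum_{s\in\States} x_s$ has strictly positive coefficients and $\mu$ is the componentwise least feasible point, $\mu$ is the unique optimum of \ref{eq:mdp_lp}. Therefore any solution $x$ of \ref{eq:mdp_lp} coincides with $\mu$, the least fixed point of $T$, as claimed. Combined with the preceding theorem statement, this also identifies $\mu$ with the vector of optimal values once the separate argument (the lemma proved elsewhere) connects the least fixed point of $T$ to $\ExpectationMDP<\MDP, s><\max>[\pathreward]$ and $\ExpectationMDP<\MDP, s><\max>[\pathrewardinf]$.

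I expect the main obstacle to be the bookkeeping around the extension to $[0,\infty]^{\States}$ and the mixed equality/inequality constraints at absorbing states: one must check carefully that ``least pre-fixed point'' and ``least fixed point'' agree (standard for monotone maps, but worth stating), and that finiteness of $\mu$ really does follow from the mere existence of \emph{one} finite feasible point rather than requiring a separate boundedness estimate that would have to exploit \Cref{ass:negativeMP} directly. Monotonicity of $T$ is already recorded in the excerpt, so only continuity and the lattice formalities remain to be spelled out.
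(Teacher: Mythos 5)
Your proposal is correct and follows essentially the same route as the paper's proof: identify the feasible points of \ref{eq:mdp_lp} with the pre-fixed points of $T$, apply Knaster--Tarski to get that the least fixed point equals the least pre-fixed point, and observe that minimizing the strictly positive-weighted sum forces the LP optimum to be that componentwise least point. The only cosmetic difference is that you extend $T$ to $[0,\infty]^{\States}$ while the paper restricts to the order interval $[0,x]$; both choices yield the same conclusion, and your extra care with the equality constraints at absorbing states and with uniqueness of the optimum is consistent with, if slightly more explicit than, the paper's argument.
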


We define the random variable $\multreward_n$ on paths $\infinitepath$ as $\multreward_n(\infinitepath) \coloneqq \prod_{i\leq n} \reward(\infinitepath_i)$ if $s_n\in T$ and $0$ otherwise.
By induction, we get a direct connection between the expected value of $\multreward_n$ and $T^n$.

\begin{lemma}
Let $\MDP$, $\initialstate$, $\reward$, and the function $T$ be as above.
Then $T^{n+1}(0) = \left( \ExpectationMDP<\MDP, s><\max>[\multreward_n] \right)_{s\in \States}$.
\end{lemma}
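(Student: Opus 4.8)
The plan is to prove this vector identity componentwise by induction on $n$, using that $\multreward_n$ depends only on the first $n+1$ states of a path and is therefore a bounded random variable (at most $(\max_{s}\reward(s))^{n+1}$); this boundedness makes the one-step Bellman unrolling of optimal expectations entirely routine, with no issues about infinite values or interchanging suprema and limits.

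For the base case $n = 0$ I would simply compute both sides. We have $\multreward_0(\infinitepath) = \reward(\infinitepath_0)\cdot \indicator{\infinitepath_0 \in T}$, so starting in $s$ its (maximal) expectation is $\reward(s)$ if $s \in T$ and $0$ otherwise; and $T^1(0)_s = T_s(0)$ equals $\reward(s)$ when $s \in T$ and $\reward(s)\cdot\max_{\alpha\in A(s)}\sum_{s'\in\States}\mdptransitions(s,\alpha,s')\cdot 0 = 0$ when $s \notin T$, matching.

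For the inductive step, assume $T^{n+1}(0)_s = \ExpectationMDP<\MDP, s><\max>[\multreward_n]$ for all $s$ and fix a state $s$. If $s \in T$, then $s$ is absorbing, so the only path from $s$ loops forever at $s$ and $\multreward_{n+1} = \reward(s)^{n+2}$; here I would invoke \Cref{ass:negativeMP}, which forces $\reward(s) \in \{0,1\}$ and hence $\reward(s)^{n+2} = \reward(s) = T_s(T^{n+1}(0)) = T^{n+2}(0)_s$ -- this absorbing-state case is the single place where the pre-processing assumption enters. If $s \notin T$, I would decompose along the first chosen action and the first transition: for $\infinitepath = s\,\alpha\,s'\cdots$ we have $\multreward_{n+1}(\infinitepath) = \reward(s)\cdot\multreward_n(\infinitepath')$ where $\infinitepath'$ is the suffix starting at $s'$, so for every scheduler $\sched$ the law of total expectation over the first step gives $\ExpectationMDP<\MDP, s><\sched>[\multreward_{n+1}] = \reward(s)\sum_{\alpha\in A(s)}\sched(s)(\alpha)\sum_{s'\in\States}\mdptransitions(s,\alpha,s')\,\ExpectationMDP<\MDP, s'><\sched_{\alpha,s'}>[\multreward_n]$, where $\sched_{\alpha,s'}$ denotes the residual scheduler after the prefix $s\,\alpha\,s'$.

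The main (and only mildly technical) step is then justifying that the supremum over $\sched$ distributes, i.e.\ that $\ExpectationMDP<\MDP, s><\max>[\multreward_{n+1}] = \reward(s)\max_{\alpha\in A(s)}\sum_{s'\in\States}\mdptransitions(s,\alpha,s')\,\ExpectationMDP<\MDP, s'><\max>[\multreward_n]$. For $\leq$: given $\sched$, since $\reward(s)>0$ and $\multreward_n\geq 0$, replacing each $\sched_{\alpha,s'}$ by an optimal (or $\varepsilon$-optimal) scheduler for $\multreward_n$ from $s'$ only increases the right-hand sum, and then replacing $\sched(s)$ by a Dirac on a maximizing first action increases it further. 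For $\geq$: a scheduler that picks such a maximizing action and then, depending on the landing state $s'$, follows an $\varepsilon$-optimal continuation realizes the right-hand side up to $\reward(s)\varepsilon$. Substituting the induction hypothesis turns the right-hand side into $\reward(s)\max_{\alpha\in A(s)}\sum_{s'\in\States}\mdptransitions(s,\alpha,s')\,T^{n+1}(0)_{s'} = T_s(T^{n+1}(0)) = T^{n+2}(0)_s$, which closes the induction. I do not expect any real obstacle here beyond bookkeeping the residual-scheduler notation; the boundedness of $\multreward_n$ eliminates every potential subtlety about infinite expectations or exchange of limits and suprema.
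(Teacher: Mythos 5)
Your proof is correct and follows exactly the route the paper intends: the paper omits the argument entirely, merely asserting the identity follows ``by induction,'' and your componentwise induction with the one-step Bellman unrolling and the $\varepsilon$-optimal-continuation argument for distributing the supremum is the standard way to fill that in. Your observation that the absorbing-state case genuinely needs \cref{ass:negativeMP} (reward $0$ or $1$, so $\reward(s)^{n+2}=\reward(s)$) is accurate and worth making explicit, since the identity would fail for absorbing states with other rewards.
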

Conditioning on whether a state in $T$ is reached or not to split up expected values, we can now conclude:

\begin{restatable}{lemma}{nstepconverges}
Let $\MDP$, $\initialstate$, and $\reward$ be as above.
Then, for all states $s\in \States$,
$
{\lim}_{n\to \infty}  \ExpectationMDP<\MDP, s><\max>[\multreward_n] = \ExpectationMDP<\MDP, s><\max>[\pathreward] = \ExpectationMDP<\MDP, s><\max>[\pathrewardinf]
$.
\end{restatable}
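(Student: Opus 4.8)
\emph{Proof idea.}
The plan is a sandwich argument showing that $\ExpectationMDP<\MDP, s><\max>[\pathreward]$, $\ExpectationMDP<\MDP, s><\max>[\pathrewardinf]$, and $v_s \coloneqq \lim_{n\to\infty}\ExpectationMDP<\MDP, s><\max>[\multreward_n]$ all coincide; the limit $v_s$ exists in $[0,\infty]$ because, by the preceding lemma, $\ExpectationMDP<\MDP, s><\max>[\multreward_n] = T^{n+1}(0)_s$ and the iterates $T^n(0)$ are non-decreasing in $n$. The starting observation is that $\multreward_n$ is \emph{pointwise} non-decreasing in $n$: on a path $\infinitepath$, if $\infinitepath_n\notin T$ then $\multreward_n(\infinitepath)=0$, and if $\infinitepath_n\in T$ then -- since by \Cref{ass:negativeMP} each absorbing state carries reward $0$ or $1$ -- every partial product $\prod_{k\leq m}\reward(\infinitepath_k)$ with $m\geq n$ equals $\multreward_n(\infinitepath)$. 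In particular $\pathrewardinf(\infinitepath)\geq\multreward_n(\infinitepath)$ pointwise for all $n$, with equality once $\infinitepath_n\in T$.

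The lower bound follows at once: for every scheduler $\strategy$ and every $n$, $\ExpectationMDP<\MDP, s><\strategy>[\pathrewardinf]\geq\ExpectationMDP<\MDP, s><\strategy>[\multreward_n]$, hence $\ExpectationMDP<\MDP, s><\max>[\pathrewardinf]\geq\ExpectationMDP<\MDP, s><\max>[\multreward_n]$, and sending $n\to\infty$ gives $\ExpectationMDP<\MDP, s><\max>[\pathrewardinf]\geq v_s$. Since $\pathrewardinf\leq\pathreward$ pointwise we also have $\ExpectationMDP<\MDP, s><\max>[\pathrewardinf]\leq\ExpectationMDP<\MDP, s><\max>[\pathreward]$, so it remains to prove the matching upper bound $\ExpectationMDP<\MDP, s><\max>[\pathreward]\leq v_s$.

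For the upper bound the key claim is that, under any scheduler, $\pathreward(\infinitepath)=\lim_{n\to\infty}\multreward_n(\infinitepath)$ holds almost surely. On the event $\lozenge T$ the first hitting time $n_0=\min\{n\mid\infinitepath_n\in T\}$ is finite and the stabilization remark above forces $\pathreward(\infinitepath)=\multreward_{n_0}(\infinitepath)=\lim_n\multreward_n(\infinitepath)$ pointwise. On the complement $\neg\lozenge T$ we have $\multreward_n(\infinitepath)=0$ for all $n$, so the claim reduces to showing $\pathreward(\infinitepath)=0$ almost surely there; here I would invoke the standard fact that, under any scheduler, almost surely the set of states visited infinitely often forms an end component $\cE$ inside which the path eventually stays. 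On $\neg\lozenge T$ this $\cE$ is not an absorbing state, hence by \Cref{ass:negativeMP} it satisfies $\ExpectationMDP<\cE><\max>[\MP_{\log(\reward)}]<0$; applying \Cref{lem:MP_MDP} to the strongly connected sub-MDP $\cE$ -- and using that the suffix of $\infinitepath$ after entering $\cE$ is governed by a scheduler on $\cE$ -- the suffix almost surely has $\pathreward=0$, and multiplying by the finite product accumulated along the (finite) prefix gives $\pathreward(\infinitepath)=0$. Granting the claim, monotone convergence (recall $\multreward_n$ is non-decreasing) yields, for every scheduler $\strategy$, $\ExpectationMDP<\MDP, s><\strategy>[\pathreward]=\ExpectationMDP<\MDP, s><\strategy>[\lim_n\multreward_n]=\lim_n\ExpectationMDP<\MDP, s><\strategy>[\multreward_n]\leq v_s$, and the supremum over $\strategy$ gives $\ExpectationMDP<\MDP, s><\max>[\pathreward]\leq v_s$. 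This closes the chain $v_s\leq\ExpectationMDP<\MDP, s><\max>[\pathrewardinf]\leq\ExpectationMDP<\MDP, s><\max>[\pathreward]\leq v_s$; since every step is valid in $[0,\infty]$, the case $v_s=\infty$ (equivalently, \ref{eq:mdp_lp} infeasible) requires no separate treatment.

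The main obstacle is the almost-sure identity $\pathreward=\lim_n\multreward_n$ on $\neg\lozenge T$, i.e.\ ruling out that a path which never reaches an absorbing state still builds up an unbounded $\limsup$ of multiplicative rewards inside the transient part. This is precisely where \Cref{ass:negativeMP} -- obtained through the EC analysis and the iterated adapted spider construction of the previous subsection -- is indispensable: it guarantees that every end component a path can get trapped in has strictly negative logarithmic mean payoff, so the logarithmic partial sums drift to $-\infty$ and the multiplicative products vanish. The remaining ingredients -- stabilization of partial products at absorbing states, monotonicity of $\multreward_n$, monotone convergence, and interchanging the supremum over schedulers with the limit over horizons -- are routine.
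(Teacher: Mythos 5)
Your proof is correct and follows essentially the same route as the paper's: both arguments rest on the two facts that the partial products stabilize once an absorbing state in $T$ is reached (so $\pathreward = \pathrewardinf = \lim_n \multreward_n$ on $\lozenge T$) and that on $\neg\lozenge T$ the path gets trapped in an end component with strictly negative logarithmic mean payoff (\Cref{ass:negativeMP}), forcing $\pathreward = 0$ almost surely. Your sandwich formulation via pointwise monotonicity of $\multreward_n$ and monotone convergence is in fact slightly more careful than the paper's, as it explicitly justifies the interchange of the limit in $n$ with the supremum over schedulers and explicitly covers the $\pathrewardinf$ half of the claim.
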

So, under \Cref{ass:negativeMP}, the optimal values can be computed by solving \ref{eq:mdp_lp} which can be constructed in polynomial time from an MDP with rational weight function.

\begin{theorem}
Let $\MDP=(\States,A,\mdptransitions)$ be  an  MDP,   $\initialstate\in \States$ an initial state, and $\reward \colon \States \to \Rationals_{>0}$  a reward function.
 Assume \Cref{ass:negativeMP} holds. Then, $\ExpectationMDP<\MDP, \initialstate><\max>[\pathreward]=\ExpectationMDP<\MDP, \initialstate><\max>[\pathrewardinf]$ can be computed in polynomial time.
\end{theorem}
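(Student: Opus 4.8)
The plan is to derive the statement as a direct consequence of the linear-programming characterisation established above. Since all transition probabilities and all rewards are rational numbers given in binary -- and, under \Cref{ass:negativeMP}, the absorbing states in $T$ carry reward $0$ or $1$ -- the program \ref{eq:mdp_lp} has one variable $x_s$ per state, $\cardinality{T}$ equality constraints, and $\BigO(\sum_{s\in\States}\cardinality{\Actions(s)})$ inequality constraints, all with rational coefficients whose bit-length is polynomial in the size of $\MDP$. Hence \ref{eq:mdp_lp} can be written down in polynomial time.

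Next I would invoke the fact that linear programming over the rationals lies in $\mathrm{P}$ (e.g.\ by the ellipsoid method): one can decide in polynomial time whether \ref{eq:mdp_lp} is feasible, and if it is, compute an optimal solution. Note that the objective $\sum_{s\in\States}x_s$ is bounded below by $0$ on the feasible region, so feasibility is equivalent to the existence of a finite optimal solution, and that optimum is attained. I would then appeal to the theorem preceding this statement: if \ref{eq:mdp_lp} is infeasible, that theorem (together with the feasibility lemma) yields $\ExpectationMDP<\MDP, \initialstate><\max>[\pathreward] = \ExpectationMDP<\MDP, \initialstate><\max>[\pathrewardinf] = \infty$, and the algorithm reports $\infty$; if \ref{eq:mdp_lp} is feasible, its \emph{unique} optimal solution $(x^\ast_s)_{s\in\States}$ satisfies $x^\ast_s = \ExpectationMDP<\MDP, s><\max>[\pathreward] = \ExpectationMDP<\MDP, s><\max>[\pathrewardinf]$ for every $s\in\States$, so the algorithm returns $x^\ast_{\initialstate}$, which is then finite and equals both quantities. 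Every step runs in polynomial time, proving the claim.

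There is no genuinely hard step left: the substance is entirely in the LP characterisation and in the pre-processing ensuring \Cref{ass:negativeMP}. The only points that need a word of care are that the optimum of \ref{eq:mdp_lp} is uniquely attained -- which is precisely the content of the preceding theorem, so that solving the LP unambiguously recovers the value vector rather than an arbitrary feasible point -- and that the coefficients of \ref{eq:mdp_lp} remain of polynomial bit-length, so that a polynomial-time LP solver is indeed applicable.
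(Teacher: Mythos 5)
Your proposal is correct and follows essentially the same route as the paper: the theorem is an immediate consequence of the LP characterization, obtained by constructing \ref{eq:mdp_lp} (whose rational coefficients have polynomial bit-length) and solving it with a polynomial-time LP algorithm, reporting $\infty$ on infeasibility and otherwise returning the unique optimal solution, which the preceding theorem identifies with the value vector. Your added remarks on attainment of the optimum and on coefficient sizes are accurate and merely make explicit what the paper leaves implicit.
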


As establishing \Cref{ass:negativeMP} can be done in polynomial space as shown in the previous section, we obtain the following complexity upper bound for the general problem to compute maximal expected multiplicative rewards:

\begin{theorem}
\label{thm:MDP_mult_rew}
	Let $\MDP=(\States,A,\mdptransitions)$ be  an  MDP, $\initialstate\in \States$, and $\reward \colon \States \to \Rationals_{>0}$ a reward function.
	Then, $\ExpectationMDP<\MDP, s><\max>[\pathreward]$ and $\ExpectationMDP<\MDP, s><\max>[\pathrewardinf]$ can be computed in polynomial space.
\end{theorem}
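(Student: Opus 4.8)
The plan is to chain the machinery of \Cref{sec:ECs} and \Cref{sec:LP} into a single pre-processing-then-solve pipeline, verifying that every stage runs in polynomial space while the only global computation, solving the linear program, runs in polynomial time.

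First I would compute the maximal end-component (MEC) decomposition of $\MDP$ in polynomial time and, for each MEC $\cE$, decide whether $\ExpectationMDP<\cE><\max>[\MP_{\log(\reward)}] \bowtie 0$ for $\bowtie \in \{<,=,>\}$ via \Cref{prop:check_logMP0}; this is the step that forces polynomial \emph{space} rather than polynomial time, since comparing a weighted sum of logarithms against $0$ is routed through a CSRI/$\exists\mathbb{R}$ query. If some MEC has $\ExpectationMDP<\cE><\max>[\MP_{\log(\reward)}] > 0$, then \Cref{lem:MP_MDP} applied inside that MEC together with reachability (all states are reachable from $\initialstate$) yields $\ExpectationMDP<\MDP, \initialstate><\max>[\pathreward] = \ExpectationMDP<\MDP, \initialstate><\max>[\pathrewardinf] = \infty$, and we are done. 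Otherwise every EC satisfies $\ExpectationMDP<\cE><\max>[\MP_{\log(\reward)}] \leq 0$; for the ECs with value $0$ I would then, using the polynomial-space computation of the sets $\Actions^{\max}$, form $\MDP^{\max}$, take its MEC decomposition (polynomially many MECs), and use the polynomial-time computation of $A^\ast$ (which relies on \Cref{stm:identify_good_actions} and reduces its correctness test to an ESRI query) to identify all non-gambling max-BSCCs. These I remove iteratively with the adapted spider construction, always centring it at an original-MDP state; by \Cref{lem:spider_number_iterations} and the resulting bound this terminates after at most $\sum_{t\in\States}\abs{\Actions(t)}$ applications, so the transformed MDP stays polynomially sized, and by \Cref{lem:complexity_pre-processing} the whole phase is in polynomial space. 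After this, \Cref{ass:log-mp} holds.

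Next I would branch on the objective. For $\pathreward$: if any EC other than an absorbing state with reward $1$ still has $\ExpectationMDP<\cE><\max>[\MP_{\log(\reward)}]=0$, then under \Cref{ass:log-mp} it contains a gambling max-BSCC, so \Cref{lem:gambling_limsup} gives $\ExpectationMDP<\MDP, \initialstate><\max>[\pathreward]=\infty$; otherwise all ECs have strictly negative maximal logarithmic mean payoff. For $\pathrewardinf$: for every EC with $\ExpectationMDP<\cE><\max>[\MP_{\log(\reward)}]=0$ I would run a reachability check from its states to an absorbing state with reward $1$ as in \Cref{lem:ECliminf}; if this succeeds anywhere the answer is $\infty$ (a single infinite-memory scheduler suffices, as remarked after \Cref{lem:ECliminf}), otherwise all such states have value $0$, so I collapse all value-$0$ states into one absorbing state with reward $0$, and again all remaining ECs have strictly negative maximal logarithmic mean payoff. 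In both cases \Cref{ass:negativeMP} now holds after a polynomial-space transformation of $\MDP$.

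Finally, under \Cref{ass:negativeMP} I would construct the linear program \ref{eq:mdp_lp} in polynomial time from the (rational) transformed MDP and solve it in polynomial time: by the results of \Cref{sec:LP}, infeasibility of \ref{eq:mdp_lp} means $v_{\initialstate}=\infty$, while a feasible \ref{eq:mdp_lp} has a unique solution equal to $\ExpectationMDP<\MDP, s><\max>[\pathreward]=\ExpectationMDP<\MDP, s><\max>[\pathrewardinf]$ for all $s$ in the transformed MDP; pulling these back through the chain of spider constructions and collapsings, each value-preserving by \Cref{stm:spider_preserves_value}, gives the values in the original MDP. For the complexity accounting: the MEC computations, the spider iterations, the collapsings, and the LP construction are all polynomial, and solving the LP is in polynomial time; the only super-polynomial-time ingredients are the three families of logarithm comparisons (the sign tests for $\MP_{\log(\reward)}$, the computation of $\Actions^{\max}$, and the $A^\ast$/gambling tests), each in PSPACE via $\exists\mathbb{R}$. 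Composing polynomially many polynomial-space subroutines, each producing polynomial-size output, stays in PSPACE, which gives the claimed bound. The main obstacle I anticipate is bookkeeping rather than a new idea: one must check that the iterated spider construction keeps the MDP polynomially sized and that \enquote{max-BSCC} remains the right notion after each step (no new non-gambling BSCCs are created), and one must keep the $\pathreward$/$\pathrewardinf$ asymmetry straight, since the spider construction uses $\overline{v_E}$ resp.\ $\underline{v_E}$ and only for $\pathrewardinf$ does the extra reachability-and-collapsing step of \Cref{lem:ECliminf} precede \Cref{ass:negativeMP}.
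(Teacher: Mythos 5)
Your proposal is correct and follows essentially the same route as the paper: polynomial-space EC analysis (sign of the logarithmic mean payoff, computation of $\Actions^{\max}$ and $A^\ast$, iterated spider construction bounded via \cref{lem:spider_number_iterations}), the $\pathreward$/$\pathrewardinf$ case split via \cref{lem:gambling_limsup} and \cref{lem:ECliminf} to establish \cref{ass:negativeMP}, and then the polynomial-time solution of \ref{eq:mdp_lp} with value preservation by \cref{stm:spider_preserves_value}. The paper's own proof is exactly this composition of \cref{lem:complexity_pre-processing} with the LP-based theorem for MDPs satisfying \cref{ass:negativeMP}, so no further comparison is needed.
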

Naturally, the CSRI-hardness result for Markov chains (\Cref{prop:hardnessMP}) holds also for MDP.
Closing the resulting complexity gap remains as future work.

\begin{remark}
	States with reward $0$ can be handled similar as in MC.
	In particular, w.l.o.g.\ they can be made absorbing, and then be treated like a value-0 EC in the transient analysis.
\end{remark}

\subsection{Scheduler Complexity}

We prove that MD-schedulers are sufficient for the maximization in  MDPs using the representation of the maximal expected value of $\pathreward$  in MDPs satisfying \Cref{ass:negativeMP} as the solution to \ref{eq:mdp_lp}.
For $\pathreward$, we have seen that also divergence to $\infty$ in recurrent states (in MDPs not  satisfying \Cref{ass:negativeMP})   can be achieved by MD-schedulers in \Cref{lem:gambling_limsup}.

For $\pathrewardinf$, the situation is different: If the value is finite, an optimal MD-scheduler can be obtained as for $\pathreward$. If there are gambling max-ECs in MDPs with maximal logarithmic mean payoff zero,
we have already seen in \Cref{rem:infimum_gambling_infinite} that schedulers might have to ``gamble'' for a while before leaving the EC in order to obtain arbitrarily high values. This is the only situation in which MD-schedulers are not sufficient for the maximization of expected multiplicative rewards.

\begin{restatable}{theorem}{schedulercomplexity}
	Let $\MDP=(\States,A,\mdptransitions)$ be an MDP, $\initialstate\in \States$ an initial state, and  $\reward \colon \States \to \Reals_{>0}$  a reward function.
	There is an MD-scheduler $\sched$ with $\ExpectationMDP<\MDP, \initialstate><\sched>[\pathreward]=\ExpectationMDP<\MDP, \initialstate><\max>[\pathreward]$.
	
	If $\ExpectationMDP<\MDP, \initialstate><\max>[\pathrewardinf] < \infty$, there is an MD-scheduler $\sched$ with $\ExpectationMDP<\MDP, \initialstate><\sched>[\pathrewardinf]=\ExpectationMDP<\MDP, \initialstate><\max>[\pathrewardinf]$.
\end{restatable}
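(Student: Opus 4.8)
The plan is to treat both statements by splitting on whether the relevant value is finite. If $\ExpectationMDP<\MDP, \initialstate><\max>[\pathreward]=\infty$, then by the analysis of \Cref{sec:ECs} this is caused either by an end component $\cE$ with $\ExpectationMDP<\cE><\max>[\MP_{\log(\reward)}]>0$ or, after removal of all non-gambling max-BSCCs, by a remaining gambling max-BSCC. In the latter case \Cref{lem:gambling_limsup} already supplies an MD-scheduler of the pre-processed MDP attaining $\infty$, which I then trace back as described below. In the former case I would take the MD-scheduler $\sched_\cE$ on $\cE$ produced in the proof of \Cref{lem:MP_MDP} with $\ExpectationMDP<\cE, s_0><\sched_\cE>[\pathrewardinf]=\infty$ for some $s_0\in\cE$, prefix it with an MD-scheduler reaching $s_0$ from $\initialstate$ with probability $p>0$, observe that the combined state-based scheduler $\sched$ is still MD, and conclude $\ExpectationMDP<\MDP, \initialstate><\sched>[\pathreward]\ge\ExpectationMDP<\MDP, \initialstate><\sched>[\pathrewardinf]\ge p\cdot\infty=\infty$. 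For $\pathrewardinf$ nothing further is needed here: the hypothesis $\ExpectationMDP<\MDP, \initialstate><\max>[\pathrewardinf]<\infty$ is exactly what rules out the infinite-memory situation of \Cref{rem:infimum_gambling_infinite}.

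For the finite case I would first run the pre-processing of \Cref{sec:ECs}. Finiteness excludes end components of positive logarithmic mean payoff; for $\pathreward$ it excludes gambling max-BSCCs (otherwise the value would be $\infty$ by \Cref{lem:gambling_limsup}), and for $\pathrewardinf$ \Cref{lem:ECliminf} shows that no state of a gambling max-BSCC reaches a reward-$1$ absorbing state, so all such states have value $0$ and can be collapsed. After iteratively removing non-gambling max-BSCCs with the adapted spider construction (\Cref{lem:complexity_pre-processing}), \Cref{ass:negativeMP} holds, the value is unchanged, and in this regime $\ExpectationMDP<\MDP, s><\max>[\pathreward]=\ExpectationMDP<\MDP, s><\max>[\pathrewardinf]=:v_s$ is the unique finite solution of \ref{eq:mdp_lp} and equals the least fixed point of its monotone Bellman operator. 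From an optimal solution of \ref{eq:mdp_lp} I would extract an MD-scheduler $\sched'$ of the pre-processed MDP that in each non-absorbing state picks an action whose \ref{eq:mdp_lp}-constraint is tight, choosing among these so that $\sched'$ reaches the set of absorbing states almost surely from every state with $v_s>0$ (the standard extraction of a proper policy from a Bellman-optimal point). For such a proper $\sched'$, tightness makes $v$ solve the transient equation \cref{eq:transient} of the induced Markov chain $\MC^{\sched'}$, whose only BSCCs are absorbing states; since this solution is finite, the Markov-chain uniqueness statement gives $\ExpectationMC<\MC^{\sched'}, s>[\pathreward]=v_s$, i.e.\ $\ExpectationMDP<\MDP, s><\sched'>[\pathreward]=v_s$, and because $\sched'$ reaches an absorbing state $\ProbabilityMDP<\MDP, s><\sched'>$-almost surely we have $\pathreward=\pathrewardinf$ almost surely, so $\ExpectationMDP<\MDP, s><\sched'>[\pathrewardinf]=v_s$ as well.

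It then remains to lift $\sched'$ back to the original MDP. Each auxiliary state introduced by an application of the spider construction is entered through a unique action from a state present beforehand, and the centre actions of a spider correspond to actions of the removed non-gambling max-BSCC; since that BSCC consists only of $1$-cycles, a memoryless deterministic choice on $\MDP_E$ is realised on $\MDP$ by following the fixed BSCC action $B$ inside $E$ (which yields the same multiplicative reward along every path) until the relevant exit, so the scheduler translation underlying \Cref{stm:spider_preserves_value} preserves memorylessness and determinism. Iterating over the (polynomially many, by \Cref{lem:spider_number_iterations}) spider applications and undoing the collapsing of value-$0$ states produces the desired MD-scheduler for $\MDP$.

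The step I expect to be the main obstacle is the properness of $\sched'$: showing that one can always read off tight actions from the \emph{least} solution of \ref{eq:mdp_lp} so as to reach the absorbing set almost surely from every positive-value state, rather than getting trapped in a recurrent end component on which \ref{eq:mdp_lp} still assigns positive values. The tools for this are the minimality of the chosen solution together with the observation that, under feasibility of \ref{eq:mdp_lp} and \Cref{ass:negativeMP}, every such end component $C$ has $\mathrm{diag}(\reward)\cdot P_C$ of spectral radius at most $1$, so a policy remaining in $C$ forever realises $\pathreward=0$ and hence, at a positive-value state, must leave $C$; this forces the existence of a tight exiting action and lets one build $\sched'$ by an attractor-style choice.
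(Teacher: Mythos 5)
Your treatment of the finite case follows the paper's route (extract tight actions from the least solution of \ref{eq:mdp_lp}, then undo the spider constructions), and the properness issue you flag is a real one that the paper's own proof passes over rather quickly; your spectral-radius argument for resolving it is a sensible addition. The lifting of the MD-scheduler back through the spider constructions is also handled correctly.

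However, there is a genuine gap in your handling of the case $\ExpectationMDP<\MDP, \initialstate><\max>[\pathreward]=\infty$. You assert that, by the analysis of \Cref{sec:ECs}, divergence is caused either by an end component with positive logarithmic mean payoff or by a gambling max-BSCC surviving the pre-processing. This is false: one of the central points of the paper is that the value can diverge purely through \emph{transient} behaviour (cf.\ the Markov chain of \Cref{fig:divergence}, whose only BSCC is an absorbing state of reward $1$). Concretely, it can happen that \Cref{ass:negativeMP} is established after pre-processing — so no recurrent cause of divergence remains — and yet \ref{eq:mdp_lp} is infeasible and $\ExpectationMDP<\MDP, \initialstate><\max>[\pathreward]=\infty$. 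Your case split sends this situation neither to the \enquote{infinite} branch (no positive-MP EC, no gambling max-BSCC) nor to the \enquote{finite} branch (the value is not finite), so for it you produce no MD-scheduler at all, and the first claim of the theorem is left unproved there. The paper closes this case with a separate argument: assuming every MD-scheduler has finite value, it constructs by local policy improvement an MD-scheduler $\tsched$ whose value vector $x^{\tsched}$ satisfies all constraints of \ref{eq:mdp_lp}, contradicting infeasibility; hence some MD-scheduler must already attain $\infty$. Some argument of this kind (or a direct construction of an MD-scheduler with infinite value from the divergence of $T^n(0)$) is needed to complete your proof.
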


Finally, the result that MD-schedulers are always sufficient for the maximization of the expected value of $\pathreward$ allows us to conclude the following complexity bound for the threshold problem by guessing an MD-scheduler and verifying the result with a CSRI-oracle:

\begin{restatable}{corollary}{thresholdproblem}
Let $\MDP=(\States,A,\mdptransitions)$ be an MDP, $\initialstate\in \States$ an initial state, $\reward \colon \States \to \Rationals_{>0}$  a reward function, and $\vartheta\in \Rationals$ a threshold.
Deciding whether $\ExpectationMDP<\MDP, \initialstate><\sched>[\pathreward] \geq \vartheta$    is in $\textrm{NP}^{\textrm{CSRI}}$.
\end{restatable}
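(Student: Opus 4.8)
The plan is to exploit the earlier result that an optimal MD-scheduler always exists for $\pathreward$ (the scheduler-complexity theorem just stated), together with the fact that verifying the value of a fixed MD-scheduler reduces to a Markov-chain computation, which by \Cref{thm:mainresultMC} is in $\text{P}^{\text{CSRI}}$. Concretely, I would first handle the case analysis: if $\ExpectationMDP<\MDP, \initialstate><\max>[\pathreward] = \infty$, then in particular the threshold $\vartheta$ is met, and by \Cref{lem:MP_MDP} and \Cref{lem:gambling_limsup} this happens exactly when there is an end component $\cE$ with $\ExpectationMDP<\cE><\max>[\MP_{\log(\reward)}] > 0$, or with $\ExpectationMDP<\cE><\max>[\MP_{\log(\reward)}] = 0$ containing an n-cycle. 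Both of these are witnessed by an MD-scheduler on that end component (again by \Cref{lem:gambling_limsup} and the characterisation of $\Actions^{\max}$); so the nondeterministic machine can simply guess such a scheduler and, restricting to the induced Markov chain, verify in polynomial time (checking the sign of the logarithmic mean payoff of a strongly connected MC is in $\text{CSRI}$ by \Cref{prop:logMP_MC}, and detecting an n-cycle is the polynomial-time graph search of Step~2 in \Cref{sec:mc:scc}).

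Otherwise all end components have strictly negative maximal logarithmic mean payoff (after the polynomial-space pre-processing that establishes \Cref{ass:negativeMP} — but note this pre-processing is \emph{not} needed for the NP upper bound itself, since in this branch the value is already finite and an MD-scheduler is optimal). The machine guesses an MD-scheduler $\sched$ for $\MDP$; fixing $\sched$ yields a finite Markov chain $\MC_\sched$ on the state space $\States$ with rational transition probabilities and the same rational reward function. By the scheduler-complexity theorem there is a choice of $\sched$ with $\ExpectationMDP<\MDP, \initialstate><\sched>[\pathreward] = \ExpectationMDP<\MDP, \initialstate><\max>[\pathreward]$, so it suffices to verify $\ExpectationMC<\MC_\sched, \initialstate>[\pathreward] \geq \vartheta$ and, for soundness, to note that $\ExpectationMC<\MC_\sched, \initialstate>[\pathreward] \leq \ExpectationMDP<\MDP, \initialstate><\max>[\pathreward]$ always, so no bad scheduler can be accepted. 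The verification step is exactly the Markov-chain problem: by \Cref{thm:mainresultMC}(1) the value $\ExpectationMC<\MC_\sched, \initialstate>[\pathreward]$ can be computed — or, what we actually need, compared against the rational $\vartheta$ — in polynomial time with a CSRI oracle. Indeed, the computation in \Cref{sec:MC} produces the value either symbolically as a $\max$ over products $R(\initialstate,t)$ of rationals (when all cycles are 1-cycles), or as the solution of the linear system \eqref{eq:transient} over $\Rationals$, or flags it as $0$ or $\infty$; in every branch a final comparison with $\vartheta$ is a CSRI (or ESRI, hence P) query. Hence the whole verification is in $\text{P}^{\text{CSRI}}$, and the guessing-plus-verification procedure places the threshold problem in $\text{NP}^{\text{CSRI}}$.

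The one point requiring care — the main obstacle — is making sure the guessed object has polynomial size and that verification is genuinely $\text{P}^{\text{CSRI}}$ rather than merely PSPACE: an MD-scheduler is just a map $\States \to \Actions$, so its description is linear in the input, and the induced Markov chain $\MC_\sched$ has the same size as $\MDP$, so \Cref{thm:mainresultMC} applies directly. The only subtlety is that the value of $\MC_\sched$ might be $\infty$ while $\MDP$ has finite optimal value — but that cannot cause unsoundness, since $\ExpectationMC<\MC_\sched, \initialstate>[\pathreward] \leq \ExpectationMDP<\MDP, \initialstate><\max>[\pathreward]$, so if the guessed chain has value $\geq \vartheta$ (finite or infinite) the true optimum is also $\geq \vartheta$; and completeness is guaranteed because some MD-scheduler attains the optimum. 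The analogous statement for $\pathrewardinf$ would need the extra case where $\ExpectationMDP<\MDP, \initialstate><\max>[\pathrewardinf] = \infty$ is witnessed only by infinite memory (\Cref{lem:ECliminf}); there the NP machine instead guesses the reachable gambling max-BSCC together with a path to an absorbing state of reward $1$ and verifies reachability in polynomial time, while for finite values it proceeds exactly as for $\pathreward$.
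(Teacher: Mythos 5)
Your proposal is correct and follows essentially the same route as the paper: nondeterministically guess an MD-scheduler (which suffices by the scheduler-complexity theorem) and verify the threshold in the induced Markov chain in polynomial time with a CSRI oracle via \Cref{thm:mainresultMC}. The additional remarks on soundness via monotonicity and on the infinite-value cases are fine elaborations of details the paper leaves implicit.
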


\subsection{Multiplicative Stochastic Shortest Path}

The classical \emph{stochastic shortest path problem} asks for a scheduler in an MDP maximizing (or minimizing) the expected total accumulated additive reward among all schedulers reaching a given target state with probability $1$.
This problem is solvable in polynomial time \cite{BerTsi91,deAlf99,BBDGS2018}.

Analogously, the \emph{multiplicative stochastic shortest path problem} can be formulated as follows:
Given an MDP $\MDP=(\States,A,\mdptransitions)$, an initial state $\initialstate \in \States$, a reward function $\reward \colon \States \to \Rationals_{>0}$, and an absorbing state $t \in \States$, compute $\sup_{\sched} \ExpectationMDP<\MDP, \initialstate><\sched>[\pathreward]$ where $\sched$ ranges over all schedulers with $\ProbabilityMDP<\MDP, \initialstate><\sched>[\lozenge t] = 1$.
Furthermore, compute an optimal scheduler if the optimal value is finite.
Note that $\ExpectationMDP<\MDP, \initialstate><\sched>[\pathreward] = \ExpectationMDP<\MDP, \initialstate><\sched>[\pathrewardinf]$ for all schedulers $\sched$ reaching $t$ almost surely.

The techniques developed in this section, specifically \cref{thm:MDP_mult_rew}, are directly applicable to solve this problem:
\begin{corollary}
	The multiplicative stochastic shortest path problem can be solved in polynomial space.
\end{corollary}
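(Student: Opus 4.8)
The plan is to reduce the problem to \Cref{thm:MDP_mult_rew}, carefully excluding schedulers that do not reach $t$ almost surely. First, compute in polynomial time the set $W$ of states from which $t$ is reached almost surely by some scheduler, together with the sub-MDP $\MDP_W$ obtained from $\MDP$ by deleting all states outside $W$ and all actions $\alpha$ with $\support(\mdptransitions(s,\alpha))\not\subseteq W$. If $\initialstate\notin W$ the problem is degenerate (no proper scheduler exists); otherwise every proper scheduler of $\MDP$ is a scheduler of $\MDP_W$, every state of $\MDP_W$ admits a proper continuation, and since all rewards are positive every state of $\MDP_W$ has value at least $(\min_s\reward(s))^{\cardinality{\States}}>0$. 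As $t$ is absorbing with $\reward(t)=1$, a proper scheduler visits $t$ after finitely many steps almost surely, hence $\pathreward=\pathrewardinf$ for it and its value equals the expected multiplicative reward collected up to the first visit of $t$; so $\{t\}$ may be taken as the unique target, and, as for Markov chains, states of reward $0$ may be discarded (being absorbing, or made so, and hence outside $W$).

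Next, apply the end-component analysis of \Cref{sec:ECs} to $\MDP_W$. If some reachable EC $\cE\neq\{t\}$ has $\ExpectationMDP<\cE><\max>[\MP_{\log(\reward)}]>0$, or has logarithmic mean payoff $0$ and contains a gambling max-BSCC, then the optimal value is $\infty$: as in \Cref{lem:gambling_limsup} and \Cref{rem:infimum_gambling_infinite}, a scheduler may dwell in $\cE$ (logarithmic-mean-payoff-optimally, resp.\ ``securing'' the current accumulated reward) until the accumulated multiplicative reward exceeds an arbitrary bound $M$ -- which happens almost surely -- and then take a proper continuation to $t$, obtaining expected reward at least $cM$ for a fixed $c>0$; this witness is proper. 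Detecting these cases is in polynomial space by \Cref{prop:check_logMP0} and the computation of $\Actions^{\max}$ and $A^\ast$. Otherwise all reachable ECs $\cE\neq\{t\}$ satisfy $\ExpectationMDP<\cE><\max>[\MP_{\log(\reward)}]\leq0$, those with logarithmic mean payoff $0$ containing only non-gambling max-BSCCs; remove these iteratively using the adapted spider construction of \Cref{sec:ECs}, but \emph{omitting} the sink state $\bot$ and the action $\mathit{stay}$, since a proper scheduler never dwells in a BSCC forever and may only pass through it. By \Cref{lem:spider_number_iterations} and \Cref{lem:complexity_pre-processing} the iteration terminates in polynomial space, leaving an MDP in which all ECs except $\{t\}$ have strictly negative logarithmic mean payoff, i.e.\ \Cref{ass:negativeMP} holds with $T=\{t\}$.

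It remains to show that, in this pre-processed MDP, the supremum of $\ExpectationMDP<\MDP, \initialstate><\sched>[\pathreward]$ over proper schedulers equals $\ExpectationMDP<\MDP, \initialstate><\max>[\pathreward]$, which by \Cref{thm:MDP_mult_rew} is the unique solution of \ref{eq:mdp_lp} if one exists and $\infty$ otherwise, and is computable in polynomial time. ``$\leq$'' is immediate. If \ref{eq:mdp_lp} is infeasible, then for each $n$ a scheduler optimal for $\multreward_n$, extended after $n$ steps by a proper continuation to $t$, is proper and has value at least $\ExpectationMDP<\MDP, \initialstate><\max>[\multreward_n]$, and these quantities tend to $\infty$. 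If \ref{eq:mdp_lp} is feasible, take the optimal MD-scheduler $\sched$ read off from it: each recurrent class of the Markov chain it induces is an end component, hence either $\{t\}$ or one of strictly negative logarithmic mean payoff; a recurrent class of the latter kind would force $\ExpectationMDP<\MDP, s><\sched>[\pathreward]=0$ for its states $s$ by \Cref{lem:MP_MDP}, contradicting optimality since every state has positive value. So $\{t\}$ is the only recurrent class, whence $\sched$ reaches $t$ almost surely and is proper. Since all the transformations preserve $W$ and the target $t$, and -- by the analogue of \Cref{stm:spider_preserves_value} for the modified construction restricted to proper schedulers -- the proper value, the computed number is exactly the multiplicative SSP value, and tracing $\sched$ back through the spider steps yields an optimal proper MD-scheduler whenever it is finite; all steps run in polynomial space.

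The step I expect to be the main obstacle is establishing the analogue of \Cref{stm:spider_preserves_value} for the modified spider construction and proper schedulers: one has to verify that omitting $\bot$ and $\mathit{stay}$ corresponds precisely to the restriction to schedulers that eventually leave the BSCC and reach $t$, in both directions and compatibly with the iteration. A secondary point needing care is to make the ``dwell-then-exit'' witnesses for the value-$\infty$ cases genuinely proper and to confirm that their detection fits into polynomial space alongside the rest of the pre-processing.
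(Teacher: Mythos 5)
Your proposal is correct and follows essentially the same route as the paper's own proof: restrict to states from which $t$ can be reached almost surely, detect infinite value inside end components via a dwell-then-exit proper witness, remove non-gambling max-BSCCs with the spider construction stripped of $\bot$ and $\mathit{stay}$, and then solve \ref{eq:mdp_lp}, arguing that the optimal scheduler obtained is automatically proper. The only (immaterial) difference is that you justify properness of the final scheduler via its recurrent classes, whereas the paper uses a redirection/improvement argument.
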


\begin{proof}[Proof sketch]
Given $\MDP=(\States,A,\mdptransitions)$, $\initialstate\in \States$,  $\reward \colon \States \to \Rationals_{>0}$, and  $t\in \States$ as above,
the multiplicative stochastic shortest path problem can be solved as follows:
First, all states $s$ with  $\ProbabilityMDP<\MDP, \initialstate><\max>[\lozenge t] < 1$ are removed.
Then, end components are analyzed as in \Cref{sec:ECs}.
If the maximal expected multiplicative reward in $\infty$ in some end component $E$, then also the value in the multiplicative stochastic shortest path problem is $\infty$:
A sequence of schedulers reaching $t$ almost surely with arbitrarily high expected multiplicative rewards can be obtained as follows:
Move to $E$ with positive probability $p_1$ along some fixed path $\finitepath_1$ with multiplicative reward $w_1$.
In case trying to reach $E$ via this path is not successful, move to $t$ with probability $1$.
If $E$ is reached via $\finitepath_1$, stay in $E$ until an arbitrarily high multiplicative reward of $k$ is obtained inside $E$, and then moving to $t$ with probability at least $p_2$ and with multiplicative reward at least $w_2$ via a fixed path $\finitepath_q$ depending only on the current state $q$ within $E$.
The resulting scheduler has an expected multiplicative reward of at least $p_1\cdot w_1\cdot k \cdot p_2\cdot w_2$, diverging to $\infty $ for $k\to\infty$.

If there is no end component in which infinite multiplicative reward can be obtained, remove all non-gambling BSCCs with the spider construction without adding the action $\mathit{stay}$ and new absorbing states.
The schedulers under consideration have to leave a non-gambling BSCC almost surely and staying inside such a BSCC for a while before leaving does not influence the multiplicative reward of the resulting path.

Now, all remaining end components have negative maximal expected mean payoff with respect to the logarithmic rewards.
So, the maximal expected multiplicative reward can be computed as in \Cref{sec:LP}.
The resulting scheduler will reach $t$ almost surely as it could otherwise be improved by redirecting paths not reaching $t$ towards $t$ if such paths had positive probability mass.
\end{proof}

\section{Conclusion}

In this work, we presented multiplicative rewards for Markovian system, namely Markov chains and MDPs.
We demonstrated several key differences to standard additive rewards, such as divergence of values potentially stemming from transient states.
We established complexity bounds for determining and optimizing expected multiplicative rewards in Markov chains and MDPs, respectively, and considered several special cases.
We showed that several of the complexity results reduce to polynomial time if one of two well-known number-theoretic conjectures hold.
Additionally, we provided precise scheduler complexity bounds for MDPs.

For theoretical extensions, we identify several routes.
Firstly, an extension to games is interesting, especially considering how the additional quantifier alternation in the case of $\pathrewardinf$ already introduced asymmetries.
Next, we conjecture that instead of focussing on multiplication, some results could be extended to more general structures (e.g.\ semi-rings).
Practically motivated, we believe that a better algorithm than  iterating through all MD-schedulers could be found for determining the  nature of the logarithmic mean payoff in strongly connected MDP (\cref{prop:check_logMP0}).
With this, a practical algorithm, potentially making use of approximations, could be developed.
Finally, our framework can be applied to optimizing entropic risk and remove restrictions of \cite{DBLP:journals/iandc/BaierCMP24}.

\newpage
\bibliography{main}

\newpage
\appendix

\subsection{Comparison of succinctly represented integers (CSRI)}
\label{app:CSRI}

\begin{proposition}
\label{prop:CSRI}
Comparing succinctly represented integers (CSRI) is in $\exists \mathbb{R}$.
\end{proposition}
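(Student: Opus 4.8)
The plan is to give a polynomial-time many-one reduction from CSRI to the existential theory of the reals. The only real obstacle is that the two integers $a_1^{b_1}\cdots a_n^{b_n}$ and $c_1^{d_1}\cdots c_m^{d_m}$ may be doubly exponentially large in the input size, so they cannot be written down explicitly; however, $\exists\mathbb{R}$ imposes no bound on the magnitude of real witnesses, only on the size of the formula, so it suffices to describe these integers \emph{implicitly} by a polynomial-size system of polynomial equations whose unique real solution assigns the intended values to two designated variables, and then to compare those variables.

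Concretely, let $\ell_i$ (resp.\ $\ell'_j$) be the bit length of $b_i$ (resp.\ $d_j$). For each $i$ I would introduce variables $p_{i,0},\dots,p_{i,\ell_i-1}$ together with the constraints $p_{i,0}=a_i$ and $p_{i,j+1}=p_{i,j}^2$, so that necessarily $p_{i,j}=a_i^{2^j}$ (repeated squaring). Writing $S_i=\{\, j : \text{the }j\text{-th bit of }b_i\text{ is }1\,\}$, which is computable in polynomial time from the binary encoding of $b_i$, I would add a variable $A_i$ with the constraint $A_i=\prod_{j\in S_i}p_{i,j}$ (the empty product being $1$), so that $A_i=a_i^{b_i}$; this is correct even in the degenerate cases $a_i=0$ or $b_i=0$ under the convention $0^0=1$. (To keep all equations of bounded degree one may instead chain the product through fresh variables $q_{i,0}=1$, $q_{i,t+1}=q_{i,t}\cdot p_{i,j_{t+1}}$, but this is cosmetic.) Doing the same for the $c_j,d_j$ yields variables $C_j$ with $C_j=c_j^{d_j}$, and I would finally add variables $A,C$ with $A=\prod_{i=1}^n A_i$ and $C=\prod_{j=1}^m C_j$. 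The reduction outputs the existential formula consisting of the conjunction of all these polynomial equations together with the single inequality $A>C$, with all introduced variables existentially quantified over $\mathbb{R}$; note nonnegativity is automatic since the squaring constraints and $a_i,c_j\ge 0$ force every variable to be a nonnegative real.

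For correctness, observe that given the fixed input the equation system forces each $p_{i,j}$, hence each $A_i$ and $C_j$, and finally $A$ and $C$, to exactly its intended integer value, so the solution is unique; therefore the formula is satisfiable over $\mathbb{R}$ if and only if $a_1^{b_1}\cdots a_n^{b_n}>c_1^{d_1}\cdots c_m^{d_m}$, i.e.\ if and only if the CSRI instance is a yes-instance. For size, the number of variables and of constraints is $O(\sum_i\ell_i+\sum_j\ell'_j+n+m)$, and each constraint is a constant assignment, a quadratic equation $p_{i,j+1}=p_{i,j}^2$, or a single monomial equation of degree at most $\max(n,m,\max_i\ell_i,\max_j\ell'_j)$, all writable in size polynomial in the input. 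Hence this is a polynomial-time reduction, and CSRI $\in\exists\mathbb{R}$.

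The subtlety I anticipate is entirely in the bookkeeping rather than in any analytic argument: one must state carefully that the repeated-squaring system has a \emph{unique} solution over $\mathbb{R}$ (so the existential quantifier faithfully captures the value comparison and admits no spurious witnesses), and one must check the corner cases $a_i=0$, $b_i=0$, and the empty products $n=0$ or $m=0$, so that the encoded quantities agree with $a_i^{b_i}$ under $0^0=1$. Once the encoding is fixed these are routine.
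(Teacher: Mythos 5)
Your reduction is exactly the paper's own construction: encode each $a_i^{b_i}$ by repeated squaring ($x_0=a_i$, $x_{j+1}=x_j^2$) plus a product over the $1$-bits of $b_i$, then existentially quantify everything and assert the inequality between the two products. The argument is correct and essentially identical to the paper's proof (your extra remarks on uniqueness of the solution and the degenerate cases are fine but not needed beyond what the paper does).
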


\begin{proof}
Given $a_1,\dots, a_n, b_1, \dots, b_n, c_1,\dots, c_m, d_1,\dots, d_m\in \mathbb{N}$, 
we have to decide whether 
\[
a_1^{b_1}\cdot \dots \cdot a_n^{b_n} > c_1^{d_1}\cdot \dots \cdot c_m^{d_m}.\tag{$\ast$}
\]
We use a binary representation to represent $b_i$ for $1\leq i \leq n$ and $d_i$ for $1\leq i \leq m$:
Let $k_{b_i} = \lceil \log b_i \rceil$ and $k_{d_i} = \lceil \log d_i \rceil$ for all $i$.
Then, 
\begin{equation*}
b_i = {\sum}_{j=0}^{k_{b_i}}  \chi_{b_i}^j\cdot 2^j 
\end{equation*}
for some $\chi_{b_i}^j\in \{0,1\}$ for all $i$ and all $j\leq k_{b_i}$. The values  $\chi_{d_i}^j\in \{0,1\}$ for all $i$ and all $j\leq k_{b_i}$ are chosen analogously.
Now, for all $1\leq i \leq n$, define the formulas
\begin{align*}
	& \phi_{a_i,b_i}(x) =  \exists x_0 \dots \exists x_{k_{b_i}} \\
	& \left(  x_0 =   a_i \land  {\bigwedge}_{j=1}^{k_{b_i}} x_j = x_{j-1}\cdot x_{j-1} \land   x = {\prod}_{j:\chi_{b_i}^j= 1} x_j \right).
\end{align*}
The subformula $x_0 =   a_i $ can be spelled out using the binary representation of $a_i$.
The formula $\phi_{a_i,b_i}(x)$ now defines $a_i^{b_i}$.

Analogously, we define the formulas $\phi_{c_i,d_i}(x)$ for $1\leq i \leq m$.
Now, inequality ($\ast$) holds if and only if the following sentence $\phi$ belongs to the existential theory of the reals:
\begin{align*}
	& \phi = \exists z_1 \dots \exists z_n \exists y_1 \dots \exists y_m                                                                                    \\
	& \left( {\bigwedge}_{i=0}^n \phi_{a_i,b_i}(z_i)\land {\bigwedge}_{i=0}^m \phi_{c_i,d_i}(y_i)) \land {\prod}_{i=1}^n z_i > {\prod}_{i=1}^m y_i \right).
\end{align*}
The length of this formula $\phi$ is polynomial in the size of the binary representation of the input and can be constructed in polynomial time.
\end{proof}

\subsection{Omitted Proofs for Markov Chains} \label{appendix:proofs}

\lemmeanpayoffMC*

\begin {proof}
	For the first part, we know that for almost all paths $\infinitepath = \infinitepath_0 \infinitepath_1 \cdots$ 
	\begin{equation*}
		\MPsup_{\log(\reward)}(\infinitepath) = \ExpectationMC<\MC, \initialstate>[\MP_{\log(\reward)}] < 0.
	\end{equation*}
	So, by definition, there is an $\varepsilon>0$ and $N \in \mathbb{N}$ such that
	\begin{equation*}
		\frac{1}{n+1} {\sum}_{i=0}^n \log(\reward(\rho_i))  < - \varepsilon
	\end{equation*}
	for all $n>N$.
	Consequently, for all $n>N$, we have
	\begin{equation*}
	\log\left({\prod}_{i=0}^n \reward(\rho_i)\right) < - \varepsilon \cdot (n+1) 
		\end{equation*}
and hence
	 \[
	  {\prod}_{i=0}^n \reward(\rho_i) < e^{- \varepsilon \cdot (n+1)}.
	  \]
	We conclude that
	\begin{equation*}
		\pathreward(\infinitepath) \leq {\limsup}_{n \to \infty} e^{- \varepsilon \cdot (n+1)} = 0.
	\end{equation*}
	So, for almost all paths $\infinitepath$, we have $\pathreward (\infinitepath) = 0$ and the claim follows.
	
	The second part follows analogously.
\end {proof}

\zeromeanpayoff*

\begin {proof}
Point~1):
If all cycles have multiplicative reward $1$, then any prefix of an infinite path $\infinitepath$
has the same multiplicative reward as some simple path (i.e.\ a path without loops).
As there are only finitely many simple paths, there are only finitely many values the multiplicative reward $\multreward_n(\infinitepath)$ after $n$ steps can take.
In particular, this set of values is independent of $n$.
Consequently, the same holds for $\pathreward(\infinitepath) = \limsup_{n\to \infty} \multreward_n(\infinitepath)$
and $\pathrewardinf(\infinitepath) = \liminf_{n\to \infty} \multreward_n (\infinitepath)$.
As these finitely many values are strictly between $0$ and $\infty$, the expected values also lie strictly between $0$ and $\infty$.

\medspace
Point~2):
A cycle $s_0 s_1 \dots s_{k+1}$ with $s_0 = s_{k+1}$ has $\prod_{i=0}^k \reward(s_i) \neq 1$ exactly when $\sum_{i=0}^k \log(\reward(s_i)) \neq 0$.
From \cite[Lem.~3.2]{BBDGS2018} (applied to the reward function $\log(\reward)$) we get that if $\ExpectationMC<\MC, \initialstate>[\MP_{\log(\reward)}] = 0$ and there exists a cycle with additive reward w.r.t.\ $\log(r)$ different to $0$, then $\liminf_{n\to\infty} \sum_{i=0}^{n} \log(\reward(\infinitepath_i)) = -\infty$ and hence $\pathrewardinf(\infinitepath)=0$ almost surely.
Analogously, $\limsup_{n\to\infty} \sum_{i=0}^{n} \log(\reward(\infinitepath_i)) = \infty$ and hence $\pathreward(\infinitepath)=\infty$ almost surely.
\end {proof}

\uniquereward*

\begin {proof}
First, assume that all cycles are 1-cycles.
Let $s, t \in \States$ two states, $\finitepath, \finitepath'$ two paths from $s$ to $t$, and $\chi$ a path from $t$ to $s$.
Now, the cycles $\finitepath\circ \chi$ and $\finitepath' \circ \chi$ both have multiplicative reward $1$. Consequently, $\multreward(\finitepath)=\multreward(\finitepath')$.

For the other direction, let $\finitepath$ an n-cycle from $s$ to $s$.
Then, $\finitepath' = \finitepath \circ \finitepath$ (i.e.\ the path $\finitepath$ repeated twice) is a path from $s$ to $s$ with $\multreward(\finitepath') = \multreward(\finitepath)^2 \neq \multreward(\finitepath)$.
\end {proof}

\finitevaluesMC*

\begin{proof}
	It is sufficient to consider only paths which visit every state infinitely often as
	this happens almost surely  in a strongly connected Markov chain.
	Let $\rho$ be any such path.
	We can partition $\rho$ into finite segments starting at $\initialstate$, visiting all states (some possibly multiple times), and coming back to $\initialstate$.
	Each such segment is a cycle and by assumption has reward $1$.
	As such, the multiplicative reward of $\rho$ until any index $j$ equals the reward obtained within the current segment.
	Within any segment, the maximal and minimal multiplicative reward naturally equals $\max_{t\in \States} R(\initialstate,t)$ and $\min_{t\in \States} R(\initialstate,t)$, respectively.
	Thus, $\pathreward(\rho) = \max_{t\in \States} R(\initialstate,t)$ and $\pathrewardinf (\rho)= \min_{t\in \States} R(\initialstate,t)$.
	As this holds for almost all paths, we are done.
\end{proof}

\prophardnessMP*

\begin {proof}
Given $a_1,\dots, a_n, b_1, \dots, b_n, c_1,\dots, c_m, d_1,\dots, d_m\in \mathbb{N}$, we construct the following strongly connected Markov chain $\MC$:
We use states $s$, $t_1,\dots, t_n$ and $q_1, \dots, q_m$ and denote the state space by $S$. Letting $\Delta= \sum_{i=1}^n b_i + \sum_{j=1}^m d_j$,  the transition probabilities are given by
\begin{align*}
	\mctransitions(s,t_i) & = \frac{b_i}{\Delta} \quad \text{ and } & \mctransitions(t_i,s) & =1  & \text{ for $1\leq i \leq n$}, \\
	\mctransitions(s,q_j) & = \frac{d_j}{\Delta} \quad \text{ and } & \mctransitions(q_j,s) & =1  & \text{ for $1\leq j \leq m$}.
\end{align*}
The rewards are given by
\begin{align*}
&\reward(s)=1, \quad \reward(t_i)=a_i \text{ for $1\leq i \leq n$}, \\
 &\reward(q_j) = \frac{1}{c_j} \text{ for $1\leq j \leq m$}.
\end{align*}
Now, the stationary distribution $\theta$ is given by
\begin{align*}
&\theta(s) = \frac12, \quad \theta(t_i) = \frac{b_i}{2\Delta}  \text{ for $1\leq i \leq n$}, \\
 &\theta(q_j) = \frac{d_j}{2\Delta}   \text{ for $1\leq j \leq m$}.
\end{align*}
So, as we have seen before, $\ExpectationMC<\MC, \initialstate>[\MP_{\log(\reward)}] > 0$ if and only if
\begin{equation*}
{\prod}_{s\in S} \reward(s)^{\theta(s)} = {\prod}_{i=1}^n a_i^{b_i / 2\Delta} \cdot  {\prod}_{j=1}^m \left(\frac1{c_j}\right)^{d_j / 2\Delta} >1.
\end{equation*}
This holds if and only if $ \prod_{i=1}^n a_i^{b_i} >  \prod_{j=1}^m c_j^{d_j}$.
\end {proof}

\lemzeros*

\begin {proof}
	We know that $\ProbabilityMC<\MC, s>[\reach T] = 1$.
	Consequently, $\ProbabilityMC<\MC, s>[\reach T_0] < 1$ exactly if there exists a path $\finitepath$ from $s$ to $T \setminus T_0$.
	If such a path exists, then we immediately get $v_s > 0$.
	Otherwise, almost all paths reach states which obtain a value of $0$, and thus $v_s = 0$.
\end {proof}

\lemmcnonnegsolution*

\begin {proof}
	Note that if $\ExpectationMC<\MC, \initialstate>[\pathreward] < \infty$, then for all states $s \in \States$, we have $v_s < \infty$ as we assume that all states are reachable.
	Clearly, $v_s \geq 0$. For \cref{eq:transient}, we write for a state $s\in S_?$:
	\begin{align*}
		v_s & = \ExpectationMC<\MC, \initialstate>[\pathreward] = \reward(s) \cdot {\sum}_{s' \in \States} \mctransitions(s, s') \cdot \ExpectationMC<\MC, s'>[\pathreward] \\
		    & = \reward(s) \cdot \big( {\sum}_{s' \in \States_?} \mctransitions(s, s') \cdot v_{s'} + {\sum}_{t \in T} \mctransitions(s,t) \cdot v_t + {}                   \\
		    & \qquad\qquad {\sum}_{s' \in \States_0 \setminus T} \mctransitions(s,s') \cdot v_{s'} \big)                                                                    \\
		    & = \reward(s) \cdot \big( {\sum}_{s' \in \States_?} \mctransitions(s,s') \cdot v_{s'} + {\sum}_{t \in T} \mctransitions(s,t) \cdot v_t \big).  
	\end{align*}
\end {proof}

\lemmcvalueiteration*

\begin {proof}
	By induction on $n$, we get for all states $s\in S_?$ that
	\begin{equation*}
		(U^n(0))_s = \int_{\reach^{\leq n} T} \pathreward \,d\ProbabilityMC<\MC, s>.
	\end{equation*}
	Observe that $\reach^{\leq n} T$ is the disjoint union of $\reach^{= n} T$ and $\Union_{n = 1}^\infty \reach^{= n} T = \reach T$.
	Moreover, $\ProbabilityMC<\MC, s>[\reach T] = 1$.
	Thus
	\begin{align*}
		\ExpectationMC<\MC, s>[\pathreward] & = \int \pathreward \,d\ProbabilityMC<\MC, s> = \int_{\reach T} \pathreward \,d\ProbabilityMC<\MC, s> \\
			& = {\sum}_{k = 0}^{\infty} \int_{\reach^{= k} T} \pathreward \,d\ProbabilityMC<\MC, s> \\
			& = {\lim}_{n\to\infty} {\sum}_{k = 0}^{n} \int_{\reach^{= k} T} \pathreward \,d\ProbabilityMC<\MC, s> \\
			& = {\lim}_{n\to\infty} \int_{\reach^{\leq n} T} \pathreward \,d\ProbabilityMC<\MC, s> = {\lim}_{n\to\infty} (U^n(0))_s.
	\end{align*}
	Note that $\pathreward(\rho) \geq 0$, thus the sum is monotonically increasing and the limits are well-defined.
\end {proof}

\cormcleast*

\begin {proof}
	As $U$ is monotonous w.r.t.\ to the component-wise ordering, and thus Scott-continuous, \cref{thm:valueiteration} implies by the Kleene fixed-point theorem that $(v_s)_{s\in S_?}$ is the least fixed point of $U$.
	However, if $x^\ast \geq 0$ satisfies \cref{eq:transient}, it is a fixed point of $U$.
\end {proof}

\cormcunique*

\begin {proof}
	As all states in $\States_?$ reach a state in $T \setminus T_0$ with positive probability, we have $v_s > 0$ for all $s \in \States_?$.
	(In other words, $\States_? \intersection \States_0 = \emptyset$.)
	Let $v$ be the vector $(v_s)_{s \in \States_?}$.
	Suppose there is a solution $y^\ast \neq v$ to \cref{eq:transient}.
	Let $\Delta = y^\ast - v \neq 0$.
	As \cref{eq:transient} is a linear equation, for any $t \in \Reals$, $v + t\cdot \Delta$ is a solution, too.
	As $\Delta \neq 0$, and $v$ is component-wise strictly positive, there is a $t$ such that $v + t \cdot \Delta \geq 0$, but $v + t \cdot \Delta \ngeq v$.
	So, $v + t \cdot \Delta$ is non-negative and satisfies \cref{eq:transient}.
	Hence, by \Cref{cor:least}, $v \leq v + t \cdot \Delta$, yielding a contradiction.
	Thus, $v$ is the unique solution to \cref{eq:transient}.
\end {proof}

\subsection{Omitted Proofs for MDP} \label{appendix:proofsmdp}

\mpMDP*

\begin {proof}
	Let $\vartheta = \ExpectationMDP<\MDP, \initialstate><\max>[\MP_{\log(\reward)}]$ the optimal mean payoff.
	
	Point~1):
	It is well-known that $\vartheta < 0$ in a strongly connected MDP implies that $\ProbabilityMDP<\MDP, s><\sched>[\MP_{\log(\reward)}\leq \vartheta] = 1$ for any state $s$ and scheduler $\sched$ (see, e.g., \cite[Sec.~6.1]{Kallenberg}).
	Using this fact, the result follows analogously to the proof of \cref{lem:meanpayoffMC}.

	Point~2):
	Similarly, in strongly connected MDPs, if $\vartheta > 0$, there exists an MD-scheduler $\sched$ with $\ProbabilityMDP<\MDP, s><\sched>[\MP_{\log(\reward)} = \vartheta] = 1$.
	Again, the claim follows analogous to \cref{lem:meanpayoffMC}.
\end {proof}

\checklogMP*

\begin {proof}
As there is an MD-scheduler maximizing the expected mean payoff, we can go through all MD-schedulers $\sched$ in polynomial space and check 
whether $\ExpectationMDP<\MDP, \initialstate><\sched>[\MP_{\log(\reward)}] \bowtie 0$ in the resulting Markov chain. Each of the checks belongs to the complexity class $\textrm{CSRI}\subseteq \textrm{PSPACE}$ (\cref{prop:logMP_MC}).
\end {proof}

\Amax*

\begin{proof}
During the check how the maximal expected logarithmic mean payoff compares to $0$ as in the proof of \cref{prop:check_logMP0}, all actions that are part of a BSCC induced by an MD-scheduler with logarithmic mean payoff $0$ can be stored.
\end{proof}

\gamblinglimsup*

\begin{proof}
A path $\infinitepath = s_1 s_2 \dots$ in a gambling max-BSCC almost surely satisfies
$\limsup_{n\to \infty} {\sum}_{i=1}^n \log(r(s_i))  = \infty$ \cite{BBDGS2018}.
Hence, almost surely $\pathreward(\infinitepath) = \infty$.
An MD-scheduler achieving value $\infty$ from all states simply realizes the gambling max-BSCC and takes actions almost surely leading to this gambling max-BSCC from all other states.
\end{proof}

\NPhardgambling*

\begin{proof}
	Containment: Given an MD scheduler, we can apply the techniques derived for Markov chains.

	Hardness: We consider the \emph{subset product} problem.
	There, we are given integers $\{n_1, \dots, n_k\}$ and $T$ (in binary), and asked to decide whether there exists a subset $I \subseteq \{1, \dots, k\}$ such that $\prod_{i \in I} n_i = T$.
	This problem is known to be NP-complete (in the strong sense) \cite{DBLP:journals/eor/NgBCK10}.
	Given an instance, we create a state for each $n_i$ with two choices $\text{pick}_i$ and $\text{omit}_i$, as well as states corresponding to these choices.
	The action $\text{pick}_i$ leads to state $\text{pick}_i$ with a reward of $n_i$, analogous for $\text{omit}_i$ with reward $1$.
	From both states, there is a single action, leading to the state corresponding to $n_{i+1}$.
	For $n_k$, both choice-states lead to a final state with reward $1/T$ which then loops back to the state for $n_1$.
	Observe that every distribution in this MDP is Dirac, and thus every MD-scheduler induces a unique cycle, whose total reward equals the product of all $\text{pick}$-ed numbers divided by $T$.
	As such, every MD-scheduler directly corresponds to a possible solution of subset-product and a 1-cycle exists exactly when the subset-product problem has a positive answer.
\end{proof}

\rewardtounique*

\begin{proof}
	Assume for contradiction that the property is violated for $s$ and $t$.
	Let $\strategy$ be an MD-scheduler which from $t$ reaches $s$ almost surely.
	Such a scheduler exists since $\MDP$ is strongly connected.
	Then, consider the schedulers $\sigma'$ and $\tau'$ which are obtained by following $\sigma$ or $\tau$ until $t$ is reached and then $\strategy$ until $s$ is reached again, switching back to $\sigma$ and $\tau$, and so on.
	We consider the \enquote{round trip} reward $\logtotalrewardto{t ; s}$, i.e.\ the reward until $t$ and after that $s$ is reached, defined analogous to $\logtotalrewardto{t}$.
	By linearity of expectation and the definition of $\sigma'$, we get that $\ExpectationMDP<\MDP, s><\sigma'>[\logtotalrewardto{t ; s}] = \ExpectationMDP<\MDP, s><\sigma>[\logtotalrewardto{t}] + \ExpectationMDP<\MDP, t><\strategy>[\logtotalrewardto{s}]$ (observe that different strategies are considered).
	By assumption, we thus get $\ExpectationMDP<\MDP, s><\sigma'>[\logtotalrewardto{t ; s}] \neq \ExpectationMDP<\MDP, s><\tau'>[\logtotalrewardto{t ; s}]$.
	However, observe that both of these values necessarily need to equal zero, since otherwise the mean payoff of $\tau'$ or $\sigma'$ w.r.t.\ $\log(\reward)$ would not be 0:
	Assume that $\ExpectationMDP<\MDP', s><\sigma'>[\logtotalrewardto{t ; s}] = c \neq 0$, i.e.\ the expected logarithmic total reward over paths from $s$ to $t$ and back to $s$ is not equal to $0$.
	By assumption, almost all paths go from $s$ to $t$ and back to $s$ infinitely often.
	Thus, the mean payoff is not equal to $0$, contradicting our assumption.
\end{proof}

\idetifygoodactions*

\begin{proof}
	Observe that for any pair of states $s$ and $t$, we have $Q(s, t) = \sum_{u \in \States} \mdptransitions(s, a, u) \cdot (\log(\reward(u)) + Q(u, t))$ for any action $a \in \Actions(s)$, as the value $Q(s, t)$ is unique.
	For a state $s$ and action $a \in \Actions^\ast(s)$, we thus have $Q(s, x) = \log(\reward(u)) + Q(u, x)$ and, equivalently, $\exp(Q(s, x)) = \reward(u) \cdot \exp(Q(u, x))$ for any successor $u$ of $s, a$ by definition of $\Actions^\ast$.

	\underline{If:}
	We show that when restricted to actions in $\Actions^\ast$, for any state $s$ in $E$ \emph{every path} from $s$ to $x$ has the same \emph{multiplicative} reward, namely $\exp(Q(s, x))$.
	We prove the statement for all paths by induction over their length.
	For $n = 0$, the statement trivially holds.
	Thus, fix an arbitrary $n$ for which the statement holds.
	Let $\rho$ and $\rho'$ two paths of length between $1$ and $n + 1$, both going from state a $s$ to $x$.
	Let $u$ and $u'$ the second states of these paths, respectively.
	By the induction hypothesis, the multiplicative rewards of the remaining paths from $u$ and $u'$ to $x$ equal $\exp(Q(u, x))$ and $\exp(Q(u', x))$.
	Consequently, the multiplicative rewards of $\rho$ and $\rho'$ equal $\reward(u) \cdot \exp(Q(u, x))$ and $\reward(u') \cdot \exp(Q(u', x))$, respectively.
	Since $u$ and $u'$ are successors of $s$ using actions from $\Actions^\ast$, we have that $\exp(R(s, x)) = \reward(u) \cdot \exp(Q(u, x)) = \reward(u') \cdot \exp(Q(u', x))$.
	Consequently, the rewards of $\rho$ and $\rho'$ both equal $\exp(Q(s, x))$.

	Now, we extend this statement to any pair of states, i.e.\ every path from a state $s$ to a state $t$ using only actions of $\Actions^\ast$ has reward $\exp(Q(s, t))$.
	For contradiction, assume there are two paths $\rho$ and $\rho'$ from $s$ to $t$ which obtain a different reward.
	Then, let $\rho''$ any path from $t$ to $x$ in $E$ (which exists, as $E$ is an EC).
	Concatenating $\rho$ and $\rho'$ with $\rho''$ yields two paths from $s$ to $x$ in $E$ with different reward, contradicting the above.

	Together, we have that any path between two states $s$ and $t$ in $E$ yields the reward $\exp(Q(s, t))$.
	In particular, recall that $\exp(Q(x, x)) = 1$, thus any cycle only using actions from $\Actions^\ast$ yields a reward of $1$.

	\underline{Only if:}
	Fix an EC $E$ and states $s$ and $x$ in $E$.
	Moreover, assume that in $s$ an action $a \notin \Actions^\ast(s)$ is used.
	Let $u$ and $u'$ two successors of $(s, a)$ with $\log(\reward(u)) + Q(u, x) \neq \log(\reward(u')) + Q(u', x)$ (which exist as $a \notin \Actions^\ast(s)$).
	W.l.o.g.\ choose $u$ such that this value is minimal among all successors, and $u'$ such that it is maximal.
	Recall that $Q(s, x) = \sum_{t \in \States} \mdptransitions(s, a, t) \cdot (\log(\reward(t)) + Q(t, x))$ and $Q(t, x)$ is the expected additive reward from $t$ to $x$ w.r.t.\ $\log(\reward)$ (under any strategy reaching $x$ from $t$ almost surely). %
	Consequently, $\log(\reward(u)) + Q(u, x) < Q(s, x) < \log(\reward(u')) + Q(u', x)$ by choice of $u$ and $u'$; and there must exist paths $\rho$ and $\rho'$ which obtain an additive reward of less and more than $Q(s, x)$, respectively.
	In particular, their additive (and thus also multiplicative) reward is distinct.
	As both $x$ and $s$ are in $E$, there also exists a path $\rho''$ from $x$ to $s$.
	Clearly, concatenating each $\rho$ and $\rho'$ with $\rho''$ yields two cycles through $s$ which yield two different (additive and) multiplicative rewards.
\end{proof}

\computeAstar*

\begin{proof}
	First, fix an arbitrary state $x$ and MD-scheduler $\sched$ under which $x$ is reached with probability 1 from every state.
	Since the MDP is strongly connected, such a scheduler always exists.
	Consider the Markov chain induced by the scheduler $\sched$.
	In this MC, make $x$ absorbing and set its reward to $0$.
	Clearly, the expected reward until reaching $x$ corresponds to the expected total reward in that Markov chain.
	For each starting state $s$, compute the \emph{expected visiting time} $\text{evt}_s(u)$ (see \cite{DBLP:conf/tacas/MertensKQW24} for a recent treatment of EVTs), i.e.\ the number of times state $u$ is visited on average when moving from $s$ to $x$ under $\sched$.
	This can be done in polynomial time by solving a linear equation system \cite[Cor.~3.3.6]{kemeny1969finite} (rephrased in \cite[Thm.~1]{DBLP:conf/tacas/MertensKQW24}).
	By \cref{lem:reward_to_is_unique}, we have that $Q(s, x) = \ExpectationMDP<\MDP, s><\sched>[\logtotalrewardto{x}]$.
	Now, observe that the entire expected reward can be expressed as the sum of rewards where rewards are only obtained in one fixed state $s$ and $0$ otherwise.
	Then, as a consequence of \cite[Lem.~1]{DBLP:conf/tacas/MertensKQW24}, we get that expected total reward can be expressed through EVT as $\ExpectationMDP<\MDP, s><\sched>[\logtotalrewardto{x}] = \sum_{u \in \States} \text{evt}_s(u) \cdot \log(\reward(t))$.\footnote{Note the similarity to the treatment of $\ExpectationMC<\MC, s>[\MP_{\log(\reward)}]$ in the previous section: we use the expected visiting times to derive the total reward instead of the stationary distribution to derive the mean payoff.
	See also \cite{DBLP:conf/tacas/MertensKQW24} for a connection between stationary distribution and EVT.}
	This value is, in general, irrational.
	However, to compute whether $a \in A^\ast(s)$ for a state-action pair $(s, a)$, we do not need to compute these values precisely, but only decide whether all successors have the same value (as seen in \cref{stm:identify_good_actions}).
	Clearly, if an action has only one successor, the condition is trivially satisfied, and $a \in A^\ast(s)$.
	Otherwise, fix an arbitrary successor $t$ and for every other successor $t'$ we check whether $\log(\reward(t)) + Q(t, x) = \log(\reward(t')) + Q(t', x)$.
	We again get a query of the form \enquote{$\sum p_i \cdot \log(r_i) = 0$}, and the same techniques as in \cref{prop:logMP_MC} can be applied to reduce this to an ESRI instance, which is solvable in polynomial time.
	We only need to perform linearly many such comparisons, so, in total, we obtain the result.
\end{proof}

\correctnessspider*

\begin {proof}
	We can mimic any scheduler $\strategy$ for $\MDP$ by a scheduler $\strategy_E$ for $\MDP_E$ and vice versa:
	Entering $E$ and staying there corresponds to moving to state $\bot$ via action $\mathit{stay}$ in $\MDP_E$.
	The correctness of the values $\overline{v_E}$  or  $\underline{v_E}$  used for action $\mathit{stay}$, depending on whether we investigate $\pathreward$ or $\pathrewardinf$ can be seen as follows:
	
	Suppose a scheduler visits $E$ at a state $t$ and stays in $E$ forever with some positive probability. Then almost all of the paths  $\infinitepath$ staying in $E$ will visit all states of the BSCC $E$ 
	infinitely often. As all paths from $t$ to another state $t'$ have multiplicative reward $R(t,t')$, the prefixes of $\infinitepath$ starting from $t$ will almost surely have multiplicative rewards
	$R(t,t')$ for all $t'$ in $E$ infinitely often. Hence, $\pathreward(\infinitepath)$ starting from $t$ will almost surely be $  \max_{t'\in T} R(t,t') = R(t,s_E) \cdot \max_{t'\in T} R(s_E,t')$.
	But this is exactly the multiplicative reward of the path 
	from $t$ to $s_E$ and then to $\bot$ via actions $\mathit{center}_t$ with reward $R(t,s_E)/\reward(t)$ where $\reward(t)$ cancels out when it is obtained by leaving $t$ and $\mathit{stay}$ with reward $\overline{v_E}=\max_{t'\in T} R(s_E,t')/\reward(s_E)$ where $\reward(s_E)$ cancels out when it is obtained while leaving $s_E$.
	Similarly, $\pathrewardinf(\infinitepath)$ almost surely is $\min_{t'\in T} R(t,t') = R(t,s_E) \cdot \min_{t'\in T} R(s_E,t')$ and the analogous reasoning applies.
	
	Leaving $E$ via state action pair $(t, \beta)$ on the other hand corresponds to moving to $s_E$ and taking action $\tau_{t,\beta}$ there.
	The correctness of the reward along the corresponding path follows as above. Using that $R(t,s_E)\cdot R(s_E,t)=1$.
\end {proof}

\spidernumber*

\begin{proof}
Let $E=(T,B)$ be the removed non-gambling max-BSCC and let $s_E$ be the chosen center state for the adapted spider construction.
All auxiliary states that are added in the adapted spider construction have only one enabled action and hence do not influence the sum in the statement of the lemma.
For all states $t$ in the BSCC $E$, all actions $\beta \not=B(t)$ are removed and instead an action $\tau_{t,\beta}$ is enabled in $s_E$. 
For states $t\not=s_E$, furthermore, the action $B(t)$ is replaced by $\mathit{center}_t$. Finally, in state $s_E$, the action $B(s_E)$ is replaced by $\mathit{stay}$.
So, the number $\sum_{t\in S'} \left|\Actions'(t)\setminus\{\mathit{stay}\}\right| - 1 $ is one lower than $\sum_{t\in S} \left|\Actions(t)\right| - 1$.
\end{proof}

\boundspider*

\begin{proof}
By \Cref{lem:spider_number_iterations}, each application of the spider construction reduces the total  number of actions different to $\mathit{stay}$ enabled in states belonging to the original state space $\States$ by one.
Our iterative procedure never chooses new auxiliary states as center of the spider construction and the action $\mathit{stay}$ cannot be part of a BSCC. So, after at most $k=\sum_{t\in S} \left|\Actions(t)\right|  $
applications of the spider construction, there are no non-gambling max-BSCCs left as after $k$ applications of the spider construction $\mathit{stay}$ would be the only action enabled in states from the original state space $\States$.
\end{proof}

\ECliminf*

\begin{proof}
If no absorbing state is reachable form $s$, under any scheduler $\sched$, we have that almost all paths $\infinitepath$ either achieve a logarithmic mean payoff $<0$ or they achieve a logarithmic mean payoff of $0$ and take cycles with multiplicative reward different to $1$
infinitely often. In the latter case, with probability $1$, the multiplicative reward of prefixes of $\infinitepath$ comes close to $0$ infinitely often. So, almost all paths $\infinitepath$ satisfy  $\pathrewardinf(\infinitepath)=0$.

If an absorbing state $u$ with value $1$ is reachable from $s$, consider the following scheduler $\sched_K$ for $K\in \Reals$.
The scheduler stays in the end component $\cE$ by realizing a BSCC with
logarithmic mean payoff $0$ until the multiplicative reward exceeds $K$. This will happen with probability $1$.
Afterwards, it tries to follow a simple path $\rho$ to $u$.
It will succeed with positive probability $p$. So, the expected value of $\pathrewardinf$ under this scheduler is at least $K\cdot \reward(\rho)\cdot p$.
As this value tends to $\infty$ for $K\to \infty$, we conclude $\ExpectationMDP<\MDP, \initialstate><\max>[\pathrewardinf]=\infty$.
We remark that even one concrete scheduler can achieve infinite reward.
Intuitively, with probability $\frac{1}{2^i}$, the scheduler replicates $\sched_{4^i}$, and its expected value is $\infty$.
\end{proof}

\feasabilityLP*

\begin{proof}
We prove the contraposition of the claim for $\pathreward$. The proof works completely analogously for $\pathrewardinf$.

Let $v_s = \ExpectationMDP<\MDP, s><\max>[\pathreward]$ for all $s\in \States$.
If $\ExpectationMDP<\MDP, \initialstate><\max>[\pathreward] <\infty$, then all these values are finite as every state is reachable in $\MDP$.
The first two constraints of the LP are satisfied by the values $(v_s)_{s\in \States}$.
For the last constraint, let $s\in S\setminus T$ and $\alpha \in A(s)$ be given.
Now, a scheduler $\sched_\alpha$ that starts in $s$ by choosing $\alpha$ and afterwards following optimal schedulers starting from all successor states achieves 
$
	\ExpectationMDP<\MDP, s><\sched_\alpha>[\pathreward] = \reward(s) \cdot {\sum}_{s' \in \States} \mdptransitions(s,\alpha,s') \cdot v_{s'}
$.
Hence,
$
v_s =  \ExpectationMDP<\MDP, s><\max>[\pathreward] \geq \reward(s) \cdot {\sum}_{s' \in \States} \mdptransitions(s,\alpha,s') \cdot v_{s'}
$.
So, if the actual values are finite, the LP is feasible.
\end{proof}

\knastertarski*

\begin{proof}
By the Knaster-Tarski theorem, $T$ has a least fixed point on the complete lattice between $0$ and $x$ with respect to the componentwise ordering $\leq$.
Furthermore, this least fixed point is the least point $y$ with $T(y) \leq y$.
The constraints of the LP exactly express that $T(x)\leq x$.
As the objective function minimizes the sum of the components, the solution $x$ has to be equal to the componentwise least point $y$ with $T(y) \leq y$, hence the solution is the least fixed point of $T$.
\end{proof}

\nstepconverges*

\begin {proof}
For any scheduler $\sched$ and any state $s$, we have 
\begin{align*}
	\ExpectationMDP<\MDP, s><\sched>[\pathreward] & = \ProbabilityMDP<\MDP, s><\sched>[\lozenge T] \cdot \ExpectationMDP<\MDP, s><\sched>[\pathreward \mid \lozenge T] \tag{$\ast$}       \\
	                                              & \quad + \ProbabilityMDP<\MDP, s><\sched>[\neg \lozenge T] \cdot \ExpectationMDP<\MDP, s><\sched>[\pathreward \mid \neg \lozenge T].
\end{align*}
As all end components have negative logarithmic mean payoff, $\ExpectationMDP<\MDP, s><\sched>[\pathreward \mid \neg \lozenge T] =0$.

On the other hand,
\begin{equation*}
	\lim_{n\to \infty}  \ExpectationMDP<\MDP, s><\max>[\multreward_n] = {\sup}_{\sched} \ProbabilityMDP<\MDP, s><\sched>[\lozenge T] \cdot \ExpectationMDP<\MDP, s><\sched>[\pathreward \mid \lozenge T].
\end{equation*}
Then, by ($\ast$),
\begin{equation*}
	{\sup}_{\sched} \ProbabilityMDP<\MDP, s><\sched>[\lozenge T] \cdot \ExpectationMDP<\MDP, s><\sched>[\pathreward \mid \lozenge T] = \ExpectationMDP<\MDP, s><\max>[\pathreward].\qedhere
\end{equation*}
\end {proof}

\schedulercomplexity*

\begin{proof}
If $\ExpectationMDP<\MDP, \initialstate><\max>[\pathreward]<\infty$, we can extract an MD-scheduler from the solution $x$ of the linear program: For all states with value $0$ in the solution, the chosen action does not matter.
(Since the values of the LP are correct, $\ExpectationMDP<\MDP, \initialstate><\sched>[\pathreward] \leq \ExpectationMDP<\MDP, \initialstate><\max>[\pathreward]$ and we cannot obtain a value better than $0$.)
For all remaining states $s\in \States\setminus T$, we choose an action $\alpha\in \Actions(s)$ such that equality holds in the corresponding constraint, i.e.\ such that $x_s =  \reward(s) \cdot {\sum}_{s' \in \States} \mdptransitions(s,\alpha,s') \cdot x_{s'}$.
As such, the values $x_s$ for states with value other than $0$ form a non-negative solution to \cref{eq:transient} in \cref{sec:MC} for the Markov chain induced by $\sched$.
In other words, the resulting scheduler achieves value $x_{\initialstate}$.
Further, the use of actions $\mathit{stay}$ and $\mathit{center}$ as well as $\tau_{t,\beta}$ introduced by the adapted spider construction can be mapped back to MD-behavior in the original MDP:
If $\mathit{stay}$, the MD-scheduler realizing the respective BSCC stays in the BSCC forever.
If $\mathit{center}$ as well as $\tau_{t,\beta}$ are used, an MD-scheduler can traverse the BSCC until $t$ is reached and then use action $\beta$ there.
This can be applied iteratively to all BSCCs to which the spider construction had been applied (reversing the order in which the construction was applied to the BSCCs).

If $\ExpectationMDP<\MDP, \initialstate><\max>[\pathreward] = \infty$ due to a gambling max-BSCC or BSCC with positive expected logarithmic mean payoff, an MD-scheduler can realize this BSCC as shown in \Cref{lem:gambling_limsup}.

Finally, it is possible that $\ExpectationMDP<\MDP, \initialstate><\max>[\pathreward]=\infty$  due to the transient behavior, i.e.\ not due to some gambling max-BSCC or BSCC with positive expected logarithmic mean payoff.
So, we can assume \Cref{ass:negativeMP} and conclude that \ref{eq:mdp_lp} is not feasible in this case.
Suppose now towards a contradiction that every MD-scheduler $\sched$ satisfies $\ExpectationMDP<\MDP, \initialstate><\sched>[\pathreward]<\infty$. 
For each scheduler $\sched$, let $x^\sched = (x^\sched_s)_{s\in \States} = (\ExpectationMDP<\MDP, s><{\sched}>[\pathreward])_{s\in \States}$ be the vector of expected values from each state. 
Now, pick a scheduler $\tsched$ that always chooses locally optimal actions $\alpha\in \Actions(s)$ in all states $s\in \States\setminus T$, i.e.\ actions with
\begin{multline*}
	\reward(s) \cdot {\sum}_{s' \in \States} \mdptransitions(s,\alpha,s') \cdot x^{\tsched}_{s'} =                               \\
	{\max}_{\beta\in \Actions(s)}  \reward(s) \cdot {\sum}_{s' \in \States} \mdptransitions(s,\alpha,s') \cdot x^{\tsched}_{s'}.
\end{multline*}
Such a scheduler can be found by iteratively replacing locally non-optimal actions.
Now, $x^{\tsched}$ is a (finite) solution to \ref{eq:mdp_lp}, yielding a contradiction.

For $\pathrewardinf$, when $\ExpectationMDP<\MDP, \initialstate><\max>[\pathrewardinf]<\infty$ the optimal MD-scheduler can be found analogous to the above. 
\end{proof}

\thresholdproblem*

\begin{proof}
	A non-deterministic polynomial time algorithm simply guesses an MD-scheduler and checks whether the value in the resulting Markov chain exceeds the threshold.
	This is doable in polynomial time with a CSRI-oracle as shown in \Cref{sec:MC}.
\end{proof}

\end{document}